\tikzset{
	comp/.style={align=center, shape=circle, draw=black, thick, minimum width=1cm, minimum height=0.5cm, initial text = {}},
	every edge/.style= {->, draw = black, >=stealth, thick, shorten >= 2pt},
	 mat/.style={matrix of nodes, inner sep=0pt,
		nodes={text depth=0.8ex, text height=1em, minimum width=6ex,
			inner ysep=1pt, inner xsep=4pt, outer sep=0pt, anchor=west,rectangle,draw=black,align=center},
		column sep=-\pgflinewidth,
		row sep= -\pgflinewidth}
}
\def\BibTeX{{\rm B\kern-.05em{\sc i\kern-.025em b}\kern-.08em
    T\kern-.1667em\lower.7ex\hbox{E}\kern-.125emX}}
\begin{document}

\title{Canonical Representations of \\$k$-Safety Hyperproperties
  \thanks{This work was partially supported by the German Research Foundation (DFG) as part of the Collaborative Research Center ``Methods and Tools for Understanding and Controlling Privacy'' (CRC 1223) and the Collaborative Research Center ``Foundations of Perspicuous Software Systems'' (TRR 248, 389792660), and by the European Research Council (ERC) Grant OSARES (No. 683300).}}

\author{\IEEEauthorblockN{Bernd Finkbeiner}
\IEEEauthorblockA{\textit{Reactive Systems Group} \\
\textit{Saarland University}\\
Saarbr\"ucken, Germany \\
finkbeiner@react.uni-saarland.de}
\and
\IEEEauthorblockN{Lennart Haas}
\IEEEauthorblockA{\textit{Graduate School of Computer Science} \\
\textit{Saarland University}\\
Saarbr\"ucken, Germany \\
lennart.haas@stud.uni-saarland.de}
\and
\IEEEauthorblockN{Hazem Torfah}
\IEEEauthorblockA{\textit{Reactive Systems Group} \\
\textit{Saarland University}\\
Saarbr\"ucken, Germany \\
torfah@react.uni-saarland.de}
}
\maketitle

\begin{abstract}
Hyperproperties elevate the traditional view of trace properties form sets of traces to sets of sets of traces and provide a formalism for expressing information-flow policies. 
For trace properties, algorithms for verification, monitoring, and synthesis are typically based on a representation of the properties as omega-automata. For hyperproperties, a similar, canonical automata-theoretic representation is, so far, missing. This is a serious obstacle for the development of algorithms, because basic constructions, such as learning algorithms, cannot be applied.

In this paper, we present a canonical representation for the widely used class of regular $k$-safety hyperproperties, which includes important polices such as noninterference.  We show that a  regular $k$-safety hyperproperty $\safety$ can be represented by a finite automaton, where each word accepted by the automaton represents a violation of~$\safety$.  
The representation provides an automata-theoretic approach to regular $k$-safety hyperproperties and allows us to compare regular $k$-safety hyperproperties, simplify them, and learn such hyperproperties. We investigate the problem of constructing automata for  regular $k$-safety hyperproperties  in general and from formulas in \hyperltl, and provide complexity bounds for the different translations. We also present a learning algorithm for regular $k$-safety hyperproperties based on the L$^*$ learning algorithm for deterministic finite automata. 

\end{abstract}

\begin{IEEEkeywords}
Hyperproperties,
Automata,
Learning,
Information-flow control.
\end{IEEEkeywords}

\section{Introduction}

Hyperproperties~\cite{clarkson_et_al:hyperproperties} generalize traces properties to sets of sets
of traces.  Famous examples of hyperproperties that cannot be expressed as trace properties are information-flow policies, such as noninterference, because they relate
multiple runs of a system: a violation of a information-flow policy
can therefore only be detected by looking at trace sets with more than one trace.

Many verification and analysis techniques for trace properties are,
nowadays, based on the \emph{automata-theoretic
approach}~\cite{Vardi87verificationof}, whereby the property is
translated into an equivalent automaton and then processed by standard
operations on automata.
For hyperproperties, there is, so far, no automata-theoretic
foundation. This means that algorithms for hyperproperties cannot
be based directly on automata transformations.

One might argue that the lack of an automata representation is not a big issue, because many
verification problems, such as model checking against $k$-safety 
hyperproperties, can be reduced, via a self-composition of the system under verification, to
standard trace-based model checking against a trace property.
However, there are important algorithmic approaches that do not
translate this easily. A prime example are learning algorithms like
Dana Angluin's L$^*$
algorithm~\cite{angluin_learning_regular_sets}. Learning is a fundamental building block for compositional
verification~\cite{cobleigh_et_al:learning_assumptions_for_compositional_verification}, synthesis~\cite{skeletons}, and for 
mining specifications of malicious behavior~\cite{vanLamsweerde:1998:IDR:297569.297578,Fern:2004:LDC:3037008.3037033,Cobleigh:2006:UGC:1181775.1181801}. Generally, the advantage of learning algorithmis like L$^*$
compared to other construction methods is that the number of queries
the learner needs to pose to the teacher is determined by the size of
the smallest deterministic automaton for the target language. Usually,
this is significantly smaller than the intermediate automata
that
occur in a direct construction.

In this paper, we develop an automata representation for the class of regular $k$-safety hyperproperties.  The $k$-safety hyperproperties are those
hyperproperties where every set of traces that violates the
hyperproperty contains a set of at most $k$ bad trace prefixes, such
that every extension of the bad prefixes also violates the
hyperproperty.  We represent a $k$-safety hyperproperty using a bad-prefix automaton, a
finite-word automaton that recognizes the bad prefixes as finite words
over an alphabet consisting of $k$-tuples, where each word in the
language is interpreted as a set of (at most) $k$ traces.  A
$k$-safety hyperproperty may, in principle, have many different
representations as such a bad prefix language. Consider, for example, the $2$-safety hyperproperty given by the \hyperltl formula $\varphi=\forall \pi \forall \pi'.~\globally (a_\pi \rightarrow a_{\pi'})$ over the set of atomic propositions $\{a\}$, which specifies for each pair of traces $\pi, \pi'$, that whenever $a$ holds on $\pi$ it also holds on $\pi'$. A bad prefix for $\varphi$ is, for example, the set of finite traces $\{t,t'\}$ where $t = \{a\}\{a\}$ and $t'=\{a\}\{\}$. A tuple representation of $\{t,t'\}$ is the sequence $(\{a\},\{a\})(\{a\},\{\})$. Since the set defines no order on $t$ and $t'$, another representation of the bad prefix is the sequence $(\{a\},\{a\})(\{\},\{a\})$.  
  
Just as for bad prefixes for trace properties, the bad prefixes may or may not be
minimal; additionally, any ordering of traces in a trace set will lead
to a different tuple representation. Using the terminology for
trace properties~\cite{kupferman_et_al:model_checking_safety_properties}, we define a bad-prefix automaton as \emph{tight} if it accepts all  bad prefixes; additionally, we say the automaton is
\emph{permutation-complete} if it is closed under permutations of the
tuples. Minimal deterministic bad-prefix automata that are both tight
and permutation-complete provide a canonical representation for
$k$-safety hyperproperties. We provide algorithms for constructing permutation-complete bad-prefix automata for regular $k$-safety hyperproperties starting from representations in \hyperltl, nondeterministic bad-prefix automata and deterministic bad-prefix automata.

Based on this automaton representation, we present the first
\emph{learning algorithm for hyperproperties}.  Our algorithm learns a minimal deterministic tight permutation-complete  bad-prefix automaton for
some unknown regular $k$-safety regular hyperproperty and an unknown minimal $k$.

The remainder of the paper is structured as follows.  We give
background on hyperproperties and automata in
Section~\ref{sec:Background}.  Section~\ref{sec:automata} introduces
automata for $k$-safety hyperproperties and establishes basic facts
about tight and permutation-complete bad-prefix automata.
In Section~\ref{sec:learning} we present a learning framework for learning $k$-safety regular hyperproperties and a realization of the framework for \hyperltl in Section~\ref{sec:LearningHyperLTL}. With Section~\ref{sec:learnability} we conclude with some decidability results on the learnability of $k$-safety-hyperproperties.

\section{Background}
\label{sec:Background}
\subsection{Hyperproperties.} 
A \emph{trace property} $\T$ over an alphabet $\Sigma$ is a set of infinite traces from $\Sigma^\omega$. A trace $t \in \Sigma^\omega$ satisfies the property $\T$ if $t \in T$.  The set of all trace properties over the alphabet $\Sigma$ is denoted by $\mc{P}(\Sigma^{\omega})$.
 
 A \emph{hyperproperty}  over an alphabet $\Sigma$ is a set $\H  \subseteq \mc{P}(\Sigma^\omega)$ of sets of infinite traces over $\Sigma$~\cite{clarkson_et_al:hyperproperties}. 
A set of infinite traces $\T \subseteq \Sigma^\omega$  satisfies a hyperproperty $\H$ if $\T \in \H$. 

\subsection{\hyperltl: A temporal logic for  hyperproperties.}  
Let $\mathcal{V}$ be an infinite supply of trace variables and let $\AP$ be a set of atomic propositions.  The syntax of HyperLTL is given by the following grammar:
\begin{align*}
	\psi~&::=~\exists \pi.\;\psi~~|~~\forall\pi.\;\psi~~|~~\varphi \\
	\varphi~&::=~a_{\pi}~~|~~\neg \varphi~~|~~\varphi \vee \varphi~~|~~\lnext \varphi~~|~~\varphi\, \until \varphi 
\end{align*}
where $a\in \AP$ is an atomic proposition and $\pi \in \mathcal V$ is a trace variable. Note that atomic propositions are indexed by trace variables.
The quantification over traces makes it possible to express properties like ``on all traces $\psi$ must hold'', which is expressed by $\forall \pi.~\psi$. Dually, one can express that ``there exists a trace such that $\psi$ holds'', which is denoted by $\exists \pi.~\psi$. 
The temporal operators are defined as for \ltl. The next operator $\LTLcircle \psi$ states that the next step along a trace must satisfy $\psi$. The until operator $\psi_1 \LTLuntil \psi_2$ states that $\psi_1$ must hold along a trace until $\psi_2$ holds. 
We also use the derived temporal operators $\LTLdiamond \psi$ eventually $\psi$ holds, $\LTLsquare \psi$ the formula $\psi$ holds on all trace positions, and $\psi_1 \LTLrelease \psi_2$, the release operator, the dual to $\LTLuntil$, that states that $\psi_2$ may not hold only after $\psi_1$ has been fulfilled otherwise $\psi_2$ must hold forever.

We abbreviate the formula $\bigwedge_{x\in X} (x_\pi \leftrightarrow x_{\pi'})$, expressing that the traces $\pi$ and $\pi'$ are equal with respect to a set of atomic propositions $X \subseteq \AP$  by $\pi =_X \pi'$.


\begin{example}
\label{exmpl:noninterference}
	The following \hyperltl formula defines the security policy of reactive noninterference. Let $\AP = I \cup O$, where  $I$ and $O$ are sets of low-security inputs and low-security outputs, respectively: 
	$$\forall \pi. \forall \pi'.~(\pi \not =_I \pi') \release\, (\pi =_O \pi')$$
	The formula states that, for every pair of traces, as long as there is no difference in the observed inputs, no difference should be observed in the outputs.
\end{example}

Let $\T$ be a set of traces of some alphabet $2^\AP$ for some set of atomic propositions \AP. Formally, the semantics of HyperLTL formulas is given with respect to a \emph{trace assignment} $\Pi$ from $\mathcal{V}$ to $\T$, i.e., a partial function mapping trace variables to actual traces. $\Pi[\pi \mapsto t]$ denotes that $\pi$ is mapped to $t$, with everything else mapped according to $\Pi$. For a trace $t \in \T$, let $t[i,\infty]$ denote the suffix of $t$ starting at position $i$. With $\Pi[i,\infty]$ we denote the trace assignment that is equal to $\Pi(\pi)[i,\infty]$ for all~$\pi$.
\begin{align*}
	&\Pi \models_T~\exists \pi. \psi &&\text{iff}\hspace{5ex} \exists~t \in T~:~ \Pi[\pi \mapsto t] \models_T \psi \\
	&\Pi \models_T~\forall \pi. \psi &&\text{iff}\hspace{5ex} \forall~t \in T~:~ \Pi[\pi \mapsto t] \models_T \psi \\
	&\Pi \models_T~a_{\pi} &&\text{iff}\hspace{5ex} a \in \Pi(\pi)[0] \\
	&\Pi \models_T~\neg \psi &&\text{iff}\hspace{5ex} \Pi \not \models_T \psi \\
	&\Pi \models_T~\psi_1 \vee \psi_2 &&\text{iff}\hspace{5ex} \Pi \models_T \psi_1~\text{or}~\Pi \models_T \psi_2 \\
	&\Pi \models_T~\lnext \psi &&\text{iff}\hspace{5ex} \Pi[1,\infty] \models_T \psi \\
	&\Pi \models_T~\psi_1 \until \psi_2 &&\text{iff}\hspace{5ex} \exists~i \geq 0 : \Pi[i,\infty] \models_T \psi_2 \\
	& &&\hspace{7ex} \wedge \forall~0 \leq j < i.~\Pi[j,\infty] \models_T \psi_1
\end{align*}
We say a set of traces $T$ \emph{satisfies} a HyperLTL formula $\varphi$ if $\Pi \models_T \varphi$, where $\Pi$ is the empty trace assignment.

We call a \hyperltl formula $\varphi$ syntactically-safe if it is of the form $\varphi =\forall^*\pi.\psi$ and $\psi$ is a syntactically-safe LTL formula, i.e., an LTL formula where the only temporal operators are $\lnext$ and $\release$.

\subsection{Automata.} A nondeterministic finite automaton (NFA) is defined as a tuple $\A = (Q,\Sigma, q_0, F, \delta)$, where $Q$ denotes a finite set of states, $\Sigma$ denotes a finite alphabet, $q_0$ denotes a designated initial state, $F\subseteq Q$ denotes the set of accepting states, and $\delta: Q \times \Sigma \rightarrow \mc{P}(Q)$ denotes the transition relation that maps a state and a letter to the set of successor states. 
A run in a $\A$ on a finite word $w = w_0 \dots w_n \in \Sigma^*$ is a sequence of states $ r = q_0\dots q_{n+1} \in Q^*$ with $q_{i+1} \in \delta(q_i,w_i)$ for all $0 \le i \le n$. The run $r$ is accepting if $q_{n+1} \in F$. The set of all accepted words by an automaton $\A$ is called its language and is denoted by $\L(\A)$. The size of an automaton is the size of its set of states $Q$ and is denoted by $|\A|$.

Deterministic finite automata (DFA) are a special case of NFAs, where $\abs{\delta(q,a)} \le 1$ for all $q \in Q$ and $a \in \Sigma$. The transition relation of a deterministic automaton can be given as a function $\delta: Q \times \Sigma \rightarrow Q$. 

A B\"uchi automaton $\mathcal B = (Q,\Sigma, q_0, F, \Delta)$ is an automaton over infinite words. A run of $\mathcal B$ on an infinite word $w = w_1w_2 \dots \in \Sigma^{\omega}$  is an infinite sequence $ r = q_0q_1\dots \in Q^{\omega}$ with $q_{i+1} \in \delta(q_i,w_i)$ for all $i \in \NN$. A run $r$ is accepting if there exist infinitely many $i \in \NN$ such that $q_i \in F$. 
A B\"uchi automaton $\A = (Q,\Sigma, q_0, F, \Delta)$ is called safety automaton if $Q = F$, i.e., every run on a safety automaton is accepted. In the rest of the paper we omit the set $F$ from the tuple representation of safety automata. 

\subsection{Safety Languages.} A finite word $w = w_1 \dots w_i \in \Sigma^*$ is called a bad prefix for a language $L\subseteq\Sigma^{\omega}$, if every infinite word $v \in \Sigma^{\omega}$ with prefix $w$ is not in the language $L$. A language $L \subseteq \Sigma^{\omega}$ is called a safety language if every $w\not \in L$ has  a  bad prefix for $L$. We denote the set of all bad prefixes for a language $\L$ by $\bad(L)$. We say $X \subseteq \bad(L)$ is a trap for $L$, if for every $w \not \in L$, there exists a prefix of $w$ in $X$ and denote the set of all traps by $\textsf{Trap}(L)$.

For every $\omega$-regular safety language $L$, 
a finite automaton $\A$ that accepts the bad prefixes of $L$ is called a bad-prefix automaton for $L$. We say that $\A$ is tight if $L(\A) = \bad(L)$ and fine if there exists some $X \in \textsf{Trap}(L)$ and $L(\A) = X$. \\


\subsection{Notations.} For a sequence $t = \alpha_1\alpha_2\dots$ and $i\leq j \in \NN$, $t[i]=\alpha_i$, $t[i,j] = \alpha_i \dots \alpha_j$. For $t \in \Sigma^{\omega}$, $t[i, \infty] = \alpha_i\alpha_{i+1}\dots$. 

For  $t \in \Sigma^*$ and $\tau \in \Sigma^* \cup \Sigma^\omega$, $t$ is a prefix of $\tau$ denoted by $t \le \tau$ if and only if $\abs{t} \le \abs{\tau} \wedge \forall i \le \abs{t}. \ t[i] = \tau[i]$.

\section{Automata for $k$-Safety-Hyperproperties}\label{sec:HyperAutomata}
\label{sec:automata}
\subsection{Representations of $k$-Safety-Hyperproperties}

\begin{figure*}[ht]
\begin{align*}
& \safety = \forall \pi.\forall\pi'.\  ( {\pi}\not =_I {\pi'})\release  (\pi = _O {\pi'})
\\ &\bad(\safety) = \{T \subseteq \Sigma^* \mid \exists t,t' \in T.~\exists j.~t[...j]_I = t'[...j]_I \wedge t[j]_O \not = t'[j]_O   \}
\\  &\bad(\safety,2) = \{\{t,t'\} \subseteq \Sigma^*  \mid \exists j.~t[...j]_I = t'[...j]_I \wedge t[j]_O \not = t'[j]_O  \}
\\ &\text{Rep}_1 = \{ (\alpha_0,\alpha'_0) \dots(\alpha_m,\alpha'_m) \in (\Sigma^2)^* \mid  \forall j. (\alpha_j)_I = (\alpha'_j)_I ~\wedge~ \exists i. \exists o \in O.~ o \in \alpha_i \wedge o \not \in \alpha'_i \}
\\ &\text{Rep}_2 = \{  (\alpha_0,\alpha'_0) \dots(\alpha_m,\alpha'_m) \in (\Sigma^2)^* \mid  \forall j. (\alpha_j)_I = (\alpha'_j)_I ~\wedge~ \exists i. \exists o \in O.~ o \not\in \alpha_i \wedge o  \in \alpha'_i  \}
\\ &\per{\safety,2} = \{ (\alpha_0,\alpha'_0) \dots(\alpha_m,\alpha'_m) \in (\Sigma^2)^* \mid  \forall j. (\alpha_j)_I = (\alpha'_j)_I ~\wedge~ \exists i. \exists o \in O.~ o \in \alpha_i \leftrightarrow o \not \in \alpha'_i  \}
\end{align*}
\caption{A 2-safety-hyperproperty given by a \hyperltl formula $\safety$. The formula $\safety$ defines the information flow policy of reactive noninterference.  The sets $\bad(\safety)$, $\bad(\safety,2)$, $\text{Rep}_1$, $\text{Rep}_2$, and $\per{\safety,2}$ define the sets of bad-prefixes, 2-bad-prefixes, two different 2-representations, and the set of all 2-representation of $\safety$, respectively. The set $\Sigma $ is defined as $\Sigma = 2^\AP$ for a set of atomic propositions $\AP= O \cup I $.}
\label{fig:representations}
\end{figure*}

The definition of safety can be generalized to hyperproperties by generalizing the definition of bad-prefixes from a finite trace to a finite set of finite traces \cite{clarkson_et_al:hyperproperties}.
For a set of finite traces $T \subseteq \Sigma^*$ and a set of infinite traces $T' \subseteq \Sigma^{\omega}$, we say that $T$ is a prefix of $T'$, denoted by $T \le T'$, if and only if $\forall t \in T. \exists t' \in T'.\ t \le t'$.
A hyperproperty $\safety$ over $\Sigma$ is \emph{hypersafety} if and only if
\begin{align*}
\forall T' \subseteq \Sigma^\omega.\ (T' \not\in \safety \Rightarrow &\exists T \subseteq \Sigma^*.\ ( T \le T' \wedge
\\ &\forall \widetilde{T} \subseteq \Sigma^\omega. \ ( T \le \widetilde{T} \Rightarrow \widetilde{T} \not \in \safety)))
\end{align*}
We call $T$ a \emph{bad-prefix} for the hyperproperty $\safety$. We denote  the set of bad-prefixes for a hypersafety property $\safety$ by $\bad(\safety) = \{	 T \subseteq \Sigma^* \mid \forall T' \subseteq\Sigma^\omega.\ (T \leq  T' \Rightarrow T' \not \in \safety) \}$.  We call a bad prefix $T$ for $\safety$ minimal, if and only if, there exists no $T' < T $ that is also a bad prefix for $\safety$. 

\begin{definition}[$k$-safety hyperproperty]
	For any $k' \in \nats$, let $\bad(\safety,k') = \{ T \in \bad(\safety) \mid \abs{T} \le k' \}$ . We call an element of $\bad(\safety,k')$ a \emph{$k'$-bad-prefix} for $\safety$.
 A safety hyperproperty $\safety$ is a \emph{$k$-safety hyperproperty}, if every set $T'\not\in\safety$ has a $k$-bad-prefix. 
 \end{definition}

In the next section, we define finite automata for $k$-safety hyperproperties by defining automata that represent their sets of bad-prefixes. Each finite bad prefix of a safety-hyperproperty can be represented by a finite word as follows. 

\begin{definition}[Representations of $k$-safety hyperproperties]
	For a sequence $\sigma = \vv{v}_0\vv{v}_1 \vv{v}_2\dots \vv{v}_m \in (\Sigma^k)^*$, let  $\unzip$ be the mapping defined as $\unzip(\sigma) = \{t_i \in \Sigma^*\mid 1 \leq i \leq k, \forall 0 \leq j\leq m .~ t_i[j] = \vv{v}_j[i]\}$.  We call $\sigma \in (\Sigma^k)^*$ a \emph{representation} of $T \subseteq \Sigma^*$ if  \unzip$(\sigma) = T$.
	
	For a $k$-safety-hyperproperty $\safety$,
	 a language $L \subseteq (\Sigma^{k'})^*$ is called a \emph{representation} of $\safety$ for some $k'\in \nats$, when:
	 for all $T \subseteq \Sigma^\omega$, $T\not\in \safety$,  if and only if, there exists $\sigma \in L$, such that, $\unzip(\sigma) \subseteq T$ and $\unzip(\sigma) \in \bad(\safety)$.  We call $k'$ the \emph{arity} of the representation and further call $L$ a \emph{$k'$-representation} of $\safety$. 
\end{definition}

	%

%

We extend the definition of \unzip\ to languages. For a language $L \subseteq (\Sigma^k)^*$, $\unzip(L) = \{\unzip(\sigma) \mid \sigma \in L\}$. 

Notice that a $k$-safety-hyperproperty has several  representations of different arities. It also has several representations of the same arity (by permuting the order on the traces). We denote the set that defines the union of all representations of a $k$-safety hyperproperty $\safety$ of arity $k'$ by $\per{\safety,k'}$.

\begin{example}
	The security policy of reactive noninterference given in Example~\ref{exmpl:noninterference} is an example of a 2-safety hyperproperty. In Figure~\ref{fig:representations} the policy is given by the \hyperltl formula $\safety$. 
	A violation of $\safety$ along two traces is observed, if up to some position, the traces share the  same input sequence and differ in the output values at this position. The set of bad-prefixes for $\safety$ is given by the set $\bad(\safety)$. 
	To check whether there is a violation of $\safety$ it is sufficient to find two traces that violate $\safety$, i.e.,  any set of traces that violates $\safety$ has a  bad prefix of size two. The set $\bad(\safety,2)$ gives all the bad-prefixes of size two.
	 Two sets of 2-representations of these bad-prefixes are given by the sets $\text{Rep}_1$ and $\text{Rep}_2$\footnote{The sets $\text{Rep}_1$ and $\text{Rep}_2$ are not the only sets with 2-representations of the bad-prefixes of $\safety$. For $\safety$ there is an infinite number of distinct 2-representations.}. To understand the difference between the representations in $\text{Rep}_1$ and $\text{Rep}_2$ look at the following two traces:
	  Assume w.l.o.g. that $I=\{i\}$ and $O=\{o\}$ and let $t=\{i,o\}\{i,o\}\{i,o\} \dots $ and $t'=\{i,o\}\{i,o\}\{i\} \dots $ be two infinite traces over $2^{O \cup I}$. The set $\{t,t'\}$ violates $\safety$ with the bad prefix $T=\{~\{i,o\}\{i,o\}\{i,o\} ~,~ \{i,o\}\{i,o\}\{i\}~\}$. The set $T$ has two 2-representations: the sequence  $\sigma_1=(\{i,o\},\{i,o\})(\{i,o\},\{i,o\})(\{i,o\},\{i\})$, which is in the set $\text{Rep}_1$ but not in  $\text{Rep}_2$, and another representation is $\sigma_2 = (\{i,o\},\{i,o\})(\{i,o\},\{i,o\})(\{i\},\{i,o\})$ which belongs to   $\text{Rep}_2$ but not to  $\text{Rep}_1$. Both $\sigma_1$ and $\sigma_2$ belong, however, to the set $\per{\safety,2}$, which contains all representations of 2-bad-prefixes of $\safety$. 
	
\end{example}
In general, for any $k$-safety hyperproperty $\safety$, if a sequence $\sigma\in\per{\safety, k'}$ for any $k'\in \nats$, then so is any permutation of~$\sigma$.

\begin{theorem}
	For every $k$-safety hyperproperty $\safety$, and for $k'\ge k$, there is a $k'$-representation of $\safety$.  
\end{theorem}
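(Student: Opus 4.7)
The plan is to take $L := \{\sigma \in (\Sigma^{k'})^* \mid \unzip(\sigma) \in \bad(\safety)\}$ as the candidate $k'$-representation and verify the two directions of the representation condition. The $(\Leftarrow)$ direction is immediate: if $\sigma \in L$ and $\unzip(\sigma) \le T$, then $\unzip(\sigma)$ is a bad prefix that is a prefix of $T$, and hence $T \notin \safety$ by definition of bad prefix.

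For $(\Rightarrow)$, suppose $T \notin \safety$. Using the fact that $\safety$ is $k$-safety, I would first obtain a bad prefix $T_b \in \bad(\safety,k)$ with $T_b \le T$ and $|T_b| \le k \le k'$. Turning $T_b$ into a sequence in $(\Sigma^{k'})^*$ requires reconciling two mismatches: the traces in $T_b$ may have different lengths, while a sequence of $k'$-tuples forces all unzipped traces to share a common length; and $|T_b|$ may be strictly less than $k'$, whereas $\unzip$ produces up to $k'$ components. To fix the lengths, for each $t \in T_b$ I pick an infinite witness $t^{*} \in T$ with $t \le t^{*}$ (which exists since $T_b \le T$) and extend $t$ along $t^{*}$ up to the common maximum length $m$ of traces in $T_b$. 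The resulting set $T_b^{\mathrm{pad}}$ satisfies $T_b^{\mathrm{pad}} \le T$ by construction, has cardinality at most $|T_b| \le k'$, and remains a bad prefix: since $T_b \le T_b^{\mathrm{pad}}$, any $\widetilde T \ge T_b^{\mathrm{pad}}$ also satisfies $\widetilde T \ge T_b$ by transitivity of $\le$, so $\widetilde T \notin \safety$. To fix the arity, write $T_b^{\mathrm{pad}} = \{s_1,\ldots,s_n\}$ with $n \le k'$ and $|s_i| = m$, and define $\sigma \in (\Sigma^{k'})^*$ of length $m$ by $\sigma[j][i] := s_i[j]$ for $1 \le i \le n$ and $\sigma[j][i] := s_1[j]$ for $n < i \le k'$. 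Then $\unzip(\sigma) = T_b^{\mathrm{pad}} \in \bad(\safety)$, so $\sigma \in L$, and $\unzip(\sigma) \le T$, completing the implication.

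The main obstacle is the length-alignment step: the padding must yield equal-length traces while keeping the set both a bad prefix \emph{and} below $T$. The key observation is the upward-closure of $\bad(\safety)$ under $\le$, which allows refining $T_b$ along any infinite extension in $T$ without losing the bad-prefix property. The arity-padding step is then routine bookkeeping over duplicated coordinates, and the two directions together yield that $L$ is a $k'$-representation of $\safety$.
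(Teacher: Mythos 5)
Your proof is correct, and it is somewhat more self-contained than the paper's. The paper's argument simply asserts that a $k$-representation exists (``clearly, every $k$-safety hyperproperty has a representation of arity $k$'') and then lifts it to arity $k'$ by duplicating the $k$-th coordinate of every tuple, observing that $\unzip$ collapses the duplicates. You instead exhibit the representation explicitly as the language of all $k'$-tuple sequences that unzip to a bad prefix, and verify both directions of the definition from scratch. The two arguments share the same key trick for the arity mismatch (repeating a coordinate, which $\unzip$ absorbs), but you additionally address a point the paper passes over: the traces in a $k$-bad-prefix may have different lengths, whereas $\unzip$ of a tuple sequence yields traces of a common length, so one must pad the shorter traces along their infinite witnesses in $T$ and check that the padded set is still a bad prefix below $T$ (via upward-closure of $\bad(\safety)$ under the prefix order). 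That length-alignment step is exactly where the paper's ``clearly'' hides its content, so your version buys a complete justification of the base case at the cost of a longer argument; the paper's version buys brevity by reducing to an unproved but plausible existence claim.
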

\begin{proof}Clearly, every $k$-safety hyperproperty has a representation of arity $k$. 
	Let $L$ be a $k$-representation for $\safety$. Define $L'$ such that each $\sigma' \in L'$ is of the form $\sigma' =(\alpha_0^1,\dots,\alpha_0^k,\dots,\alpha_0^{k'})(\alpha_1^1, \dots, \alpha^k_1,\dots,\alpha_1^{k'})\dots \in (\Sigma^{k'})^*$, where $ (\alpha^1_0,\dots, \alpha_0^k)(\alpha_1^1, \dots, \alpha^k_1)\dots \in L$, and for all $i \in \nats$ and for all $k<j\leq k'$ we have $\alpha_i^j = \alpha_i^k$. Let the set $\unzip(\sigma')= \{t_1,\dots,t_k,\dots,t_{k'} \}$. Clearly, for $k< j\leq k'$, we have $t_j= t_k$. Thus, $\unzip(L') = \unzip(L)$, which makes $L'$ a $k'$-representation of $\safety$. 
\end{proof}

In the rest of the paper, the length of a bad prefix $T$ is the length of the longest trace in $T$. The size of a bad prefix $T$ is the size $|T|$. 	

\subsection{Bad-prefix automata for $k$-safety hyperproperties}
We now develop a canonical representation for $k$-safety hyperproperties. We start by defining bad-prefix automata for $k$-safety hyperproperties. At the end of the section we show that \emph{minimal, deterministic, tight and permutation-complete} bad-prefix automata  give a canonical representation for $k$-safety hyperproperties.   
\begin{definition}[Regular $k$-safety hyperproperties]
	A $k$-safety hyperproperty $\safety$ is called regular if a representation of  $\safety$ is a regular language. 
\end{definition}

If a  $k$-safety hyperproperty $\safety$ is regular, we can build an automaton that recognizes one of its representations for some arity $k'$. We call such an automaton a \emph{$k'$-bad-prefix automaton} for $\safety$. An automaton is a bad-prefix automaton for $\safety$, if it is a $k'$-bad-prefix automaton for some arity $k'\in \nats$. In the following, we show that we can distinguish different types of bad-prefix automata for $k$-safety-hyperproprties. The distinction is based on the traditional notions of \emph{tightness} and \emph{fineness} for bad-prefix automata for regular properties \cite{kupferman_et_al:model_checking_safety_properties}, and the novel notion of \emph{permutation-completeness} that we define later in this section. 

A \emph{tight} bad-prefix automaton for an $\omega$-regular property $\T$ accepts all bad-prefixes of $\T$. The language of a \emph{fine} bad-prefix automaton for $\T$ includes at least one bad prefix for each word $\sigma \not \in \T$. 
Following this tradition we can also make a similar distinction for bad-prefix automata for $k$-safety-hyperproperty~$\safety$. 
 
 \begin{definition}[Tight and fine $k$-bad-prefix automata]
 Let $\A$  be a $k'$-bad-prefix automaton for a $k$-safety-hyperproperty $\safety$ for some $k,k' \in \nats$. We call $\A$ \emph{tight} if and only if $A$ accepts a representation for each bad prefixes $T$ of $\safety$ with $|T|\leq k'$. 
 
  $\A$  is called \emph{fine}  if and only if for every word  $T \not \in\safety$ it accepts a representation of at least one  bad prefix (not necessarily the minimal one) of $T$.	
 \end{definition}


%



 Kupfermann and Vardi showed how to construct tight bad-prefix automata for safety-properties \cite{kupferman_et_al:model_checking_safety_properties}. 
The same constructions cannot be adapted for $k$-safety hyperproperties, due to the following reasoning. From its definition, a bad-prefix automaton~$\A$ for a $k$-safety-hyperproperty $\safety$ that is fine must,  for each set $T$ not in $\safety$, accept at least one representation of a bad prefix of $T$. 
If $\A$ is not tight then  either (1) $\A$ is not \emph{vertically tight}:accepts a representation for a bad prefix $T$,  but does not accept any representation for some $T\subset T'$ with $|T'|\leq k'$ which is also a bad prefix for $\safety$  or (2) $\A$ is not \emph{horizontally tight}: there is a representation $t'$ of a set $\{w \mid \exists w' \in T, w < w'\}$ that represents a smaller bad prefix for $\safety$, i.e., a trace in $T$  is not minimal. The latter case defines tightness according to the traditional definition as in \cite{kupferman_et_al:model_checking_safety_properties}.

 \remark{Notice that there exists no fine automaton that accepts a representation for a bad prefix $T$ that is not minimal, but accepts no representation for all bad prefixes $T'\subset T$. Assume that no representation of any $T'$ is accepted by $\A$. This means that there is word $\safety$ for which no representation of any of its bad-prefixes is accepted by $\A$,  namely the set $\widetilde{T}$, where each word in $\widetilde{T}$ is an infinite extension of a word in $T'$. This contradicts the assumption that $\A$ is a bad-prefix automaton for $\safety$. }

The next theorem how to construct a bad-prefix automaton that is horizontally tight using the construction presented in \cite{kupferman_et_al:model_checking_safety_properties}. A construction for vertical tightness is left for the theorem that follows. 
\begin{theorem}
For a $k$-safety hyperproperty $\safety$ over $\Sigma$, we can construct a tight bad-prefix automaton for $\safety$ of size:
\begin{itemize}
	\item $O(|\A|)$, when $\safety$ is represented by a deterministic bad-prefix automaton $\A$.
	\item $2^{O(|\A|)}$, when $\safety$ is represented by a  nondeterministic bad-prefix automaton $\A$.
\end{itemize}
\label{theo:constrtight}
\end{theorem}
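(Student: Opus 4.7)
The plan is to directly generalize the Kupferman--Vardi tight bad-prefix construction~\cite{kupferman_et_al:model_checking_safety_properties} from the alphabet $\Sigma$ to the representation alphabet $\Sigma^{k'}$ by treating $\A$ as an ordinary finite automaton over this enlarged alphabet. The key semantic observation that makes the lift go through is that word-level extension in $(\Sigma^{k'})^*$ coincides with coordinatewise trace extension of $\unzip(w)$: if $w u \in (\Sigma^{k'})^*$, then $\unzip(w u)$ is pointwise an extension of $\unzip(w)$. Since $\bad(\safety)$ is upward closed under such extensions, $\per{\safety,k'}$ is itself extension-closed as a language over $\Sigma^{k'}$.

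I would first normalize $\A$ so that $\L(\A)$ is explicitly extension-closed: whenever an accepting state is reached, every continuation remains accepting. This can be enforced by redirecting all outgoing edges from $F$ to a single accepting sink, which preserves both the language of bad-prefix representations and the size of $\A$ up to a constant. For the deterministic case I then compute the set $F^\star \subseteq Q$ of states from which every infinite run eventually visits $F$; equivalently, $F^\star$ is the complement of the set of states that can reach a nontrivial strongly connected component contained entirely in $Q \setminus F$. A single pass of Tarjan's algorithm on the subgraph induced by $Q\setminus F$ computes $F^\star$ in time $O(|\A|)$, and the resulting automaton $\A^\star = (Q,\Sigma^{k'},q_0,F^\star,\delta)$ has size $O(|\A|)$.

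Soundness of $\A^\star$, i.e.\ $\L(\A^\star) \subseteq \per{\safety,k'}$, follows from the key observation: if $w$ reaches $q \in F^\star$, then every infinite extension $w u \in (\Sigma^{k'})^\omega$ has a finite prefix in $\L(\A)$, so its unzip is a bad prefix, and hence every infinite trace extension of $\unzip(w)$ lies outside $\safety$. This shows $\unzip(w) \in \bad(\safety)$. For horizontal tightness, whenever $w$ is a prefix of some word in $\L(\A)$ such that $\unzip(w)$ is already a bad prefix, every infinite continuation of $w$ is an extension of that bad prefix and therefore eventually drives $\A$ into $F$, so $\delta^\ast(q_0,w) \in F^\star$ by definition of $F^\star$.

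The main obstacle lies in the nondeterministic case: the condition ``every infinite continuation from $q$ visits $F$'' is a universal property over a nondeterministic run structure and cannot be evaluated locally on $\A$. The standard remedy --- and the source of the exponential bound --- is to first apply the subset construction to obtain a DFA of size $2^{O(|\A|)}$ and then run the deterministic argument on top of it. The resulting exponential blow-up is inherited from the classical safety setting~\cite{kupferman_et_al:model_checking_safety_properties}, since ordinary safety is the special case $k'=1$ of the construction and already requires this blow-up.
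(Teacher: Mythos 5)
Your proposal takes essentially the same route as the paper: the paper's proof likewise keeps the state space and redefines the accepting set to be exactly your $F^\star$ (the states from which every infinite run eventually visits $F$), and handles the nondeterministic case by determinizing first and inheriting the $2^{O(|\A|)}$ blow-up from the subset construction. The extra details you supply --- the extension-closure normalization, the SCC-based computation of $F^\star$, and the explicit soundness and horizontal-tightness arguments via the observation that $\unzip$ turns word extension into trace extension --- only flesh out steps the paper leaves implicit.
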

\begin{proof}
The proof uses the ideas presented in \cite{fineautomata}.
	\begin{itemize}
		\item Let $\A=(Q,\Sigma^{k'}, q_0, F, \delta)$ be a deterministic bad-prefix automaton for $\safety$ for some $k'\ge k$. To construct a deterministic horizontally tight bad-prefix automaton for $\safety$ we replace the set of accepting states $F$ of $\A$ by a set $F'$ which is defined as follows:
		$$F' = \{ q\in Q \mid \forall \sigma \in Q^\omega.~q< \sigma \rightarrow \exists i\in \nats. \sigma[i]\in F \}$$
		The set $F'$ defines the set of states $q$ from which there is no infinite run in the automaton that has no accepting state. 
		\item If $\A$ is a nondeterministic bad-prefix automaton for $\safety$, we can construct an equivalent deterministic $k$-bad-prefix automaton $\A'$ of size $2^{|A|}$ and use the construction above. 
	\end{itemize}
\end{proof}

Bad-prefix automata for $k$-safety hyperproperties can also be distinguished according to the representations they accept. 
 A $k$-bad-prefix automaton $\A$ is called \emph{permutation-complete} if it accepts  all representations of every $k$-bad-prefix it accepts.

In general, the goal is to build a tight and permutation-complete bad-prefix automaton for a $k$-safety hyperproperty.  For tasks such as monitoring a system against a $k$-safety hyperproperty, such automata are of major importance.  With tight automata violations are detected as early as possible. A permutation-complete automaton does not depend on the ordering of the traces and therefore detects a violation no matter in what order the traces are observed.

In the next theorem we show how to construct a permutation-complete and tight $k$-bad-prefix automaton for a $k$-safety hyperproperty.

\begin{theorem}
For a deterministic $k$-bad-prefix automaton $\A$ of some $k$-safety hyperproperty $\safety$ over $\Sigma$, we can construct a deterministic, tight and permutation-complete $k$-bad-prefix automaton  of size $|A|^{k^{k}}$.
\label{theo:constrpermcomplete}
\end{theorem}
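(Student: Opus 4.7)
My plan is a two-step construction. First, I would invoke Theorem~\ref{theo:constrtight} to replace the given $\A$ by a deterministic tight $k$-bad-prefix automaton $\A'$. Because that tightening only modifies the set of accepting states and keeps the underlying state space intact, one has $|\A'| = |\A|$. The second step closes $\A'$ under the natural action of all functions $f : \{1,\dots,k\} \to \{1,\dots,k\}$ on $k$-tuples: for $\vv{v} \in \Sigma^k$ write $f(\vv{v}) = (\vv{v}[f(1)], \dots, \vv{v}[f(k)])$, extended letter-wise to words $f(\sigma)$. The target automaton $\A''$ is the product of copies $\A'_f$ of $\A'$, one per function $f$, where $\A'_f$ performs the transition of $\A'$ on the letter $f(\vv{v})$ whenever it reads $\vv{v}$; a product state is accepting iff at least one of its $\A'_f$-components lies in an accepting state of $\A'$. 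This construction preserves determinism and yields at most $|\A'|^{k^k} = |\A|^{k^k}$ states.

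The correctness argument pivots on the following structural lemma: if $\sigma$ and $\sigma'$ are two $k$-representations of the same set $T$, i.e.\ $\unzip(\sigma) = \unzip(\sigma') = T$, then there is a function $f$ as above with $f(\sigma) = \sigma'$. Indeed, each trace component of $\sigma'$ lies in $T$ and hence coincides with some trace component of $\sigma$, and the corresponding index choices define $f$. Crucially, $f$ must be permitted to be any function and not just a permutation, because whenever $|T| < k$ every representation repeats component traces and these repetitions need not align across different representations. This is exactly why the right bound in the exponent is $k^k$ rather than $k!$.

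From this lemma the remaining properties fall out. If $\A''$ accepts $\sigma'$, then some $\A'_f$ accepts $\sigma'$, so $\A'$ accepts $f(\sigma')$, so $\unzip(f(\sigma'))$ is a bad prefix; since $\unzip(f(\sigma')) \subseteq \unzip(\sigma')$, the superset $\unzip(\sigma')$ is also a bad prefix, establishing soundness of $\A''$ as a bad-prefix automaton for $\safety$. Tightness of $\A''$ is inherited from tightness of $\A'$ by taking $f$ to be the identity. For permutation-completeness, suppose $\A''$ accepts $\sigma'$ through $\A'_f$, and let $\sigma''$ be any other representation of $T = \unzip(\sigma')$; applying the lemma I would pick $g$ with $g(\sigma'') = f(\sigma')$, so $\A'$ accepts $g(\sigma'')$, whence $\A''$ accepts $\sigma''$ through its $\A'_g$-component. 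The main obstacle I anticipate is establishing the structural lemma cleanly for non-surjective $f$ and for bad prefixes with strictly fewer than $k$ distinct traces; a careful bookkeeping of how $\unzip$ interacts with component repetition should settle it.
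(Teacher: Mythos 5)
Your construction is essentially the paper's: the same $k^k$-fold product over all functions $\{1,\dots,k\}\to\{1,\dots,k\}$ with disjunctive acceptance, combined with the tightening of Theorem~\ref{theo:constrtight}; the paper merely applies the two steps in the opposite order (permutation-closure first, then the accepting-state adjustment for horizontal tightness), which changes nothing in the $|\A|^{k^k}$ bound. Your explicit structural lemma is a welcome addition that the paper leaves implicit; just state it in the slightly stronger form ``if every column of $\sigma'$ occurs as a column of $\sigma$, i.e.\ $\unzip(\sigma')\subseteq\unzip(\sigma)$, then $\sigma'=g(\sigma)$ for some $g$,'' since your permutation-completeness step needs a $g$ with $g(\sigma'')=f(\sigma')$ even when $f$ is non-surjective and $\unzip(f(\sigma'))$ is a proper subset of $\unzip(\sigma'')$.
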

\begin{proof}
	 Let $\A = (Q,\Sigma^k, q_0,F,\delta)$ be a deterministic $k$-bad-prefix automaton for $\safety$. We construct a permutation-complete and vertically tight automaton $\A_{\mathfrak P}$ for $\safety$ that accepts a word $\sigma \in (\Sigma^k)^*$ if any of its permutations \footnote{From now on, if not stated otherwise, we use the word permutation to mean permutation with repetition.}is accepted by $\A$. We define these permutations as follows. 
		Let $\varsigma_1, \dots, \varsigma_{k^k}:\{1,\dots,k\} \rightarrow \{1, \dots, k\}$ be pairwise different functions. 
		A permutation of a tuple $(t_1, \dots, t_k)$ with respect to one function $\varsigma_i$ for $1 \leq i \leq k^k$ is a tuple $(t_{\varsigma_i(1)}, \dots, t_{\varsigma_i(k)})$. 
		The deterministic bad-prefix automaton $\A_{\mathfrak P}$ is defined by the tuple $(Q_{\mathfrak P},\Sigma^k, q_{0,\mathfrak P}, F_{\mathfrak P}, \delta_{\mathfrak P})$, where:
			\begin{itemize}
				\item $Q_{\mathfrak P} = (Q_1\times \dots \times Q_{k^k})$ where $Q_i = \{(q,i) \mid q\in Q\}$ for $1\leq i \leq k^k$. A set of states $Q_i$ resembles a copy of the automaton $\A$ that accepts a word $\sigma$ if it is a permutation of a word $\sigma'$ accepted by $\A$ with respect to the permutation function $\varsigma_i$.  The initial state $q_{0,\mathfrak P} = ((q_0,1),\dots,(q_{0},k^k))$.
				\item A word is accepted if one of its permutations is accepted. We define the set of accepting states as $F_{\mathfrak P} = \{((q_1,1), \dots, (q_{k^k},k^k)) \mid \exists i.~q_i \in F \}$.
				\item The transition relation $\delta_{\mathfrak P}$ is defined as follows:
					$$((q_1,1),\dots,(q_{k^k},k^k)) \xrightarrow {(t_1, \dots, t_k)} ((q'_1,1),\dots,(q'_{k^k},k^k))$$ 
					when  $(q_i \xrightarrow  {(t_{\varsigma_i(1)}, \dots, t_{\varsigma_i(k)})}q'_i )\in \delta$ for all $1\leq i \leq k^k$. For each $q_i$ the successor state $q'_i$ for a letter $(t_1,\dots,t_{k})$ is determined by the the transition of its permutation $(t_{\varsigma_i(1)}, \dots, t_{\varsigma_i(k)})$ in the automaton $\A$.  
			\end{itemize} 

The automaton $\A_{\mathfrak{P}}$ is a deterministic, permutation-complete and vertically tight. If the automaton $\A_\mathfrak{P}$ is not horizontally tight, it can then be translated to on  by redefining the set $F_\mathfrak{P}$ using the construction in Theorem~\ref{theo:constrtight}. The size of $A_\mathfrak{P}$ is $|Q|^{k^k}$.
\end{proof}

\begin{corollary}
	For a nondeterministic $k$-bad-prefix automaton $\A$ of some $k$-safety hyperproperty $\safety$ over $\Sigma$, we can construct a permutation-complete and tight $k$-bad-prefix automaton  of size $2^{|A|\cdot k^{k}}$.
\end{corollary}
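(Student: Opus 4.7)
The plan is to combine the two constructions already established. Given a nondeterministic $k$-bad-prefix automaton $\A$ for $\safety$, I would first invoke the subset construction to obtain an equivalent deterministic $k$-bad-prefix automaton $\A_d$ of size at most $2^{|\A|}$. Since determinization preserves the accepted language, $\A_d$ recognizes exactly the same $k$-representation of $\safety$ as $\A$, and is therefore still a deterministic $k$-bad-prefix automaton for $\safety$ in the sense of Section~\ref{sec:automata}.

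Next, I would feed $\A_d$ into the construction of Theorem~\ref{theo:constrpermcomplete}. That theorem, applied to a deterministic $k$-bad-prefix automaton of size $n$, yields a deterministic, tight, and permutation-complete $k$-bad-prefix automaton of size $n^{k^k}$. Instantiating $n = 2^{|\A|}$ gives an automaton of size $\bigl(2^{|\A|}\bigr)^{k^k} = 2^{|\A| \cdot k^k}$, matching the claimed bound. Correctness (both permutation-completeness and tightness) follows immediately from Theorem~\ref{theo:constrpermcomplete} applied to $\A_d$, so no further argument on the semantics is needed.

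The only subtle point, and the part I would write out explicitly, is to justify that the subset construction legitimately takes us from a nondeterministic $k$-bad-prefix automaton for $\safety$ to a deterministic $k$-bad-prefix automaton for $\safety$. This is immediate from the definition of a $k'$-bad-prefix automaton as any finite automaton whose language is a $k'$-representation of $\safety$: determinization changes neither the alphabet $\Sigma^k$ nor the accepted language, so the defining property is preserved. No main obstacle arises beyond bookkeeping; the corollary is essentially the composition of Theorem~\ref{theo:constrtight} (its determinization half) with Theorem~\ref{theo:constrpermcomplete}, and the size bound $2^{|\A|\cdot k^k}$ is simply the product of the two exponents in the exponent of $2$.
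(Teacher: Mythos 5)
Your proposal is correct and follows exactly the route the paper intends: determinize $\A$ via the subset construction to a DFA of size $2^{|\A|}$, then apply Theorem~\ref{theo:constrpermcomplete} to obtain the bound $\bigl(2^{|\A|}\bigr)^{k^k} = 2^{|\A|\cdot k^k}$, which is precisely the source of the exponential blow-up the paper attributes to determinization in its remark following the corollary.
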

The exponential blow-up in the size of the automaton in the last corollary results from the translation of nondeterministic automata to deterministic automata. 

\remark{Notice that the complexity in $k$ is independent of the representation of the $k$-safety-hyperproperty. 

%

\subsection{Equivalence of $k$-bad-prefix automata}
From the last section we know that the language of bad-prefixes for a safety hyperproperty is superset-closed. Thus every $k$-safety hyperproperty is also a $k'$-safety hyperproperty for all $k \leq k'$.   This means that $\safety$ can be represented by different bad-prefix automata of different arities $k'$. In the following we show, given a  $k'$-bad-prefix automaton and a $k''$-bad-prefix automaton with $k' \leq k''$, how to check whether they are bad-prefix automata for the same $k$-safety hyperproperty~$\safety$.


\begin{definition}[Representation-equivalence of bad-prefix automata]
 Let $A_{k'}$ be a finite automaton  over $\Sigma^{k'} $, and $A_{k''}$ be  a finite automaton  over $\Sigma^{k''}$ for some alphabet $\Sigma$, where $k' \leq k''$. 
	We say that $A_{k'}$ and $A_{k''}$ are \emph{representation-equivalent}, denoted by $A_{k'} \equiv A_{k''}$ if and only if both $A_{k'}$ and $A_{k''}$ are bad-prefix automata for the same $k$-safety hyperproperty $\safety$ for some $k \leq k',k''$. 
\end{definition}

An algorithm for checking equivalence of bad-prefix automata is given in the next theorem.

\begin{theorem}
	Let $\A_k$ be a deterministic finite automata  over $\Sigma^k$, and $\A_{k'}$ a deterministic finite automaton  over $\Sigma^{k'}$ for an alphabet $\Sigma$ and $k,k' \in \nats$. Checking whether $\A_k \equiv \A_{k'}$  can be done in time $O(\mathit{poly}(|\A_k|+|\A_{k'}|))$ and in space ${2^{O(\max(\log(k),\log(k'))\cdot\max\{k,k'\})}}$.  
\label{theo:equiv}
\end{theorem}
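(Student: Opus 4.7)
The plan is to reduce representation-equivalence to ordinary DFA language equivalence by bringing both automata to a common arity and then normalising them. Without loss of generality assume $k \le k'$; the strategy is to lift $\A_k$ to a $k'$-bad-prefix automaton over $\Sigma^{k'}$ and then apply the tight and permutation-complete constructions from Theorems~\ref{theo:constrtight} and \ref{theo:constrpermcomplete} to both automata before comparing their languages.

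The lifting $\A_k^{\uparrow}$ over $\Sigma^{k'}$ is defined to accept a word $\sigma' = \vec{v}_0 \vec{v}_1 \cdots \vec{v}_m \in (\Sigma^{k'})^*$ iff there exists a function $f\colon \{1,\dots,k\} \to \{1,\dots,k'\}$ such that the $k$-projection $(\vec{v}_0\circ f)\,(\vec{v}_1\circ f)\cdots(\vec{v}_m\circ f)$ lies in $\L(\A_k)$, where $\vec{v}\circ f = (\vec{v}[f(1)],\dots,\vec{v}[f(k)])$. Intuitively, this captures the statement that $\unzip(\sigma')$, viewed as a trace set, contains a sub-multiset of at most $k$ traces forming a bad prefix already recognised by $\A_k$. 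The automaton is implemented as an NFA that guesses $f$ once at the start from the $(k')^k$ possible functions and then simulates $\A_k$ on the projected letters, giving size $O(|\A_k|\cdot (k')^k)$; a subsequent determinisation, if needed, yields a DFA.

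Next, I apply Theorems~\ref{theo:constrtight} and~\ref{theo:constrpermcomplete} to both $\A_k^{\uparrow}$ and $\A_{k'}$ to obtain tight and permutation-complete deterministic $k'$-bad-prefix automata $\hat{\A}_k$ and $\hat{\A}_{k'}$. Since two tight and permutation-complete $k'$-bad-prefix automata for the same $k$-safety hyperproperty $\safety$ must both accept exactly the set $\per{\safety,k'}$ of all $k'$-arity representations of $k'$-bad-prefixes, I conclude that $\A_k \equiv \A_{k'}$ iff $\L(\hat{\A}_k) = \L(\hat{\A}_{k'})$. The latter is a standard DFA equivalence test, solvable in time polynomial in the sizes of $\hat{\A}_k$ and $\hat{\A}_{k'}$ via the symmetric-difference product construction; the dependence on the arities is absorbed into the space bound, since each product state is a tuple of $O((k')^{k'})$ components over the input automata, giving $2^{O(\max(\log k,\log k')\cdot\max\{k,k'\})}$ working space.

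The hard part will be proving correctness of the lifting, i.e., that $\A_k^{\uparrow}$ and $\A_k$ represent the same hyperproperty $\safety$. The direction ``every accepted word is a bad-prefix representation'' follows from the superset-closure of $\bad(\safety)$: if the projected word is in $\L(\A_k)$ then its unzipping is in $\bad(\safety)$, and any superset obtained by reintroducing the dropped components is still in $\bad(\safety)$. The converse direction uses $k$-safety: every bad trace set $T'$ with $|T'| \le k'$ contains a bad sub-multiset $T \subseteq T'$ of size at most $k$ with a representation $\sigma \in \L(\A_k)$, and one obtains a witness $\sigma' \in \L(\A_k^{\uparrow})$ by embedding $\sigma$ into a $k'$-tuple representation of $T'$ via the choice of $f$ that picks out the rows corresponding to $T$.
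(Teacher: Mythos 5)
Your reduction is sound in outline but takes a genuinely different route from the paper. You canonicalise \emph{both} automata---lift $\A_k$ to arity $k'$ via guessed projections, make both tight and permutation-complete, and then test plain finite-word language equality. The paper never fully canonicalises both sides: it checks two asymmetric conditions, each phrased in terms of \emph{infinite extensions} (for every word accepted by one automaton and every infinite continuation of it, the other automaton accepts some prefix of that continuation), and decides each condition by searching for a lasso in a product automaton, guessed on the fly to meet the space bound. Your projection-based lifting is essentially the paper's $\A_k^{\# k'}$ construction; note that determinising your NFA should be presented as the product of the $(k')^k$ deterministic copies (size $|\A_k|^{(k')^k}$), not as a generic subset construction, or the size bound degrades. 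What your symmetric formulation buys is conceptual cleanliness: once both languages equal $\per{\safety,k'}$, equivalence is ordinary DFA equivalence, and the argument dovetails with the paper's later canonicity theorem. What the paper's asymmetric lasso formulation buys is robustness: it compares the automata only up to infinite extensions, so it tolerates inputs that are tight or fine to different degrees without requiring that both be normalised to a literally identical language, and it makes the \PSpace-style on-the-fly guessing explicit.

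Two points need tightening. First, your key claim---that a tight, permutation-complete $k'$-bad-prefix automaton for $\safety$ accepts \emph{exactly} $\per{\safety,k'}$---uses not only tightness and permutation-completeness (which give one inclusion) but also that the automaton accepts \emph{only} representations of bad prefixes (the other inclusion). The constructions of Theorems~\ref{theo:constrtight} and~\ref{theo:constrpermcomplete} only ever add accepting behaviour; they do not remove accepted words whose unzipping fails to be a bad prefix, so this assumption must be stated---it is precisely what the paper's infinite-extension conditions are designed to absorb. Relatedly, the order of the two constructions matters: permutation-completion must precede the forward-looking tightening, since a representation of a bad prefix may only have \emph{permuted} continuations accepted by the raw automaton. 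Second, the complexity bookkeeping: ``polynomial in the sizes of $\hat{\A}_k$ and $\hat{\A}_{k'}$'' yields a polynomial in $|\A_k|+|\A_{k'}|$ only because the exponents of the form $(k')^k\cdot k'^{k'}$ are independent of the automata, and the stated space bound is met only if the symmetric-difference emptiness test is run as an on-the-fly reachability search storing a single product state (a tuple of $2^{O(k'\log k')}$ components) at a time; you gesture at this but should say it explicitly.
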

\begin{proof}
To check whether $A_k \equiv A_{k'}$ we have to check that:
\begin{enumerate}
 	\item For every representation $t$ accepted by $A_k$ and for every infinite extension $\widetilde{t}$ of $t$, there is a representation $t'$  accepted by $\A_{k'}$, such that, $t' \leq \widetilde{t}$: 
\vskip 0.1cm
$\forall T \in \unzip(L(A_k)).~\forall \widetilde{T}\subseteq \Sigma^\omega.\\~ T \leq \widetilde{T} \rightarrow  ~\exists T' \in \unzip(L(A_{k'})).~T' \leq \widetilde{T}$\\
 	\item For every representation $t'$ accepted by $A_{k'}$ and for every infinite extension $\widetilde{t}$ of $t'$, there is a representation $t$ accepted by $A_k$, such that, $t \leq \widetilde{t}$:
\vskip 0.1cm
$\forall T' \in \unzip(L(A_{k'})).~ \forall \widetilde{T} \subseteq \Sigma^\omega.\\~T' \leq \widetilde{T} \rightarrow  .~\exists T \in \unzip(L(A_k)).~T \leq \widetilde{T}$
 \end{enumerate}
\vskip 0.3cm
W.l.o.g. assume that $k<k'$. 
Let $\A_k = (Q_k, \Sigma^k,q_{0,k}, \delta_k, F_k)$ and  $\A_{k'} = (Q_{k'}, \Sigma^{k'},q_{0,k'}, \delta_{k'}, F_{k'})$. 
\begin{enumerate}
	\item To check the first direction, we first transform $\A_{k'}$ to a tight and permutation-complete automaton $\A_{k'}^{\mathfrak P}$ using the construction in Theorem~\ref{theo:constrpermcomplete}. To be able to compare $\A_k$ with $\A_{k'}^{\mathfrak P}$ we first expand the alphabet of $\A_k$ to $\Sigma^{k'}$ by constructing an automaton $A_k^{\uparrow k'}$ that preserves the language of $\A_k$ up to $\unzip(A_k)$. The automaton $A_k^{\uparrow k'}$ is defined by the tuple $(Q_k,\Sigma^{k'},q_{0,k}, \delta^{\uparrow k'}_k, F_k)$, where  $\delta^{\uparrow k'}_k(q, {(t_1,\dots, t_k, \dots, t_k')}) = q'$ if and only if $t_i =t_k$ for all $k <i \leq k'$ and $\delta_k(q,(t_1, \dots, t_k)) = q'$, otherwise there is no transition. Clearly, $\unzip(A_k) = \unzip(A_k^{\uparrow k'})$. 

	We build the product automaton $\A_{\otimes}$ of $A^{\uparrow k'}_k$ and $\A^{\mathfrak P}_{k'}$. If $\A_{\otimes}$ has a lasso run\footnote{This is an infinite run in the automaton that can be represented by a sequence of states that reach a loop in the automaton.}, where there is an accepting state of $A^{\uparrow k'}_k$ but no accepting states of $A_{k'}^{\mathfrak P}$, then condition~(1) is violated and thus $A_k \not \equiv \A_{k'}$. If no such run is found, then $A_k \equiv A_{k'}$. 
	    
	    The size of the product automaton is $|Q_k| \cdot |Q_{k'}|^{k'^{k'}} $. To check the equivalence there is no need to construct the automaton $\A_{\otimes}$ in fully. Using the same trick as in the polynomial-space model checking algorithm for LTL~\cite{Baier:2008:PMC:1373322}, we can guess a lasso run in $\A_{\otimes}$ of size at most $|\A_\otimes|$. The lasso can be guessed one position at a time and in each position on can further guess if it is the beginning of the period of the lasso. In each step we check if the guessed next position of the lasso satisfies the transition relation as given in the construction above.  Finding a lasso in $\A_{\otimes}$ can thus be done in time polynomial in the sizes of $A_k$ and $A_{k'} $  and in space exponential in $k'$. 

	\item For the other direction it does not suffice to construct the tight and permutation-complete automaton for $\A_k$ and check the condition (2) on the product automaton with $A_{k'}$ as we did in the last case. The reason why this construction does not work, is due to the different arities $k$ and $k'$. Intuitively, $A_k$ and $A_{k'}$ are equivalent, if for each representation accepted by $A_{k'}$, a permutation of one of its $k$-projections satisfies the condition (2). To this aim we construct an automaton $\A_k^{\# k'} = (Q^{\# k'},\Sigma^{k'},q^{\# k'}_{0}, \delta^{\# k'}, F^{\# k'})$ as follows:
	
	Let  $\varsigma_1, \dots, \varsigma_{k'^k}: \{1, \dots, k\} \rightarrow \{1, \dots, k'\}$ be pairwise different functions. We call $\varsigma_1, \dots, \varsigma_{k'^k}$ $k$-permuted-projection functions.
	\begin{itemize}
		\item $Q^{\# k'} = (Q_{k,1}\times \dots \times Q_{k,k'^k})$ where $Q_{k,i} = \{(q,i) \mid q\in Q_k\}$ for $1\leq i \leq k'^k$. A set of states $Q_{k,i}$ resembles a copy of the automaton $\A$ that accepts a word $\sigma$ if one of its  $k$-permutated-projections is accepted by $\A$ with respect to the permuted-projection function $\varsigma_i$.  The initial state is defined by $q^{\# k'}_{0} = ((q_0,1),\dots,(q_{0},k'^k))$.
		\item A word is accepted, if one of its permuted-projections is accepted. We define the set of accepting states as $F_{\mathfrak P} = \{((q_1,1), \dots, (q_{k'^k},k'^k)) \mid \exists i.~q_i \in F \}$.
		\item The transition relation $\delta^{\# k'}_{k}$ is defined as follows:
		
					$((q_1,1),\dots,(q_{k'^k},k'^k))\\ ~~~~~~~~~~~~~~~~ \xrightarrow {(t_1, \dots, t_{k'})} ((q'_1,1),\dots,(q'_{k^k},k^k))$\\
					when  $(q_i \xrightarrow  {(t_{\varsigma_i(1)}, \dots, t_{\varsigma_i(k)})}q'_i )\in \delta$ for  $1\leq i \leq k'^k$. For each $q_i$ the successor state $q'_i$ for a letter $(t_1,\dots,t_{k'})$ is determined by  the transition of its permuted-projection $(t_{\varsigma_i(1)}, \dots, t_{\varsigma_i(k)})$ in the automaton $\A$. 
	\end{itemize} 
	We build the product automaton $\A_{\otimes}$ of $A^{\# k'}_k$ and $\A_{k'}$. If $\A_{\otimes}$ has a lasso run, where there is an accepting state of $A_{k'}$ but no accepting states of $A^{\# k'}_k$, then condition~(2) is violated and thus $A_k \not \equiv \A_{k'}$. If no such run is found, then $A_k \equiv A_{k'}$. 
	 
	 Again, To check the equivalence we can guess a lasso in $\A_{\otimes}$ that has an accepting state of $\A_{k'}$ but no accepting state from $A^{\# k'}_k$. Finding a lasso in $\A_{\otimes}$ can thus be done in time polynomial in the sizes of $A_k$ and $A_{k'} $  and in space exponential in $k$. 
\end{enumerate}	
\end{proof}

\begin{corollary}
	 Checking the representation-equivalence of two non-deterministic finite automata $A_k$ and $A_{k'}$ of arity $k,k'$ can be done in space $O(\mathit{poly}(|A_k|+|A_{k'}|))$ and ${2^{O(\max(\log(k),\log(k'))\cdot\max\{k,k'\})}}$.
\end{corollary}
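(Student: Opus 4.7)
The plan is to run the same two-direction algorithm from the proof of Theorem~\ref{theo:equiv}, but with every intermediate determinization and the permutation-complete construction of Theorem~\ref{theo:constrpermcomplete} performed symbolically rather than materialized. The key observation is that the deterministic equivalence check only ever stores one current product state plus one snapshot for loop closure, so what we need to bound is the bitlength of a single symbolic product state, not the reachable state space.

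First I would represent each state of the determinized $A_k$ or $A_{k'}$ by its underlying subset of $Q_k$ or $Q_{k'}$, storable in $\mathit{poly}(|A_k|+|A_{k'}|)$ bits and with successor-under-a-letter computable in polynomial time directly from the nondeterministic transition relation. Layering the permutation-complete construction of Theorem~\ref{theo:constrpermcomplete} on top yields a tuple of $k'^{k'}$ such subsets (one per permutation function), together storable in $\mathit{poly}(|A_k|+|A_{k'}|)\cdot k'^{k'}$ bits; the $\uparrow k'$ and $\#k'$ transformations of the opposing automaton are handled identically. A state of the product automaton $\A_{\otimes}$ therefore fits in $\mathit{poly}(|A_k|+|A_{k'}|)\cdot 2^{O(\max(\log k,\log k')\cdot\max\{k,k'\})}$ bits, matching the bound in the statement.

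Next I would reuse the lasso-guessing subroutine from Theorem~\ref{theo:equiv} verbatim on this symbolic product: nondeterministically guess a stem letter by letter, snapshot the current state, then guess a period that returns to the snapshot while recording whether the acceptance pattern demanded by condition~(1) or~(2) is witnessed. Only the current state, the snapshot, and two counters logarithmic in the number of product states are kept, all of which fit within the bitlength bound; Savitch's theorem (or direct $\mathrm{NPSPACE}=\mathrm{PSPACE}$) then removes the nondeterminism without inflating the bound.

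The main obstacle will be the horizontal-tightness refinement of Theorem~\ref{theo:constrtight}: the refined accepting set $F'$ is defined by a universal quantifier over infinite continuations of the determinized automaton, so ``$s\in F'$'' is not a local predicate on a single subset. I would discharge this as a nested nondeterministic subroutine that searches, inside the symbolic subset automaton, for an infinite non-accepting continuation from $s$; this inner search again runs in space polynomial in the outer state bitlength, and because polynomial space is closed under polynomial-space oracles the nested membership test is absorbed into the overall budget, preserving the announced complexity.
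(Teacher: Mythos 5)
Your proposal is correct and is exactly the intended derivation: the paper states this corollary without proof, but it is meant to follow from Theorem~\ref{theo:equiv} by running the same two-direction lasso-guessing check with the subset construction (and the $k'^{k'}$-fold permutation product) performed on-the-fly, so that the determinization cost is paid in nondeterministic space rather than materialized, and your bit-length accounting for a single symbolic product state matches the stated bound. Your explicit handling of the non-local accepting-set predicate $F'$ from Theorem~\ref{theo:constrtight} via a nested \PSpace subroutine (absorbed by closure under complement and \PSpace oracles) is a detail the paper glosses over even in the deterministic case, and you discharge it correctly.
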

%
%
\subsection{Minimal $k$-bad-prefix automata}
In this section, we complete our search for a canonical representation of regular $k$-safety hyperproperties and prove that minimal deterministic, tight and permutation-complete {bad-prefix} automata provide such a representation. 

\begin{definition}[Minimal Bad-prefix Automaton]
A deterministic tight permutation-complete $k$-bad-prefix automaton $\A$ for some $k'$-safety hyperproperty $\safety$ is called \emph{minimal}, if there is no $k''$-bad-prefix automaton for $\safety$, with $k''<k$, and $\A$ is the minimal automaton in size for $k$.
\end{definition}

\begin{lemma}
	\label{lem:equivalence_and_bad-prefixes}
	Two safety hyperproperties $\safety$ and $\safety'$ are equivalent if and only if $\bad(\safety) = \bad(\safety')$. 
\end{lemma}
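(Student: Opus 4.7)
The plan is to prove both directions separately. The forward direction ($\safety = \safety' \Rightarrow \bad(\safety) = \bad(\safety')$) is immediate: the set $\bad(\safety)$ is defined purely in terms of which extensions lie in $\safety$, so equal hyperproperties must induce equal bad-prefix sets. I would dispatch it in one line.

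For the nontrivial direction, I would argue by contradiction. Assume $\bad(\safety) = \bad(\safety')$ but $\safety \neq \safety'$. Without loss of generality, pick some $T' \in \Sigma^\omega$ that belongs to exactly one of them, say $T' \in \safety$ and $T' \notin \safety'$. Since $\safety'$ is a safety hyperproperty and $T' \notin \safety'$, the hypersafety condition guarantees a bad prefix $T \le T'$ with $T \in \bad(\safety')$. By the assumed equality $\bad(\safety) = \bad(\safety')$, we have $T \in \bad(\safety)$ as well. But the definition of $\bad(\safety)$ says that every $\widetilde{T} \subseteq \Sigma^\omega$ with $T \le \widetilde{T}$ satisfies $\widetilde{T} \notin \safety$; instantiating $\widetilde{T} := T'$ contradicts $T' \in \safety$. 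The symmetric case $T' \in \safety' \setminus \safety$ is handled by exchanging the roles of $\safety$ and $\safety'$.

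The key step, and the only place the safety assumption is used, is the invocation of hypersafety to extract a bad prefix of $T'$ with respect to $\safety'$; without safety, the witness $T$ need not exist and the argument breaks. I do not anticipate a genuine obstacle here, since the argument mirrors the classical proof that a safety trace property is determined by its set of bad prefixes, lifted from single traces to trace sets using the prefix relation $T \le T'$ defined in Section~\ref{sec:automata}.
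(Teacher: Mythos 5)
Your proof is correct and follows essentially the same route as the paper's: the forward direction is immediate from the definition of $\bad(\cdot)$, and the backward direction uses hypersafety to extract a bad prefix of a violating set, transfers it across the assumed equality of bad-prefix sets, and applies the definition of a bad prefix to the original set. The only cosmetic difference is that you phrase the backward direction as a contradiction, whereas the paper states it as a direct implication ($T \not\in \safety$ implies $T \not\in \safety'$); the underlying steps are identical.
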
%
\begin{proof}
	\begin{itemize}
		\item[($\Rightarrow$)] Let $T \in \bad(\safety)$. Thus for all $T \leq T'$, it follows, that $T' \not \models \safety$ and therefore $T' \not \models \safety'$, by assumption.
		Hence, $T \in \bad(\safety')$. The same proof holds when exchanging $\safety$ and $\safety'$ yielding $\bad(\safety) =  \bad(\safety')$.
		\item[($\Leftarrow$)] Let $T \not \in \safety$. Thus there exists some $T' \in \bad(\safety)$ such that $T' \leq T$. According to the assumption we know, that $T' \in \bad(\safety')$ and thus $T \not \in \safety'$. 
		The same proof holds when exchanging $\safety$ and $\safety'$ yielding $\safety =  \safety'$.
	\end{itemize}
\end{proof}
\begin{theorem}
	Minimal, tight, permutation-complete, deterministic $k$-bad-prefix automata are a canonical representation for regular $k$-safety hyperproperties.
\end{theorem}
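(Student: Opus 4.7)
My plan is to establish, for every regular $k$-safety hyperproperty $\safety$ with minimal arity $k$, the existence and uniqueness up to isomorphism of a minimal tight permutation-complete deterministic $k$-bad-prefix automaton. For existence, I would start from any bad-prefix automaton for $\safety$ (which exists by regularity), apply Theorem~\ref{theo:constrtight} to make it tight, then Theorem~\ref{theo:constrpermcomplete} to obtain a deterministic tight permutation-complete bad-prefix automaton, and finally run standard DFA minimization. Since the definition of Minimal Bad-prefix Automaton requires $k$ itself to be the least arity admitting a bad-prefix automaton for $\safety$, and this least arity is uniquely determined by $\safety$, the construction produces an object of the shape claimed by the theorem.

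The crux of uniqueness is the claim that the language accepted by any deterministic tight permutation-complete $k$-bad-prefix automaton $\A$ for $\safety$ is exactly $\per{\safety,k}$, the set of all representations of all bad prefixes of $\safety$ of size at most $k$. The inclusion $\per{\safety,k} \subseteq L(\A)$ is immediate from the two hypotheses: tightness yields at least one accepted representation of each $k$-bad-prefix $T$, and permutation-completeness promotes this to all representations of $T$. The reverse inclusion $L(\A) \subseteq \per{\safety,k}$ follows because $\A$ is a bad-prefix automaton for $\safety$: every accepted $\sigma \in (\Sigma^k)^*$ encodes a set $\unzip(\sigma)$ of size at most $k$ that is a bad prefix of $\safety$, i.e., $\sigma \in \per{\safety,k}$. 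Lemma~\ref{lem:equivalence_and_bad-prefixes} additionally guarantees that $\safety$ determines $\bad(\safety)$, and hence $\per{\safety,k}$, unambiguously.

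With the accepted language pinned down to $\per{\safety,k}$, I would invoke the Myhill--Nerode theorem: among all deterministic finite automata recognizing a fixed regular language, the minimal one is unique up to isomorphism. Together with the uniqueness of the minimal arity $k$, this gives the canonical representation. I expect the main obstacle to be cleanly formalizing the ``only bad-prefix representations are accepted'' half of the equality $L(\A) = \per{\safety,k}$: the definitional condition that $L(\A)$ be a representation of $\safety$ does not, on the face of it, rule out ``junk'' words whose \unzip\ is not a bad prefix. Ruling these out requires a careful reading of what it means for $\A$ to be a bad-prefix automaton, together with an argument that any junk-carrying automaton can be strictly shrunk, contradicting minimality; tracing this through is the one delicate step of the proof.
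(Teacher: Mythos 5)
Your proposal is correct and follows essentially the same route as the paper: both arguments reduce canonicity to the uniqueness of the minimal arity $k$, the observation that tightness plus permutation-completeness pins the accepted language down to the full set $\per{\safety,k}$ of representations of $k$-bad-prefixes, and the Myhill--Nerode theorem (the paper additionally packages this as an equivalence between $\safety=\safety'$ and equality of their automata via Lemma~\ref{lem:equivalence_and_bad-prefixes}). The one caveat you flag---excluding ``junk'' words whose \unzip\ is not a bad prefix---is a genuine gap the paper leaves implicit, but it is better closed by reading the definition of a bad-prefix automaton as recognizing a representation consisting only of bad-prefix encodings rather than by your proposed minimality argument, since a language with junk added can in general have a \emph{smaller} minimal DFA than the junk-free one.
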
%
\begin{proof} 
	Let $\safety$ and $\safety'$ be two regular $k$-safety hyperproperties. We  show that they are equal if and only if they have the same minimal deterministic tight permutation-complete bad-prefix automaton. 
		
	\begin{itemize}
	\item [$(\Rightarrow)$]
	Let $\safety = \safety'$. From Lemma~\ref{lem:equivalence_and_bad-prefixes}
	 we know that $\bad(\safety) = \bad(\safety')$. This means that any representation $L$ for $\safety$ is also a representation for $\safety'$. We conclude that any deterministic tight permutation-complete bad-prefix automaton for $\safety$ is also a deterministic tight permutation-complete bad-prefix automaton for $\safety'$.
	\item[$(\Leftarrow)$]
	Let $\A$ be a  $k$-bad-prefix automaton for $\safety$ and $\safety'$. This means  
	that $\bad(\safety) = \bad(\safety')$. From \Cref{lem:equivalence_and_bad-prefixes} it follows that $\safety \equiv \safety'$.
	\end{itemize}
	
	It remains to show that  $k$-bad-prefix automata are unique for a $k$-safety hyperproperty $\safety$. 
	Clearly, the minimal arity $k$ is unique. 
	The language of all bad-prefixes of size $k$ is also unique for $\safety$. 
	Thus the set of all $k$-representations is unique and  this language is regular by assumption.
	Further, from the Myhill-Nerode Theorem. it is well-known that minimal deterministic automata are a unique representation for regular languages\cite{hopcroft:introduction_to_automata_theory}
	Hence, the claimed uniqueness follows. 
\end{proof}%

Based on this canonical representation, we provide, in the next section, a framework for learning automata for regular $k$-safety hyperproperties.

\section{Learning Automata for  $k$-Safety Hyperproperties}\label{sec:Learning}
\label{sec:learning}
We present a framework for learning \emph{minimal tight, permutation-complete,  deterministic} bad-prefix automata  for some unknown $k$-safety hyperproperty $\safety$ over an alphabet $\Sigma$ and an unknown minimal $k$. The algorithm extends Dana Angluin's L$^*$ algorithm for learning minimal deterministic finite automata from queries and counterexamples \cite{angluin_learning_regular_sets}, to learn minimal bad-prefix automata  for a minimal arity $k$. 

\subsection{L$^*$: A framework for learning regular languages}
We give a high-level recap of the L$^*$ framework as presented in Figure~\ref{fig:L*framework}. We leave some of the technical details for the next section when explaining the extended framework for $k$-safety-hyperproperties. 

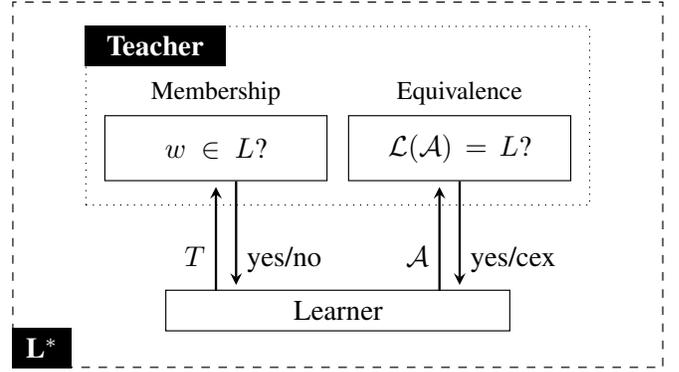
\begin{figure}[h]
	\centering
	{
		\scalebox{1.08}{\centering
\begin{tikzpicture}

\draw  [dashed](-4,-0.5) rectangle ++ (8,4.5);

\draw  [dotted](-3.1,1.5) rectangle ++ (6.2,2.2);

\node [draw = black, text width=4cm, minimum height = 0.5cm, align=center] (learner) at (0,0.2) {Learner};

\node [draw = black, text width=2.5cm, minimum height=0.8cm,  align=center] (MQ) at (-1.5,2.2) {$w \in L$?};

\node [draw = black, text width=2.5cm, minimum height=0.8cm, align=center] (EQ) at (1.5,2.2) {$ \L(\A) = L$?};

\node [fill = black, text width=.5cm, minimum height=0.5cm, align=center](Lstar) at (-3.64,-0.26) {\color{white}\textbf{L}$^*$};

\node [fill = black, text width=1.5cm, minimum height=0.5cm, align=center](teacher) at (-2.25,3.46) {\color{white}\textbf{Teacher}};

\node [text width=2.5cm, minimum height=0.5cm, align=center, above = 0 of MQ](teacher) {\small Membership};

\node [text width=2.5cm, minimum height=0.5cm, align=center, above = 0 of EQ](teacher) {\small Equivalence};

\node [below = 0 of MQ] (r1) {};
\node [below = 1.1 of r1] (r2) {};
\node [right = 0 of r2] (r3) {};
\node [right = 0 of r1] (r4) {};

\node [below = 0 of EQ] (l1) {};
\node [below = 1.1 of l1] (l2) {};
\node [left = 0 of l2] (l3) {};
\node [left = 0 of l1] (l4) {};

\path[->]
(r2) edge node [label, below left] {$T$} (r1.north)
(r4.north) edge node [label, below right] {yes/no} (r3)
(l1.north) edge node [label, below right] {yes/cex} (l2)
(l3) edge node [label, below left] {$\A$} (l4.north)
;

\end{tikzpicture}}
	}
	\caption{L$^*$: A framework for learning minimal deterministic finite automata~\cite{angluin_learning_regular_sets}.}
	\label{fig:L*framework}
\end{figure}

The L$^*$ framework consists of two components, a \emph{learner}, that learns an automaton for the unknown language, and a \emph{teacher}, that answers questions about the language. The learner can pose two types of queries to the teacher: \emph{membership-queries}, where the learner asks whether a word is in the target language, and \emph{equivalence-queries}, where the learner asks whether the language of a conjectured automaton is equivalent to the target language. 

The learner starts by posing membership queries for words of increasing length.
The answers of the teacher are organized in a so-called \emph{observation table}.  The observation table represents a so-far constructed automaton, the accepts at least the valid words queried using membership queries. 
 After each membership query, the learner performs two checks in the observation table: (1) a consistency check, certifying that the observation table defines a deterministic automaton; the table contains no two transitions from a state for the same letter, and (2) a closedness check, that tests that it defines a complete automaton, i.e., for each state and for each letter there is a transition from that state for this letter.  If one of these checks fails, the observation table can be repaired with the appropriate extension and membership queries (We show how these checks are performed in the case of $k$-safety hyperproperties in the next section. For the traditional checks for regular languages we refer the reader to \cite{angluin_learning_regular_sets}).   
 
 If the table is both consistent and closed, then the learner can construct a deterministic automaton out of the observation table and queries the teacher on whether the conjectured automaton defines the target language. If the automaton is not equivalent to the target language, the teacher returns a counterexample. This is either a word in the language that is not accepted by the conjectured automaton, or a word that is wrongly accepted by the automaton and is not a member of the target language. The counterexample is added to the observation table, and the learning process continues with the new table.  

Angluin showed that, for a \emph{minimal adequate teacher}, i.e., a teacher that answers membership and equivalence queries,  that L$^*$ terminates after a number of membership queries that is polynomial in the size of the minimal deterministic finite automaton for the target language.  

\subsection{L$^*_\mathit{Hyper}$: A framework for learning $k$-safety hyperproperties}

\begin{figure*}
\centering
{
	\scalebox{.8}{\begin{tikzpicture}

\def \height {2cm}
\node (start) {};

\node [draw=black, text width=3cm, minimum height=\height, rounded corners=0.5cm, right=3cm of start, align=center] (close_and_consistent) {is table closed and consistent?};

\node [draw=black, text width=3cm, minimum height=\height, rounded corners=0.5cm, right=2cm of close_and_consistent, align=center] (conjecture) {build conjecture bad-prefix $\A$};

\node [draw=black, text width=3cm, minimum height=\height, rounded corners=0.5cm, right=2cm of conjecture, align=center] (equiv) {is $\A$ equivalent?};

\node [draw=black, text width=3cm, minimum height=\height, rounded corners=0.5cm, above=2cm of equiv, align=center] (permutation) {is $\A$ permutation-complete?};

\node [right=2cm of permutation, align=center] (acc) {};

\node [draw=black, text width=3cm, minimum height=\height, rounded corners=0.5cm, below=2cm of equiv, align=center, align=center] (arity) {arity of $C$ less or equal $k$};

\node [draw=black, text width=3cm, minimum height=\height, rounded corners=0.5cm, below=2cm of close_and_consistent, align=center] (extend) {extend $\mc{O}$ to $\Sigma^{\abs{C}}$};

\path[->]
(start) edge node [label, above] {inital table} (close_and_consistent)
(close_and_consistent) edge [loop above] node [label, above] {no: membership queries} (close_and_consistent)
(close_and_consistent) edge node [label, above]{yes} (conjecture)
(conjecture) edge node [label, above] {$\A$} (equiv)
(equiv) edge node [label, right] {no: counterexample $C$} (arity)
(equiv) edge node [label, right] {yes} (permutation)
(permutation) edge node [label, above =1] {no: add counterexample $C$ to $\mc{O}$} (close_and_consistent)
(permutation) edge node [label, above] {yes: done} (acc)
(arity) edge node [label, above] {no: counterexample $C$} (extend) 
(arity) edge node [label, above right] {yes: add $C$ to $\mc{O}$} (close_and_consistent)
(extend) edge node [label,left]{add $C$ to $\mc{O}$}(close_and_consistent);

\end{tikzpicture}}
}
\caption{\lstarhyper: A framework for learning $k$-safety-hyperproperties. }
\label{fig:Lstarhyper}	
\vskip 2mm
\end{figure*}
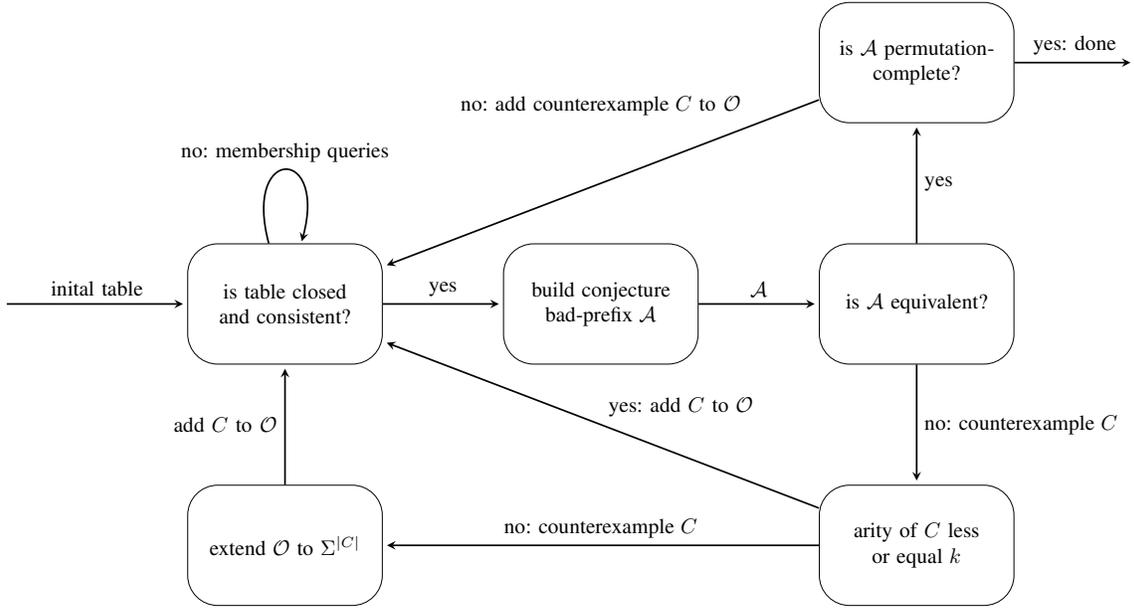

We extend the L$^*$ framework given in Figure~\ref{fig:L*framework} to a new framework \lstarhyper\ for learning minimal deterministic tight permutation-complete bad-prefix automaton for $k$-safety-hyperproperties. In contrast to learning minimal automata for regular properties, the bad-prefix automata learned in \lstarhyper\ must be minimal both in the arity and in the size. The work-flow of \lstarhyper is given in  Figure~\ref{fig:Lstarhyper}.

Let $\safety$ be the unknown $k$-safety-hyperproperty over some alphabet $\Sigma$. The learner starts with membership queries, and fills the observation table until it is closed and consistent. Because the minimal arity $k$ is initially unknown the learner starts by posing questions over sets of arity 1. During the learning process the alphabet  changes to larger arity ${k'}$, when the teacher returns a counterexample of this arity.
 
Assume the current arity in the learning process is $k'$ for some $k'\leq k$. In membership queries, the learner asks  whether a finite set of finite sequences of equal length $T = \{t_1,\dots,t_{k'}\} \subseteq \Sigma^n$ (given by some representation) for some $n\in\nats$ is a bad-prefix for $\safety$. 
The answers of the teacher are organized in an observation table $\O = (S,E,\Delta)$ 
where: $S \subseteq (\Sigma^{k'})^*$ is a non-empty finite prefix-closed set of \emph{accessing sequences},  $E\subseteq (\Sigma^{k'})^*$ is a non-empty finite suffix-closed set of \emph{separating sequences}, and $\Delta: (S \cup S\cdot \Sigma) \cdot E \rightarrow \{0,1\}$ a mapping defined as $\Delta(s\cdot e) = 1$ if and only if $s\cdot e$ is a representation of a bad-prefix for $\safety$. 

Consider the observation table given in Figure~\ref{fig:obsertableArity1}. The set $S$ includes the words $\epsilon,\neg a, a, a\cdot\neg a$ in the first four rows of the table.  The set $E$ includes the words $\epsilon,\neg a$ defining the columns. The words in the remaining columns define the set $S\cdot \Sigma$ (as we will see later, these rows are necessary for the closedness and consistency checks).  The value of an entry in the table is 1 if the word $s\cdot e$, where $s$ is a word of the row and $e$ the word of the column, is a representation of a bad-prefix for the hyperproperty given by the formula $\forall \pi, \pi'.\ a_{\pi} \wedge \LTLsquare (a_{\pi} \leftrightarrow a_{\pi'})$. Otherwise the value of the entry is 0.  

For $t \in S\cdot\Sigma$ we denote by $\row(t)$ a finite function from $E$ to $\{0,1\}$ defined by $\row(t)(e) = \Delta(t\cdot e)$. 
An observation table $\O = (S,E,T)$ is called closed if for all $t \in S \cdot \Sigma$ there exists $s \in S$ with $\row(t) = \row(s)$. The table 
 $\O$ is called consistent, if for all $t, t' \in S$ with an equal function $\row(t) = \row(t') \Rightarrow \row(t\cdot e) = \row(t' \cdot e)$ for all $e \in \Sigma$.  We define $\row(S) = \{\row(s) \mid s \in S\}$. Closedness guarantees that every transition is defined, i.e., for each state $q \in Q$ and label $a \in \Sigma$, $\delta(q,a) \in Q$, and consistency guarantees that $\A$ is deterministic. 

For a closed and consistent observation table $\mc{O}$ over an arity $k'$ we can construct an DFA $\A = (Q, \Sigma^{k'}, q_0, F, \delta)$ that accepts all the $k'$-bad-prefixes that have been confirmed by the teacher so far. We define $Q = \{ \row(s) \mid s \in S\}$, $q_0 =\row(\epsilon)$, $F = \{\row(s) \mid s\in S \text{ and } \Delta(s) = 1\}$ and $\delta(\row(s), a) = \row(s\cdot a) $ for all $s \in S$ and $a \in \Sigma$.
The table in Figure~\ref{fig:obsertableArity1} is closed and consistent, and defines the automaton given to its right. 
For $k' \in \NN$, we call an automaton $\A$ over $\Sigma^k$ consistent with an observation table $\O = (S,E,\Delta)$ over $\Sigma^k$ if for all $s\in (S\cup S \cdot \Sigma^k), e \in E$ $\Delta(s, e) = 1 \Leftrightarrow s\cdot e \in L(\A)$.



To check whether the learned automaton $\A$ is a $k'$-bad-prefix automaton for $\safety$,  the learner poses an equivalence query to the teacher. 
In equivalence queries, the teacher answers whether the proposed automaton $\A = (Q, \Sigma^{k'}, q_0, F, \delta)$ is a $k'$-bad-prefix automaton for $\safety$. In case $\A$ is not, the teacher provides a counterexample. 

\begin{figure}[h]
	\begin{minipage}{0.49\linewidth}
		\centering
		\scalebox{0.9}{
		\begin{tikzpicture}
			\matrix at(0,0) (m) [mat, column 1/.style={nodes={minimum width = 7em}},ampersand replacement=\&]
			{							\& $\epsilon$ 	\&  $\neg a$ \\
				$\epsilon$ 				\& 	$0$ 		\&	$1$		\\
				$\epsilon\cdot\neg a$ 	\& 	$1$ 		\& 	$1$		\\
				$\epsilon \cdot a$ 		\&	$0$ 		\& 	$0$		\\
				$a \cdot \neg a$		\&  $0$			\& 	$0$		\\
				$\neg a\cdot a$ 		\& 	$1$ 		\& 	$1$ 	\\
				$\neg a\cdot \neg a$ 	\& 	$1$ 		\&	$1$		\\
				$a \cdot a$				\&  $0$ 		\& 	$0$		\\
				$a \cdot \neg a \cdot a$		\& 	$0$	\&	$0$		\\
				$a \cdot \neg a \cdot \neg a$	\& 	$0$	\& 	$0$		\\ 
			};
			\draw[line width = 0.9 mm] (m-6-1.north west) -- (m-6-3.north east) ;
		\end{tikzpicture}
		}
	\end{minipage}
	\begin{minipage}{0.49\linewidth}
		\centering
		\scalebox{0.9}{
		\begin{tikzpicture}
			\node [comp, initial] 	(q0) {$q_0$};
			\node [comp, accepting, right = 5em of q0]	(q1) {$q_1$};
			\coordinate (Middle) at ($(q0)!0.5!(q1)$);
			\node [comp, above = 5em of Middle] (q2) {$q_2$}; 
			
			\path[->]
			(q0) edge 				node [label, above] {$\neg a$} 	(q1)
			(q0) edge 				node [label, above left] {$a$} 		(q2)
			(q1) edge [loop above] 	node [label, above] {$\top$} 		(q1)
			(q2) edge [loop above] 	node [label, above]	{$\top$}		(q2)
			;
		\end{tikzpicture}
		}
	\end{minipage}
\caption{The  observation table (on the left) for the $4^\text{th}$ iteration in the learning process for the language $\forall \pi, \pi'.\ a_{\pi} \wedge \globally (a_{\pi} \leftrightarrow a_{\pi'})$, and the corresponding DFA (on the right).}
\label{fig:obsertableArity1}
\end{figure}
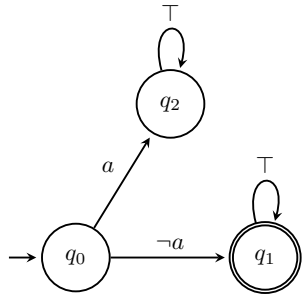

\subsection{Handling counterexamples of equivalence queries}
If the equivalence test fails, then the teacher returns a counterexample. A counterexample is either a bad prefix for which no representation is accepted by the conjectured automaton, or a representation accepted by the automaton that is no representation of a bad prefix for $\safety$. 

%

Handling bad prefixes depends on their arity. 
We distinguish between two types of counterexamples with respect to the current considered arity $k'$, namely, counterexamples with arity $k'' \leq k'$ and counterexamples with arity $k''> k'$.

 If the counterexample has arity $k''\leq k$, the counterexample is treated as for the traditional L$^*$ learner, by extending the table with a representation of this counterexample and querying all its prefixes. 

 If the counterexample has arity $k''>k'$  then the arity of the target automaton is increased to $k''$. The sets $S$ and $E$ are extended to sequences over the alphabet $\Sigma^{k''}$ by replacing every element $t = v_0\dots v_n \in (\Sigma^{k'})^*$ in $S$ and $E$ by $t' = v_0'\dots v_n' \in (\Sigma^{k''})^*$, such that, for all $0\le i\le n$ and $0 \le j \le k'$, $v_{i}[j] = v_{i}'[j]$ and for $ k' < j \le k''$, $v_{i}[k'] = v_{i}'[j]$. Notice that the size of the table increases by the number of prefixes of the counterexample. 
 
 Consider again our example in Figure~\ref{fig:obsertableArity1}. The conjecture automaton is not a bad-prefix automaton for the language $\forall \pi, \pi'.\ a_{\pi} \wedge \LTLsquare (a_{\pi} \leftrightarrow a_{\pi'})$. A counterexamples of arity 2 is given by the set $C = \{\{a \cdot \neg a\},\{ a \cdot a\}\}$. The observation table must be extended to $\Sigma^2$. Having extended the observation table, we must add a representation of the counterexample $C$ to the set of accessing sequences $S$. $C$ is a bad prefix, which can be verified using a membership query. As no 2-representation of $C$ are in the target language.  We choose the 2-representation $(a,  a) · (\neg a, a)$ and add it to the set of accessing sequences resolves the current counterexample. The new observation table  is closed and consistent, and it represents a 2-bad-prefix automaton that represents the target language. The final table and its automaton are depicted in Figure~\ref{fig:obsertableArity2}.

\begin{figure}	
\begin{minipage}{1\linewidth}
		\centering
		\scalebox{0.9}{
		
		\begin{tikzpicture}
		\matrix at(0,0) (m) [mat, column 1/.style={nodes={minimum width = 12em}}, nodes = {minimum width = 5em}, ampersand replacement=\&]
		{												\& $\epsilon$ 	\&  $(\neg a, \neg a)$ \\
			$\epsilon$ 									\& 	$0$ 		\&	$1$		\\
			$\epsilon\cdot(\neg a, \neg a)$ 			\& 	$1$ 		\& 	$1$		\\
			$\epsilon \cdot (a,a)$ 						\&	$0$ 		\& 	$0$		\\
			$(a,a) \cdot (\neg a, \neg a)$				\&  $0$			\& 	$0$		\\
			$(a,a) \cdot (\neg a, a)$					\&  $1$			\& 	$1$		\\
			$(a, \neg a)$								\& 	$1$			\& 	$1$		\\
			$(\neg a, a)$ 								\& 	$1$			\& 	$1$		\\
			$(\neg a, \neg a)\cdot (*)$ 				\& 	$1$ 		\& 	$1$ 	\\
			$(a,a) \cdot (a, a)$						\&  $0$ 		\& 	$0$		\\
			$(a, a) \cdot (\neg a, a)$					\&  $1$			\& 	$1$		\\
			$(a,a) \cdot (\neg a, \neg a) \cdot (a,a)$	\& 	$0$			\&	$0$		\\
			$(a,a) \cdot (\neg a, \neg a) \cdot (\neg a,a)$	\& 	$1$			\&	$1$		\\
			$(a,a) \cdot (\neg a, \neg a) \cdot (a,\neg a)$	\& 	$1$			\&	$1$		\\
			$(a,a) \cdot (\neg a, \neg a) \cdot (\neg a,\neg a)$	\& 	$0$			\&	$0$		\\
			$(a,a) \cdot (\neg a, a) \cdot (*)$					\&  $1$			\& 	$1$		\\
		};
		\draw[line width = 0.9 mm] (m-7-1.north west) -- (m-7-3.north east) ;
		\end{tikzpicture}
		}
	\end{minipage}
	~~~~~~~~~~~~{\color{white}.} 
	\vskip 5mm
	\begin{minipage}{1\linewidth}
	\centering
		\scalebox{1}{
		
		\begin{tikzpicture}
		\node [comp, initial] 	(q0) {$q_0$};
		\node [comp, accepting, right = 5em of q0]	(q1) {$q_1$};
		\coordinate (Middle) at ($(q0)!0.5!(q1)$);
		\node [comp, above = 5em of Middle] (q2) {$q_2$}; 
		
		\path[->]
		(q0) edge 				node [label, above] {$\neg (a, a)$} 	(q1)
		(q0) edge 				node [label,  left] {$(a,a)$} 		(q2)
		(q1) edge [loop right] 	node [label, right] {$\top$} 		(q1)
		(q2) edge [loop above] 	node [label, above]	{$(a, a) \vee (\neg a, \neg a)$}		(q2)
		(q2) edge 				node [label, right]		{$(\neg a, a) \vee (a, \neg a)$}		(q1)
		;
		\end{tikzpicture}
		}
	\end{minipage}
	\caption{Final observation table for learning $\varphi = \forall \pi, \pi'.\ a_{\pi} \wedge \globally (a_{\pi} \leftrightarrow a_{\pi'})$, and $2$-bad-prefix automaton for $\varphi$.  We employ the notation $ x\cdot(*)$  to denote  all extensions of $x$.}
 \label{fig:obsertableArity2}	
 \end{figure}
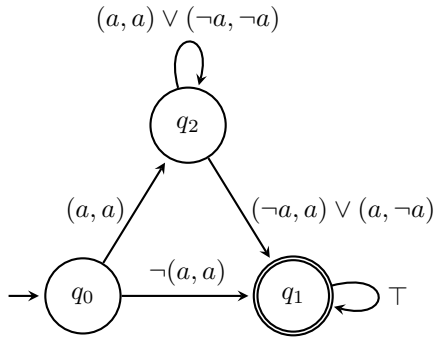
 
 In some case, the resulting automaton might pass the equivalence check but is not necessarily permutation complete. If the goal is to construct a permutation-complete automaton, we additionally check the automaton for permutation-completeness.


\subsection{Termination of \lstarhyper}

As for \lstar, to guarantee that our learning framework terminates with a minimal automaton, it must rely on a minimal-adequate teacher. For \lstarhyper we define such a teacher as follows. 

\begin{definition}[Minimal-Adequate Teacher]
\label{def:min_ade_teacher}
	A teacher is called minimal-adequate, if the counterexamples provided are of minimal length and every counterexample has an arity of at most the minimal target-arity. 
\end{definition}

For a minimal adequate teacher we show that for an unknown $k$-safety hyperproperty $\safety$, \lstarhyper terminates after at most $n$ equivalence queries, where $n$ is the size of the minimal deterministic tight permutation-complete $k$-bad-prefix automaton for $\safety$. 

Let $\O=(S,E,\Delta)$ be the observation table. Following the ideas in the termination proof for \lstar, we need to show that 
\begin{enumerate}
	\item the set $\row(S)$ does not grow beyond $n$.
	\item with each counterexample, the size of the set $\row(S)$ must be strictly monotonically increasing.
\end{enumerate}
In the following lemma we give a proof for condition 1). With Lemmas~\ref{lemma:consistent}, \ref{lemma:closedness}, \ref{lemma:addCEX} and \ref{lemma:extend}, we show that \lstarhyper also satisfies  condition 2).\\ 

 \begin{lemma}
 \label{lemma:boundedness}
 	Let $\safety$ be a regular $k$-safety hyperproperty, let $\O=(S,E,\Delta)$ be an observation table, and let $n$ be the size of the minimal $k$-bad-prefix automaton for $\safety$. Then, the size of $\row(S)$ is bounded by $n$ in any iteration of \lstarhyper.  
 \end{lemma}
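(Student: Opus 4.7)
The plan is to adapt the classical Myhill--Nerode argument that underlies \lstar's correctness to the hyperproperty setting, adding one step that bridges the different arities that may appear in a run of \lstarhyper. Let $k'\leq k$ be the current arity of the table (the minimal-adequate teacher of \Cref{def:min_ade_teacher} never forces the arity above $k$), and write $L_{k'}\subseteq(\Sigma^{k'})^*$ for the set of sequences whose \unzip\ is a bad prefix of $\safety$; this is the language that $\O$ samples via membership queries at the current arity. When $k'=k$, $L_k$ is exactly the language of the minimal deterministic tight permutation-complete $k$-bad-prefix automaton for $\safety$, hence has Myhill--Nerode index $n$.

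First I would carry out the standard Myhill--Nerode step: if $s,s'\in S$ satisfy $\row(s)\neq\row(s')$, then some $e\in E$ witnesses $\Delta(s\cdot e)\neq\Delta(s'\cdot e)$, meaning that exactly one of $s\cdot e$ and $s'\cdot e$ belongs to $L_{k'}$. Hence $s$ and $s'$ lie in distinct Myhill--Nerode classes of $L_{k'}$, so $|\row(S)|$ is at most the index of $L_{k'}$. For $k'=k$ this already yields the claim.

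The main obstacle, and the only point that is genuinely new compared to classical \lstar, is bounding the index of $L_{k'}$ by $n$ for intermediate arities $k'<k$. To this end I would introduce a padding map $\textsf{pad}\colon(\Sigma^{k'})^*\to(\Sigma^{k})^*$ acting letterwise by $(v_1,\ldots,v_{k'})\mapsto(v_1,\ldots,v_{k'},v_{k'},\ldots,v_{k'})$, where the last coordinate is replicated to reach arity $k$. Two properties then need to be checked: $\textsf{pad}$ is a concatenation homomorphism, and $\unzip(\textsf{pad}(s))=\unzip(s)$, so $s\in L_{k'}$ iff $\textsf{pad}(s)\in L_k$. From these it follows that any distinguishing witness $e$ for $s,s'$ in $L_{k'}$ lifts to a distinguishing witness $\textsf{pad}(e)$ for $\textsf{pad}(s),\textsf{pad}(s')$ in $L_k$. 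Hence the index of $L_{k'}$ is at most the index of $L_k$, which equals $n$, and combining with the previous step yields $|\row(S)|\leq n$ in every iteration.
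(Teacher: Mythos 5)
Your proof is correct, and at its core it is the same counting argument the paper invokes: the paper's proof is a one-line appeal to Angluin's fact that any automaton consistent with the observation table has at least $\abs{\row(S)}$ states, applied to the minimal $k$-bad-prefix automaton. What you do differently is unpack that fact into an explicit Myhill--Nerode argument (distinct rows are separated by some $e\in E$, hence lie in distinct classes of the target language) and, more importantly, add the padding homomorphism to handle iterations where the table's arity $k'$ is strictly below $k$. This extra step genuinely buys something: Angluin's lemma is stated for a fixed alphabet, and the paper's own notion of an automaton being ``consistent with'' a table requires matching alphabets, so the citation alone does not literally apply when the table is over $\Sigma^{k'}$ while the minimal automaton is over $\Sigma^{k}$. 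Your observation that $\textsf{pad}$ is a letterwise monoid homomorphism with $\unzip(\textsf{pad}(\sigma))=\unzip(\sigma)$, so that distinguishing extensions lift and the Myhill--Nerode index at arity $k'$ is bounded by the index at arity $k$, is exactly the bridge needed (and is consistent with the paper's own use of $\extend$ in Lemma~\ref{lemma:extend}). The only cosmetic caveat is the usual complete-versus-partial DFA bookkeeping in equating the index of $L_k$ with $n$, which the paper also elides.
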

 \begin{proof}
 	 The proof follows from the fact that, for any closed and consistent observation table, an automaton that is consistent with this table has at least $n$ states \cite{angluin_learning_regular_sets}. 
 \end{proof}

To prove condition 2), we need to show that after every counterexample, the set $\row(S)$ must increase by at least one. This requires us to show that 
\begin{itemize}
	\item resolving consistency and closedness of observation tables increases the size of $\row(S)$ by at least one. (Lemma~\ref{lemma:consistent} and Lemma~\ref{lemma:closedness})
	\item adding a counterexample makes the observation table $\cal O$ inconsistent or not closed (Lemma~\ref{lemma:addCEX})
	\item extending the observation table from an arity $k'$ to \\ $k''>k'$ preserves the size of $\row(S)$ (Lemma~\ref{lemma:extend})
\end{itemize}

We define a witness of inconsistency in $\O$ as a triple $(s, t, a) \in S \times S\times \Sigma$ such that $\row(s) = \row(t)$ and $\row(s\cdot a) \neq \row(t\cdot a)$ and a witness of non-closedness in $\O$ as a word $w \in S\cdot \Sigma$ such that $\row(w) \neq \row(s)$ for all $s \in S$. 

The proofs for the following three lemmas are given in Dana Angluin's termination proof for $L^*$\cite{angluin_learning_regular_sets}.


\begin{lemma}
	\label[lemma]{lemma:consistent}
	Let $\mathcal{O}=(S,E, \Delta)$ be an inconsistent observation table over $\Sigma$ with a witness $(s, t, a)$. Resolving this witness increases the size of $\row(S)$.  
\end{lemma}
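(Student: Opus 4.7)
The plan is to use the standard argument from Angluin's original termination proof: resolving an inconsistency strictly refines the row-equivalence relation on $S$. I will proceed as follows.

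First, I unpack what the witness gives us. Since $(s,t,a)$ is a witness of inconsistency, we have $\row(s) = \row(t)$ but $\row(s \cdot a) \neq \row(t \cdot a)$. By definition of $\row$, the latter inequality means there exists some $e \in E$ such that $\Delta(s \cdot a \cdot e) \neq \Delta(t \cdot a \cdot e)$. The standard way to resolve such an inconsistency is to add the distinguishing suffix $a \cdot e$ to the set of separating sequences, obtaining $E' = E \cup \{a \cdot e\}$ and filling in the corresponding new entries via membership queries. Let $\row'$ denote the row function induced by the new table $\mathcal{O}' = (S, E', \Delta)$.

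Next, I would show two things about the effect of the refinement on $\row(S)$. (i) The refinement preserves all prior distinctions: whenever $s_1, s_2 \in S$ satisfy $\row(s_1) \neq \row(s_2)$, there is some $e'' \in E$ where they differ, and since $E \subseteq E'$ they still differ under $\row'$. Hence the canonical map sending $\row(u) \mapsto \row'(u)$ is well-defined and injective from $\row(S)$ into $\row'(S)$. (ii) The refinement creates at least one new distinction: the pair $s,t$ now satisfies $\row'(s)(a \cdot e) = \Delta(s \cdot a \cdot e) \neq \Delta(t \cdot a \cdot e) = \row'(t)(a \cdot e)$, so $\row'(s) \neq \row'(t)$, whereas $\row(s) = \row(t)$. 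Therefore $|\row'(S)| \geq |\row(S)| + 1$.

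Combining (i) and (ii) yields the claim that resolving the witness strictly increases the number of distinct rows, i.e., $|\row'(S)| > |\row(S)|$. I do not foresee a real obstacle here: the proof is essentially bookkeeping about how enlarging $E$ refines the equivalence relation on $S$ induced by $\row$. The only subtle point to state cleanly is that $a \cdot e$ is added to $E$ as a single suffix (not as two separate elements), so the new column directly witnesses $\row'(s) \neq \row'(t)$; this is exactly the separating-suffix trick used in Angluin's $L^*$, and it carries over verbatim since the resolution procedure only touches the $E$-component of the table.
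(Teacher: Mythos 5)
Your proof is correct and is exactly the standard Angluin argument that the paper itself defers to (it cites Angluin's termination proof rather than spelling it out): add the distinguishing suffix $a\cdot e$ to $E$, observe that enlarging $E$ preserves all existing row distinctions, and note that the new column separates $\row(s)$ from $\row(t)$, so $|\row(S)|$ strictly increases. No gaps; the only minor point worth adding is that $E \cup \{a\cdot e\}$ remains suffix-closed because $E$ was, which keeps the table well-formed.
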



\begin{lemma}
\label[lemma]{lemma:closedness}
Let $\mathcal{O}=(S,E, \Delta)$ be an observation table over $\Sigma$ that is not closed with a witness $w \in S\cdot \Sigma$. Resolving this witness increases the size of $\row(S)$.  
\end{lemma}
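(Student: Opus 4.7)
The plan is to follow the standard L$^*$ argument for resolving non-closedness, adapted to the current setting. Concretely, if $w \in S \cdot \Sigma$ is a witness of non-closedness then by definition $\row(w) \neq \row(s)$ for every $s \in S$, i.e., $\row(w) \notin \row(S)$. The canonical way to repair the table is to move $w$ into the set of accessing sequences: replace $S$ by $S' = S \cup \{w\}$, extend $S' \cdot \Sigma$ correspondingly, and fill in the new cells via membership queries. Since $w$ was already an element of $S \cdot \Sigma$, all entries $\Delta(w \cdot e)$ for $e \in E$ are already determined by the previous table, so $\row(w)$ does not change.

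Next I would argue that $|\row(S')| = |\row(S)| + 1$. Every element of $\row(S)$ is still present in $\row(S')$, because for each $s \in S$ the function $e \mapsto \Delta(s \cdot e)$ is unchanged. Moreover, $\row(w)$ is now an element of $\row(S')$ because $w \in S'$. By the witness property, $\row(w) \neq \row(s)$ for all $s \in S$, so $\row(w)$ is a genuinely new function in the set, giving $|\row(S')| \geq |\row(S)| + 1$. This is precisely the claim.

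The only thing that requires a brief check is that the repair preserves the well-formedness of the observation table $\mathcal{O}' = (S', E, \Delta')$, in particular that $S'$ remains prefix-closed. This holds because $w \in S \cdot \Sigma$, so $w = s \cdot a$ for some $s \in S$ and $a \in \Sigma$, and all proper prefixes of $w$ are prefixes of $s$, hence already contained in $S \subseteq S'$. No other structural property of the table is affected, so after issuing the finitely many new membership queries for entries in $(\{w\} \cdot \Sigma) \cdot E$, the update is complete and the size of $\row(S)$ has strictly increased, as required.
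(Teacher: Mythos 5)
Your proof is correct and follows the standard Angluin argument, which is exactly what the paper relies on (it defers the proof of this lemma to Angluin's termination proof for L$^*$ rather than spelling it out). The key steps — observing that $\row(w)\notin\row(S)$, adding $w$ to $S$ so that $|\row(S)|$ strictly increases, and checking that prefix-closedness is preserved — are all present and accurate.
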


The next lemma states that amending an observation table with a counterexample results in an inconsistent observation table.

\begin{lemma}
\label[lemma]{lemma:addCEX}
  Let $\mathcal O = (S,E,\Delta)$ be a consistent observation table over an alphabet $\Sigma$. Let $w \in \Sigma^*$, be a counterexample resulting from an equivalence check or from a check of permutation-completeness. 
 Then the resulting observation table $\mathcal{O}' = (S',E',\Delta')$ that results from $\mc{O}$ by amending it with $w$ is either inconsistent or not closed. 
 
\end{lemma}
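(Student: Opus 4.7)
The plan is to argue by contradiction: assume the amended table $\mathcal{O}'=(S',E',\Delta')$ is both closed and consistent, build the DFA $\A'$ from it in the usual way, and derive that $\A'$ classifies $w$ in the same way as the refuted conjecture $\A$ built from $\mathcal{O}$. Since $\mathcal{O}'$ records the correct classification of $w$ (obtained by a membership query), $\A'$ must classify $w$ correctly, whereas $\A$ was known to classify $w$ incorrectly---the required contradiction.

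The first observation is that amending $\mathcal{O}$ with $w$ extends $S$ by the prefixes of (a representation of) $w$ while leaving $E$ unchanged. Hence for every $s\in S$, $\row_{\mathcal{O}}(s)=\row_{\mathcal{O}'}(s)$, and every state of $\A$ persists as a state of $\A'$. The crux is then to show, by induction on $|p|$, the following invariant: for every prefix $p$ of the added counterexample, $\row_{\mathcal{O}'}(p)\in\row_{\mathcal{O}}(S)$. The base case $p=\epsilon$ is immediate since $\epsilon\in S$. For the inductive step, write $p=p'\cdot a$. By the induction hypothesis there is some $s\in S$ with $\row_{\mathcal{O}'}(p')=\row_{\mathcal{O}}(s)$. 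Consistency of $\mathcal{O}'$ (together with $p',s\in S'$ and $p'\cdot a,s\cdot a\in S'\cup S'\!\cdot\!\Sigma$) gives $\row_{\mathcal{O}'}(p'\cdot a)=\row_{\mathcal{O}'}(s\cdot a)$. Because $E$ is unchanged, $\row_{\mathcal{O}'}(s\cdot a)=\row_{\mathcal{O}}(s\cdot a)$; and because $\mathcal{O}$ was already closed, $\row_{\mathcal{O}}(s\cdot a)=\row_{\mathcal{O}}(s'')$ for some $s''\in S$. Chaining these equalities yields $\row_{\mathcal{O}'}(p)\in\row_{\mathcal{O}}(S)$, completing the induction.

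With the invariant in hand, I can compare the runs of $\A$ and $\A'$ on $w$. Both automata use the transition rule $\delta(r,a)=\row(s\cdot a)$ for some representative $s$ of the row $r$; whenever $r\in\row_{\mathcal{O}}(S)$ we may choose the same representative $s\in S\subseteq S'$ in both automata, and the row $\row(s\cdot a)$ is the same in $\mathcal{O}$ and $\mathcal{O}'$ because $E$ is unchanged. By the invariant, every state visited by $\A'$ while reading $w$ lies in $\row_{\mathcal{O}}(S)$, and by closedness of $\mathcal{O}$ the corresponding transition is also defined in $\A$. Hence $\A$ and $\A'$ traverse the same sequence of states on $w$ and agree on its acceptance. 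Finally, a standard consequence of $\mathcal{O}'$ being closed and consistent is that the induced automaton $\A'$ agrees with $\Delta'$ on every string in its domain, so $\A'$ accepts $w$ iff $\Delta'(w)=1$, i.e.\ iff $w$ is (a representation of) a bad prefix. This forces $\A$ to give the same---correct---answer, contradicting the fact that $w$ was returned as a counterexample against $\A$ by the equivalence or permutation-completeness check.

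The main obstacle is the inductive step for the invariant, since it is the only place where both the consistency of the \emph{new} table $\mathcal{O}'$ and the closedness of the \emph{old} table $\mathcal{O}$ enter, together with the fact that $E$ is preserved. The argument is alphabet-agnostic, so it transfers to the hyperproperty setting by reading $\Sigma$ as $\Sigma^{k'}$ for the current arity; counterexamples from the permutation-completeness check behave identically, because they too are words whose true membership value is recorded in $\Delta'$ and differs from the value produced by $\A$.
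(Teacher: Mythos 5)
Your proof is correct. The paper in fact gives no proof of this lemma at all --- it defers to Angluin's termination argument for \lstar --- so what you have written is a self-contained reconstruction rather than a parallel to anything in the text. Your route differs slightly in packaging from Angluin's: she derives the contradiction via the lemma that any acceptor consistent with the table and having at most $|\row(S)|$ states is isomorphic to the one built from the table, whereas you chase rows directly, proving by induction that if $\mathcal{O}'$ were closed and consistent then every prefix $p$ of $w$ would satisfy $\row_{\mathcal{O}'}(p)\in\row_{\mathcal{O}}(S)$, so the runs of $\A$ and $\A'$ on $w$ coincide and $\A$ would have to classify $w$ correctly. The inductive step is sound: consistency of $\mathcal{O}'$ applied to the pair $(p',s)\in S'\times S'$ forces $\row_{\mathcal{O}'}(p'\cdot a)=\row_{\mathcal{O}'}(s\cdot a)$, the latter is an old row because $E$ is unchanged and $\Delta'$ extends $\Delta$, and closedness of the \emph{old} table puts it in $\row_{\mathcal{O}}(S)$. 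The only points worth making explicit in a final write-up are (i) that $\epsilon\in E$ so that acceptance of a state $\row(s)$ is determined by the row itself, which is what lets you transfer ``correct classification of $w$'' from $\Delta'$ to $\A'$ and then to $\A$, and (ii) that a counterexample from the permutation-completeness check is indeed misclassified by $\A$ (it is a rejected permutation of an accepted representation, hence a true bad-prefix representation rejected by $\A$), which you assert but could justify in one line.
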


Obtaining a counterexample containing $k'$ many traces, which can not be represented by the learner's current arity $k_L$, i.e., $k' > k_L$, the observation table needs to be extended to $k'$ before adding the counterexample. We want to achieve this extension without losing information obtained by earlier queries, i.e., without changing the size of $\row(S)$. 
Therefore, we extend each representation in $S\cup E$ to a representation in $\Sigma^{k'}$ by repeating its last position. After this extension every representation still represents the same set of traces, since we only repeat one of the traces in trace set. The procedure of extending an observation table from arity $k_L$ to a larger arity $k$ is given in  \Cref{algo:extend}. The function $\extend$ is defined as follows: For a tuple $(t_1,\dots,t_{k_L}) \in \Sigma^{k_L}$, $\extend(t_1,\dots,t_{k_L},k')=(t_1,\dots,t_{k_L},\dots,t_{k'})$, where $t_i = t_{k_L}$ for all $k_L < i \leq k'$.

\begin{algorithm}[h]
	\caption{\textsc{Extend}}
	\label{algo:extend}
	\hspace*{\algorithmicindent} \textbf{Input} Observation Table $\O = (S,E,\Delta)$ over $\Sigma^{k_L}$, $k'> k_L$ \\
	\hspace*{\algorithmicindent} \textbf{Output} Observation table $\O'$ over $\Sigma^{k'}$ \vfill
	\begin{algorithmic}[1]
		\STATE $\O' = (S',E',\Delta') =  (\{\epsilon\}, \{\epsilon\}, \{((\epsilon, \epsilon), \Delta(\epsilon, \epsilon))\})$
		\FOR {$s \in S$}
			\STATE $S' = S' \cup \{\extend(s,k')\}$
		\ENDFOR
		\FOR{$e \in E$}
			\STATE $E' = E' \cup \{\extend(e,k')\}$
		\ENDFOR
		\STATE fill $\Delta'$ for $(S'\cup S'\cdot \Sigma^{k'})\cdot E'$ using Membership queries
		\RETURN $\O'$.
	\end{algorithmic}
\end{algorithm}

\begin{lemma}
\label[lemma]{lemma:extend}
Let $\mathcal O = (S,E,\Delta)$ be an observation table over an alphabet $\Sigma^{k'}$ for some $k'\in \nats$.
Let $\mathcal {O}' = (S',E',\Delta')$ be the observation table obtained by applying {\sc Extend} (Algorithm~\ref{algo:extend}) to $\mathcal O$ for some $k''>k'$. Then, $|\row(S)| = |\row(S')|$.
\end{lemma}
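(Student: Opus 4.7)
The plan is to show that the map $\phi : S \to S'$ given by $\phi(s) = \extend(s, k'')$ is a bijection that preserves the row equivalence relation induced by the observation tables. Since $|\row(S)|$ is the number of equivalence classes of $S$ under the relation $s_1 \sim s_2$ iff $\row(s_1) = \row(s_2)$, establishing a bijection that respects this equivalence gives the desired equality.

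First, I would make the basic observation that $\extend$ distributes over concatenation: if we lift $\extend(\cdot, k'')$ to words by applying it position-wise, then $\extend(u \cdot v, k'') = \extend(u, k'') \cdot \extend(v, k'')$ for any $u, v \in (\Sigma^{k'})^*$. Next, I would show the key semantic invariance: for every word $w \in (\Sigma^{k'})^*$, we have $\unzip(\extend(w, k'')) = \unzip(w)$. This follows directly from the definition of $\extend$, which simply duplicates the $k'$th trace in each position to fill up to arity $k''$; the resulting trace set is identical because sets do not count multiplicities. Consequently, a representation $w$ is a bad prefix of $\safety$ if and only if $\extend(w, k'')$ is, so any minimal-adequate teacher's membership oracle satisfies $\Delta'(\extend(w, k'')) = \Delta(w)$ on the relevant arguments.

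Combining these two facts, for any $s \in S \cup S \cdot \Sigma^{k'}$ and $e \in E$,
\[
\Delta'(\extend(s, k'') \cdot \extend(e, k'')) = \Delta'(\extend(s \cdot e, k'')) = \Delta(s \cdot e).
\]
Since Algorithm~\ref{algo:extend} sets $S' = \{\extend(s, k'') \mid s \in S\}$ and $E' = \{\extend(e, k'') \mid e \in E\}$, the row of $\extend(s, k'')$ in $\O'$, viewed as a function from $E'$ to $\{0,1\}$, carries exactly the same bit-pattern as the row of $s$ in $\O$ viewed as a function from $E$ to $\{0,1\}$, once we identify $e \in E$ with $\extend(e, k'') \in E'$. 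Hence $\row_{\O'}(\phi(s_1)) = \row_{\O'}(\phi(s_2))$ if and only if $\row_\O(s_1) = \row_\O(s_2)$.

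Finally, the map $\phi$ is injective because $\extend(\cdot, k'')$ is injective on tuples (the first $k'$ components are preserved), and by construction of Algorithm~\ref{algo:extend} it is surjective onto $S'$. Therefore $\phi$ induces a bijection between the equivalence classes of $S$ and those of $S'$, yielding $|\row(S)| = |\row(S')|$. The only subtle point is the semantic invariance step, which relies on interpreting membership queries via the trace set $\unzip(\cdot)$ rather than the tuple representation; once this is made explicit, the rest is bookkeeping.
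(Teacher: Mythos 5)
Your proof is correct and rests on the same key observation as the paper's: $\extend$ duplicates the last trace component, so $\unzip(\extend(w,k''))=\unzip(w)$ and membership answers are unchanged, hence no two rows can be merged or split. The paper phrases this as a proof by contradiction (assuming two rows collapse and deriving a changed membership answer), whereas you package it as a direct bijection on rows; the content is the same, and your version is, if anything, the more complete write-up.
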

\begin{proof}
	Prove by contradiction. Assume $\abs{\row(S)} \neq \abs{\row(S')}$. We only treat one direction the other direction can be proven equivalently. Let $\abs{\row(S)} >\abs{\row(S')}$, i.e., there exist $s,t \in S$ and $e \in E$ such that the following (in-)equalities hold 
	\begin{align*}\Delta(s, e) &\neq \Delta (t, e)\\
	\Delta(\extend_{k''}(s), \extend_{k''}(e)) &= \\
	& \Delta(\extend_{k''}(t), \extend_{k''}(e)).
	\end{align*}
	W.l.o.g., let $\Delta(\extend_{k''}(t), \extend_{k''}(e))\neq \Delta (t, e)$, i.e., the result of a membership queries for $\extend_{k''}(t\cdot e)$ and $t\cdot e$ differ. Since $\extend_{k''}$ does not alter the represented set of words, such words $s, t$, and $e$ cannot exist. A contradiction to our assumption. 
\end{proof}

Using the lemmas above, we are now able to prove the termination of \lstarhyper.
\begin{theorem}
	\label{theo:progress}
	L$^*_{Hyper}$ terminates after at most $n$ equivalence queries, where $n$ is the size of the minimal deterministic tight permutation-complete $k$-bad-prefix automaton for $\safety$. 
\end{theorem}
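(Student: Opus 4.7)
My plan is to follow the structure of Angluin's original termination argument for \lstar, using the four lemmas (\Cref{lemma:boundedness,lemma:consistent,lemma:closedness,lemma:addCEX,lemma:extend}) as black boxes. The key quantity to track across the entire run of \lstarhyper is $|\row(S)|$, the number of distinct rows in the observation table, since this equals the number of states of any conjectured automaton built from $\mathcal{O}$.

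First I would argue that \lstarhyper only issues an equivalence query when the current observation table $\mathcal{O}$ is closed and consistent; otherwise the inner loop performs membership queries to repair closedness and consistency. Each such repair, by \Cref{lemma:consistent} and \Cref{lemma:closedness}, strictly increases $|\row(S)|$. By \Cref{lemma:boundedness}, we always have $|\row(S)| \le n$, so the inner loop can only run finitely often between two consecutive equivalence queries, and the conjecture built at each equivalence query is well-defined.

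Next I would show that every equivalence query that does \emph{not} return ``yes'' strictly increases $|\row(S)|$ before the next equivalence query is issued. A negative answer produces a counterexample $C$. If $C$'s arity is at most the current learner arity $k_L$, then by \Cref{lemma:addCEX} adding $C$ (and its prefixes) to $\mathcal{O}$ makes the table either inconsistent or not closed, so the inner loop must perform at least one repair step and hence raises $|\row(S)|$ by at least one. If $C$'s arity is $k'' > k_L$, then by the minimal-adequate-teacher assumption (\Cref{def:min_ade_teacher}) $k''$ is still at most the target arity $k$, and we first invoke \textsc{Extend} (\Cref{algo:extend}); by \Cref{lemma:extend} this preserves $|\row(S)|$. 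Adding the extended counterexample then falls back into the case covered by \Cref{lemma:addCEX}, again forcing a strict increase of $|\row(S)|$. The same reasoning applies verbatim when the counterexample arises from the permutation-completeness check, since that check also returns a word over the current alphabet that is mis-classified by the conjecture.

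Finally I would combine the two observations: $|\row(S)|$ is bounded above by $n$ and increases by at least one with every equivalence query that does not terminate. Starting from $|\row(\{\epsilon\})|=1$, after at most $n-1$ negative equivalence queries the size reaches $n$, and the next equivalence query must return ``yes'' (because a closed and consistent table of size $n$ with the correct membership data yields exactly the minimal tight permutation-complete $k$-bad-prefix automaton for $\safety$, which is canonical by the earlier theorem). Thus \lstarhyper terminates after at most $n$ equivalence queries. The main conceptual obstacle is the arity-change step: I must be careful that switching from alphabet $\Sigma^{k_L}$ to $\Sigma^{k''}$ does not cause an implicit ``reset'' of the observation table. \Cref{lemma:extend} is precisely the statement needed to bridge this gap, since it guarantees that the accumulated distinguishing power of the table (captured by $|\row(S)|$) is preserved under arity extension, so the monotone-progress argument survives across arity increases.
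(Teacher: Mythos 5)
Your proposal is correct and follows essentially the same route the paper takes: it combines the boundedness of $|\row(S)|$ by $n$ (Lemma~\ref{lemma:boundedness}) with the strict-progress argument from Lemmas~\ref{lemma:consistent}, \ref{lemma:closedness}, \ref{lemma:addCEX}, and~\ref{lemma:extend}, treating the arity extension as row-count preserving so that each failed equivalence query forces a strict increase. This is exactly the decomposition the paper sets up before stating the theorem, so no further comparison is needed.
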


\subsection{Complexity of \lstarhyper}

In the last section we showed that for a minimal-adequate teacher, \lstarhyper  terminates after at most $n$ equivalence queries, where $n$ is the size of the minimal bad-prefix automaton for the target $k$-safety hyperproperty $\safety$. Preceding every equivalence query, the learner checks the consistency and closedness of the observation table, and after the equivalence check the learner may need to check the permutation-completeness of the conjectured automaton or extend the observation table to a higher arity, and then amend the observation table with a new counterexample.

After each operation, the number of rows in the observation table is increased by at least one, but cannot increase beyond $n$, as we have shown in Lemma~\ref{lemma:boundedness}. This means that we can perform at most $n$ many of these operations. 

For $L^*$, Angluin showed that the learning algorithm is polynomial in $n$ and in the length of the largest returned counterexample. This complexity also holds for \lstarhyper. The runtime of \lstarhyper also depends on the goal arity $k$. The runtime complexity in $k$ can be derived by studying the runtime of the procedures for checking permutation-completeness and extending the observation table

\begin{lemma}
	\label{lemma:runtimeClosedConsistent}
	\cite{angluin_learning_regular_sets} Checking the observation table for consistency and closedness can be done in time polynomial in the size of the observation table.
\end{lemma}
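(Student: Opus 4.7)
The plan is to reduce both checks to a polynomial number of row-comparisons, each of which costs time linear in $|E|$. Recall that the observation table $\O = (S, E, \Delta)$ stores, for every $u \in (S \cup S \cdot \Sigma^{k'})$ and every $e \in E$, a single bit $\Delta(u \cdot e)$. Its total size is therefore $N := |S \cup S \cdot \Sigma^{k'}| \cdot |E|$, and a single row $\row(u)$ is a vector in $\{0,1\}^{|E|}$, which can be compared with another row in time $O(|E|)$.

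For the closedness test, I would iterate over every $t \in S \cdot \Sigma^{k'}$ and, for each, search the set $\{\row(s) \mid s \in S\}$ for a match with $\row(t)$. This amounts to at most $|S \cdot \Sigma^{k'}| \cdot |S|$ row comparisons, hence time $O(|S \cdot \Sigma^{k'}| \cdot |S| \cdot |E|)$, which is polynomial in $N$. (Sorting or hashing the rows of $S$ once would even drop this to nearly linear, but the naive bound already suffices.) If a $t$ is found with no matching row in $S$, it is returned as a witness of non-closedness.

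For the consistency test, I would enumerate all pairs $s, t \in S$ and, whenever $\row(s) = \row(t)$, check for every $a \in \Sigma^{k'}$ whether $\row(s \cdot a) = \row(t \cdot a)$. The first phase uses at most $|S|^2$ row comparisons and the second at most $|S|^2 \cdot |\Sigma^{k'}|$ row comparisons, each in time $O(|E|)$; in total $O(|S|^2 \cdot |\Sigma^{k'}| \cdot |E|)$, again polynomial in $N$. A failing triple $(s,t,a)$ is returned as a witness of inconsistency.

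The only subtlety is the dependence on the alphabet $\Sigma^{k'}$: while the alphabet grows exponentially in the current arity $k'$, the rows $s \cdot a$ for $a \in \Sigma^{k'}$ are already stored in the table as part of $S \cdot \Sigma^{k'}$, so the enumeration costs are bounded by the number of existing rows, i.e., by $N$ itself. Hence both checks run in time polynomial in the size of the observation table, as claimed; this essentially repeats Angluin's analysis for \lstar and transfers unchanged to our setting.
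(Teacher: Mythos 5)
Your argument is correct and is exactly the standard analysis that the paper defers to by citing Angluin: the lemma is stated without proof in the paper, and your row-comparison accounting (closedness in $O(|S\cdot\Sigma^{k'}|\cdot|S|\cdot|E|)$, consistency in $O(|S|^2\cdot|\Sigma^{k'}|\cdot|E|)$, both bounded by a polynomial in the number of stored table entries) is the intended justification. Your remark that the exponential alphabet $\Sigma^{k'}$ is harmless because the one-letter extensions are already rows of the table is a worthwhile clarification in this hyperproperty setting, and is consistent with how the paper accounts for the $k'$-dependence elsewhere (e.g., in the analysis of \textsc{Extend}).
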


The complexity of Algorithm~\ref{algo:extend} is given in the following lemma.

\begin{lemma}
	\label{lemma:runtimeExtend}
	Extending an observation table over arity $k$ to an equivalent observation table of arity $k'>k$ can be done in time polynomial in the size of the observation table and exponential in $k'$. 
\end{lemma}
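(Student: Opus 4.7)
My plan is to walk through Algorithm~\ref{algo:extend} line by line and bound the work done at each step in terms of the sizes $|S|$, $|E|$ of the original observation table, the alphabet $|\Sigma|$, and the new arity $k'$. The overall strategy is to separate the cost into (i) copying and padding the existing accessing and separating sequences, which is cheap, and (ii) refilling the table of membership answers, which is where the exponential dependence on $k'$ must come from.

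First I would handle lines~2--6. For every $s \in S$ and every $e \in E$ the operation $\extend(\cdot,k')$ simply appends at each position $k'-k$ copies of the last component of a tuple. Since each sequence in $S \cup E$ has length at most the length of the longest word in the original table, padding a single sequence takes time polynomial in $|s|\cdot k'$ (resp.\ $|e|\cdot k'$), and we do this $|S|+|E|$ times. Hence the construction of $S'$ and $E'$ runs in time polynomial in the size of $\mc{O}$ and linear in $k'$.

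The main cost lies in line~8, where we fill $\Delta'$ on the domain $(S' \cup S'\cdot \Sigma^{k'})\cdot E'$. Here $|S'| = |S|+1$, $|E'|=|E|+1$, and the new transition block $S'\cdot \Sigma^{k'}$ contains $|S'|\cdot|\Sigma|^{k'}$ many elements, since $|\Sigma^{k'}|=|\Sigma|^{k'}$. Consequently the number of membership queries fired in this step is
\[
    \bigl(|S'| + |S'|\cdot |\Sigma|^{k'}\bigr)\cdot |E'| \;=\; O\bigl(|S|\cdot|E|\cdot |\Sigma|^{k'}\bigr),
\]
i.e.\ polynomial in the size of $\mc{O}$ and exponential in $k'$ through the factor $|\Sigma|^{k'}$. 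Each individual membership query has cost polynomial in the length of the queried word, which after padding is at most the original word length times $k'/k$; this contributes only a polynomial factor.

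Summing the contributions, the total running time of {\sc Extend} is dominated by line~8 and is bounded by a polynomial in $|\mc O|$ multiplied by $|\Sigma|^{k'}$, which is exactly the claim. The only subtle point, and the one I would be most careful to state explicitly, is that the exponential blow-up is unavoidable because the new one-letter extensions $S'\cdot \Sigma^{k'}$ must contain an entry for every $k'$-tuple of letters; everything else in the algorithm is a straightforward polynomial-time transformation of the existing table.
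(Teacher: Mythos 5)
Your proof is correct and follows essentially the same route as the paper's: both count the $\bigl|S' \cup S'\cdot \Sigma^{k'}\bigr|\cdot |E'|$ membership queries in the refill step and observe that the only exponential factor is $|\Sigma|^{k'}$ coming from the one-letter extensions. The extra detail you give on the padding cost is harmless, and the minor bookkeeping difference ($|S'|=|S|+1$ versus the paper's $|S'|=|S|$, since $\epsilon$ is already in $S$) does not affect the bound.
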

\begin{proof}
This follows from the runtime of the procedure {\sc Extend}, where we have to perform $|S' \cup S'\cdot \Sigma^{k'}|\cdot E'$ membership queries. As $|S'|=|S|$ and $|E'|=|E|$, the runtime of {\sc Extend} is polynomial in $S$ and $E$ and exponential in $k'$. 
\end{proof}

A $k$-bad-prefix automaton, that is not permutation-complete, accepts one representation of every bad prefix but not every representation of it. Algorithm~\ref{algo:complete} provides a procedure for deciding permutation-completeness.  For $\A = (Q, \Sigma^{k'}, q_0, F, \delta)$ a $k$-bad-prefix automaton and $\varsigma: \{1, \dots, k'\} \rightarrow \{1,\dots, k'\}$, we define $A^\varsigma = (Q, \Sigma^{k'}, q_0, F, \delta^\varsigma)$ where $\delta^\varsigma(s, (a_1, \dots, a_{k'}) = \delta(s, (a_{\varsigma(1)}, \dots, a_{\varsigma(k')}))$. If $\A$ is permutation-complete, then $L(\A^\varsigma) \subseteq L(\A)$ for every $\varsigma$.

\begin{algorithm}
	\caption{\textsc{isComplete}}
	\hspace*{\algorithmicindent} \textbf{Input} $k$-bad-prefix automaton $\A$ over $\Sigma^{k}$ \\
	\hspace*{\algorithmicindent} \textbf{Output} $\A$ permutation-complete, or  \\
	\hspace*{\algorithmicindent} \quad  $w \in \Sigma^{k}$ s.t.~ $w \not \in L(\A)$ \vfill
	\begin{algorithmic}[1]
		\label{algo:complete}
		\FOR {$\varsigma \in \{1,\dots,k\}^{\{1, \dots , k\}}$ }
		\IF {$L(\A^{\varsigma}) \not\subseteq L(\A)$}
			\RETURN $w \in ({L(\A^{\varsigma} - \A)})$
		\ENDIF
		\ENDFOR
		\RETURN Closed
	\end{algorithmic}
\end{algorithm}

\begin{lemma}
	\label{lemma:runtimeCompleteness}
	Checking whether a $k$-bad-prefix automaton $\A$ is permutation-complete for a $k$-safety hyperproperty $\safety$ can be done in time exponential in $k$ and polynomial in $\A$.
\end{lemma}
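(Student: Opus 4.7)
The plan is to analyze the runtime of Algorithm~\ref{algo:complete} (\textsc{isComplete}) directly, by bounding the number of iterations of the outer loop and the cost of each iteration.

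First I would count the iterations. The outer loop ranges over all functions $\varsigma \in \{1,\dots,k\}^{\{1,\dots,k\}}$, of which there are exactly $k^k = 2^{k \log k}$. This is the source of the exponential dependence on $k$, and it is the only place in the procedure where $k$ enters super-polynomially.

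Next I would bound the cost of one iteration. Given $\A = (Q, \Sigma^k, q_0, F, \delta)$ and a fixed $\varsigma$, the automaton $\A^\varsigma$ has the same state set and accepting set as $\A$, and its transition function $\delta^\varsigma(q,(a_1,\dots,a_k)) = \delta(q,(a_{\varsigma(1)},\dots,a_{\varsigma(k)}))$ is obtained by a relabelling of transitions; it can be constructed on the fly in time polynomial in $|\A|$. The inclusion test $L(\A^\varsigma) \subseteq L(\A)$ reduces to emptiness of $L(\A^\varsigma) \cap \overline{L(\A)}$, which, since $\A$ is deterministic, can be decided via the standard product construction and reachability analysis in time polynomial in $|\A|$. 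If a witness word $w$ is discovered, it can be returned directly from the product reachability certificate.

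Combining the two bounds yields an overall runtime of $k^k \cdot \mathit{poly}(|\A|)$, which is exponential in $k$ and polynomial in $|\A|$, as claimed. Correctness of the procedure is immediate from the observation that $\A$ is permutation-complete exactly when $L(\A^\varsigma) \subseteq L(\A)$ holds for every $\varsigma$: by definition, permutation-completeness requires that, for every accepted word, every permutation of its letters is also accepted, and the family $\{\A^\varsigma\}$ enumerates precisely these permutation views. The only subtle point — and the one I would treat most carefully — is to justify that enumerating \emph{all} $k^k$ functions (including non-bijective ones, reflecting the convention of permutations with repetition used in Theorem~\ref{theo:constrpermcomplete}) is what the definition demands; restricting to bijections would suffice for a purely ``permutation'' notion but would be inconsistent with how permutation-completeness is used elsewhere in the paper.
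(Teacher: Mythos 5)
Your proposal is correct and follows essentially the same route as the paper, which leaves the lemma's proof implicit in Algorithm~\ref{algo:complete}: enumerate the $k^k$ functions $\varsigma$, build each $\A^\varsigma$ by relabelling transitions, and decide the inclusion $L(\A^\varsigma)\subseteq L(\A)$ via product-and-reachability, which is polynomial because the conjectured automaton in \lstarhyper is a DFA. Your remark that the enumeration must cover all functions (permutations with repetition), not just bijections, matches the convention fixed in the footnote to Theorem~\ref{theo:constrpermcomplete}.
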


Building on Lemmas~\ref{lemma:runtimeClosedConsistent}, \ref{lemma:runtimeExtend}  and \ref{lemma:runtimeCompleteness}  the overall complexity of the algorithm L$^*_{Hyper}$ is given by the following theorem.

\begin{theorem}[Learning $k$-Bad-Prefix Automata]
	\label{theo:learning-k-safety_queries}
	Provided a minimal-adequate teacher, L$^*_{Hyper}$ learns a minimal, deterministic, tight and permutation-complete bad-prefix automaton $\A$ for a $k$-safety property $\safety$ over $\Sigma$ in time polynomial in $n$ and $m$, where $n$ is the size $\A$, $m$ is the length of the longest counterexample provided, and in time exponential in $k$.
\end{theorem}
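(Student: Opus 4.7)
The plan is to combine the termination bound of \Cref{theo:progress} with per-operation complexity bounds from Lemmas~\ref{lemma:runtimeClosedConsistent}, \ref{lemma:runtimeExtend}, and \ref{lemma:runtimeCompleteness}, while keeping tight control over the size of the observation table. First I would bound the table size throughout the run: by Lemma~\ref{lemma:boundedness}, $\abs{\row(S)} \le n$ at all times, and every accessing sequence added to $S$ is (a prefix of) a counterexample of length at most $m$; likewise $E$ is extended only by suffixes of counterexamples. Hence $\abs{S}$ and $\abs{E}$ are $O(n\cdot m)$, and the total number of table entries is $O(n \cdot m \cdot \abs{\Sigma}^k)$, i.e.\ polynomial in $n$ and $m$ and exponential in $k$.

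Next I would count how often the main loop is executed. By \Cref{theo:progress} there are at most $n$ equivalence queries. Between any two equivalence queries, each closedness or consistency repair strictly enlarges $\row(S)$ (Lemmas~\ref{lemma:consistent} and \ref{lemma:closedness}), and each counterexample-insertion either extends the arity (once, and only up to $k$, by the minimal-adequate assumption in \Cref{def:min_ade_teacher}) or produces a non-closed or inconsistent table (Lemma~\ref{lemma:addCEX}). Since $\abs{\row(S)}$ is globally bounded by $n$, there are at most $n$ repair rounds in total, so the loop body is entered $O(n)$ times overall.

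I would then plug in the cost of each operation in one iteration. A closedness/consistency check is polynomial in the table size (Lemma~\ref{lemma:runtimeClosedConsistent}), an arity extension via {\sc Extend} is polynomial in the table and exponential in $k$ (Lemma~\ref{lemma:runtimeExtend}), and the permutation-completeness test is polynomial in $\A$ and exponential in $k$ (Lemma~\ref{lemma:runtimeCompleteness}). Filling the table with fresh membership answers costs one query per new entry, and the number of fresh entries is bounded by the table size. Multiplying the $O(n)$ iterations by the per-iteration cost and the table size gives overall runtime polynomial in $n$ and $m$ and exponential in $k$, as claimed.

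Finally I would argue correctness. When the algorithm halts, the table is closed and consistent, the conjectured automaton $\A$ passed an equivalence query, and $\A$ passed the permutation-completeness check; by the horizontal/vertical tightness built into the equivalence query and by the canonicity result for minimal, deterministic, tight, permutation-complete bad-prefix automata, $\A$ must coincide with the unique minimal such automaton for $\safety$. The minimal-adequate teacher guarantees that the arity used by the learner never exceeds the minimal target arity $k$, so $\A$ also has the minimal arity. The main obstacle is the careful bookkeeping between the four interacting sources of blow-up (number of equivalence queries, number of repair rounds, table size, and arity), and verifying that none of them injects a worse-than-polynomial dependence on $n$ or $m$, isolating the exponential factor entirely in $k$.
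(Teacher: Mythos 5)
Your proposal is correct and follows essentially the same route the paper takes: bound the number of equivalence queries by \Cref{theo:progress}, bound the total number of table operations by $n$ via Lemma~\ref{lemma:boundedness} together with Lemmas~\ref{lemma:consistent}, \ref{lemma:closedness}, \ref{lemma:addCEX}, and \ref{lemma:extend}, and then multiply by the per-operation costs from Lemmas~\ref{lemma:runtimeClosedConsistent}, \ref{lemma:runtimeExtend}, and \ref{lemma:runtimeCompleteness}, isolating the exponential dependence in $k$. Your writeup is in fact more explicit than the paper's own (largely implicit) argument, and the bookkeeping you supply is sound.
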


\section{Learning Automata for \hyperltl}
\label{sec:LearningHyperLTL}
The next natural step is to instantiate the $L^*_\textit{Hyper}$ framework to learn automata for \hyperltl.  We dedicate this section to instantiating the \lstarhyper framework for learning minimal permutation-complete automata for universally-safe \hyperltl formulas.

\begin{definition}[Universally-safe \hyperltl formulas]
	A universally-safe \hyperltl formula $\varphi$ is of the form $\forall \pi_1\dots \forall \pi_k.\psi$, where $\psi$ is a safety LTL formula. 
\end{definition}

In the following we show the complexity of deciding membership and equivalence queries for universally-safe \hyperltl formulas.  

\subsection{Deciding membership queries}

\begin{theorem}
	\label{theo:membership}
	Let $T$ be a set of traces, with each trace being of length $n$, and let a universally-safe \hyperltl formula $\varphi = \forall\pi_1\dots \forall \pi_{k_T}.~ \psi$. 
	The problem of deciding whether $T$ is a bad prefix for $\varphi$ can be solved in time polynomial in $n$ and space polynomial in $\abs{\psi}$ and $k_T\cdot\log(\abs{T})$.
\end{theorem}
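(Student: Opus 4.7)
The plan is to reduce the decision to a collection of trace-level bad-prefix checks for the inner safety LTL body $\psi$, enumerated over trace-variable assignments into $T$. Specifically, I would rely on the characterization that $T$ is a bad prefix for $\varphi = \forall \pi_1 \ldots \forall \pi_{k_T}.\,\psi$ if and only if some assignment $\alpha:\{1,\ldots,k_T\}\to T$ produces a zipped finite word $\zip(\alpha) \in (\Sigma^{k_T})^*$ that is a bad prefix for $\psi$, viewed as an $\omega$-regular safety property over the enlarged alphabet $\Sigma^{k_T}$. The forward direction is immediate: any infinite extension of the traces in $T$ induces infinite extensions of the $k_T$ traces chosen by $\alpha$, and zipping those extensions yields an infinite word over $\Sigma^{k_T}$ that extends $\zip(\alpha)$ and hence violates $\psi$, thereby falsifying the universal statement.

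The algorithm enumerates the $\abs{T}^{k_T}$ assignments in lexicographic order, holding only the current assignment as a vector of $k_T$ indices into $T$, which fits in $k_T\cdot\log\abs{T}$ bits. For a fixed $\alpha$, the zipped finite trace is generated on the fly: at position $j$ the letter $(t_{\alpha(1)}[j],\ldots,t_{\alpha(k_T)}[j])$ is produced by $k_T$ lookups into the traces of $T$. This letter is fed into the standard safety tableau for the syntactically-safe LTL formula $\psi$ over $\Sigma^{k_T}$. Although the tableau has $2^{O(\abs{\psi})}$ states, a single state is representable in space polynomial in $\abs{\psi}$ and each transition is computable on the fly. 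After the $n$-th step I invoke the standard polynomial-space check that the reached tableau state admits an infinite safe extension; if it does not, $\zip(\alpha)$ is a bad prefix for $\psi$ and the procedure returns yes. Summing the contributions, the overall space is $O(k_T \log \abs{T})$ for the assignment counter plus $\mathit{poly}(\abs{\psi})$ for the tableau simulator and the satisfiability test, matching the stated space bound; each assignment triggers $n$ tableau transitions plus one satisfiability call, each in polynomial time in $\abs{\psi}$, so the work per assignment is polynomial in $n$.

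The subtlest point will be the converse of the reduction, which asserts that if no single $\zip(\alpha)$ is a bad prefix for $\psi$ then $T$ is not a bad prefix for $\varphi$. I would prove this by a K\"onig-style assembly of the per-assignment witnesses into one coherent extension of the traces in $T$: working on the product of $\abs{T}^{k_T}$ copies of the safety automaton of $\psi$, each initialised after reading its corresponding $\zip(\alpha)$, I would build the joint extension one position at a time by picking a letter $(\sigma_1,\ldots,\sigma_{\abs{T}}) \in \Sigma^{\abs{T}}$ such that every $\alpha$-projection keeps the corresponding copy of the safety automaton out of a rejecting state, exploiting the universal form of $\varphi$ (which rules out any existential interaction between different traces) to argue that such a letter can always be chosen as long as each individual copy admits a safe extension.
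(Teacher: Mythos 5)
Your algorithm retraces the paper's own proof: guess or enumerate an assignment $\alpha:\{1,\dots,k_T\}\to T$, store it in $k_T\cdot\log\abs{T}$ bits, and run the standard polynomial-space bad-prefix check for the safety formula $\psi$ on the zipped word. You are also right that everything hinges on the converse of the reduction, which the paper's proof does not argue at all; you deserve credit for isolating it. Unfortunately that converse is false, and your K\"onig-style repair fails at exactly the step where you claim a joint letter ``can always be chosen as long as each individual copy admits a safe extension.'' The obstruction is not an existential interaction between traces but the fact that one and the same trace of $T$ occurs in several assignments, and the safe extensions those assignments demand of it can be mutually inconsistent. Concretely, take the syntactically safe body $\psi(\pi_1,\pi_2)=(a_{\pi_1}\rightarrow\globally\lnext b_{\pi_2})\wedge(\neg a_{\pi_1}\rightarrow\globally\lnext\neg b_{\pi_2})$ and $T=\{t_1,t_2\}$, where $t_1$ is the length-$1$ trace with letter $\{a\}$ and $t_2$ the length-$1$ trace with letter $\emptyset$. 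Each of the four words $\zip(t_i,t_j)$ has a safe infinite extension over $\Sigma^2$ (only one of the two guarded conjuncts is active per assignment), so no assignment yields a bad prefix of $\psi$ and every copy in your product starts in a state that can individually survive forever. Yet $T$ \emph{is} a bad prefix of $\forall\pi_1\forall\pi_2.\,\psi$: in any extension, the assignment $(t_1,t_2)$ forces $b$ on $\hat t_2$ from position $1$ onward while the assignment $(t_2,t_2)$ forces $\neg b$ there, so the joint construction is already stuck at its first step.

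The consequence is that a correct membership check cannot treat the assignments independently: it must decide whether a \emph{single} choice of extensions satisfies all $\abs{T}^{k_T}$ instantiations of $\psi$ simultaneously, i.e., whether a representation of $T$ is a bad prefix of the permutation-closed conjunction $\bigwedge_{\varsigma}\psi(\pi_{\varsigma(1)},\dots,\pi_{\varsigma(k_T)})$ used in \Cref{lem:hyperltl_accept_only_bad_prefix}, or equivalently whether the formula of \Cref{theo:membership_sat} is unsatisfiable. Tracking that conjunction on the fly requires a product of exponentially many (in $k_T$) copies of the safety automaton for $\psi$, which does not obviously fit the stated bound of space polynomial in $\abs{\psi}$ and $k_T\cdot\log\abs{T}$. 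So the gap is not a missing lemma that a cleverer combinatorial argument could supply: the per-assignment reduction that both you and the paper implement decides a strictly weaker property, and the fix must change the algorithm, not just the proof of its correctness.
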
%
\begin{proof}
Deciding whether $t\in \Sigma$ is a bad prefix for a safety \ltl formula $\varphi$ can be done in space polynomial in $\abs{\varphi}$ and time polynomial in $\abs{t}$ by guessing whether $t$ allows an accepting run in the B\"uchi automaton for $\psi$. If no such run is found, then $t$ is a bad prefix for $\psi$. 

For a set $T$  of traces of length $n$ we need to check whether one of the representations is a bad prefix for $\varphi$. 
There are at most $\abs{T}^{k_T}$ many different  representations for $T$ that can be encoded by $\log(\abs{T}^{k_T}) = O(k_T\cdot\log(\abs{T}))$ bits. 
Checking whether $T$ is a bad prefix for $\psi$ can be done in space polynomial in $k_T \cdot \log(\abs{T})$ and  in $\psi$, by guessing the representation and applying the bad prefix check for $\psi$. 
%
\end{proof}

For the purpose of completeness, we present a second algorithm solving membership queries symbolically. To this end, we employ the decidability results of \hyperltl~\cite{finkbeiner_et_al:deciding_hyper}.
In practice, the second algorithm is expected to outperform the first one due to its dependence on SAT solving and the efficiency of state-of-the-art SAT solvers.

\begin{theorem}
	\label{theo:membership_sat}
	Let $T=\{t_1,\dots,t_n\}$ be a set of finite traces and let a universally-safe \hyperltl formula $\varphi = \forall\pi_1\dots \forall \pi_{k_T}.~ \psi$.  
	$T$ is a bad prefix of $\varphi$ if and only if the following \hyperltl formula is unsatisfiable: 
	\vskip 2mm
	\noindent$ \varphi' := \exists \pi'_1 \dots \exists \pi'_n\\ \forall\pi_1\dots \forall \pi_{k_T}.~ \pi'_1 \ge t_1 \wedge \dots \wedge\pi'_n \ge t_n \wedge \psi$
\end{theorem}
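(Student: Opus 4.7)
The plan is to unfold the semantics of $\varphi'$ and show directly that its satisfiability is equivalent to the existence of an infinite trace set $T''$ that extends $T$ and satisfies $\varphi$; the claim then follows by contraposition from the definition of bad prefix.

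First, I would make precise the shorthand $\pi'_i \ge t_i$: interpreted as a HyperLTL formula, this is the conjunction (over positions $0,\dots,|t_i|-1$) of $\lnext$-nested propositional constraints that force $\pi'_i$ to agree with the finite trace $t_i$ on its first $|t_i|$ positions. Hence it is a constraint on the trace assignment mapping $\pi'_i$ to an infinite trace. By the HyperLTL semantics recalled earlier, $\varphi'$ is satisfiable iff there exists $T'' \subseteq \Sigma^\omega$ together with a trace assignment $\Pi'$ of $\pi'_1,\dots,\pi'_n$ to traces in $T''$ such that (i) $\Pi'(\pi'_i) \ge t_i$ for every $i$, and (ii) for every assignment of $\pi_1,\dots,\pi_{k_T}$ to $T''$ the LTL body $\psi$ holds on the resulting trace tuple.

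Next, I would observe that (i) is equivalent to $T \le T''$: if such a $\Pi'$ exists, then each $t_i$ has an infinite extension $\Pi'(\pi'_i) \in T''$, giving $T \le T''$; conversely, if $T \le T''$, then for every $t_i$ we may pick some $t'_i \in T''$ with $t_i \le t'_i$ and set $\Pi'(\pi'_i) := t'_i$. Condition (ii) is exactly the statement $T'' \models \varphi$, since $\varphi = \forall\pi_1 \dots \forall\pi_{k_T}.\psi$. Combining these,
\[
\varphi' \text{ is satisfiable} \iff \exists\, T'' \subseteq \Sigma^\omega.~T \le T'' \text{ and } T'' \models \varphi.
\]

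Finally, taking the contrapositive yields
\[
\varphi' \text{ is unsatisfiable} \iff \forall\, T'' \subseteq \Sigma^\omega.~T \le T'' \Rightarrow T'' \not\models \varphi,
\]
which is precisely the definition of $T$ being a bad prefix of $\varphi$. The only non-routine step is making the shorthand $\pi'_i \ge t_i$ rigorous as a HyperLTL-expressible constraint on the assignment; once that is fixed, the rest is semantic bookkeeping and the existential block $\exists \pi'_1 \dots \exists \pi'_n$ is doing exactly the work of witnessing extensions of $t_1,\dots,t_n$ inside an offending model of $\varphi$.
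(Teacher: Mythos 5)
Your proof is correct and follows essentially the same route as the paper: both reduce satisfiability of $\varphi'$ to the existence of an infinite extension $T''$ of $T$ with $T'' \models \varphi$, and then invoke the definition of bad prefix. The only difference is presentational — you prove a single biconditional and take the contrapositive, which subsumes (and is cleaner than) the paper's separate treatment of the two directions and its case split on whether $\varphi$ itself is satisfiable.
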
%
\begin{proof}
	\begin{itemize}
	\item[$(\Rightarrow)$] Let $T = \{t_1, \dots t_n\} \subseteq\Sigma^*$ be a bad prefix of $\varphi$. 
	Thus, for all $T' \subseteq \Sigma^{\omega}$ with $T \le T'$ it holds: $T' \not \models \varphi$.
	Therefore, no traces $\pi'_1, \dots, \pi'_n$ exist that satisfy $\psi$ and  $\varphi'$ is unsatisfiable. \\
	
	\item[$(\Leftarrow)$] Let $T = \{t_1, \dots t_n\} \subseteq\Sigma^*$ be a set of traces and let $\varphi'$ be unsatisfiable.
	We distinguish the following two cases:
	\begin{enumerate}
		
		\item $\varphi$ is unsatisfiable: \\Thus, no set of infinite traces can satisfy $\varphi$ and $T$ is, like every other non-empty set of traces, a bad prefix of $\varphi$.
		\item $\varphi$ is satisfiable: \\Then, the conjunction of $\pi'_1 \ge t_1 \wedge \dots \wedge \pi'_n \ge t_n$ and $\psi$ is unsatisfiable in the context of the given quantifiers. Thus, there does not exists a set of traces $T' \subseteq \Sigma^\omega$ having $n$ traces that extend $t_1, \dots , t_n$ and $T'\models \varphi$. Therefore, $T$ satisfies the definition of a bad prefix of $\varphi$.
	\end{enumerate} 
	\end{itemize}
	
\end{proof}%
\noindent%
According to the results in~\cite{finkbeiner_et_al:deciding_hyper} and since $\varphi'$ is in the bounded $\exists^*\forall^*$ fragment of \hyperltl, \Cref{theo:membership_sat} grants us an algorithm deciding membership queries in space exponential in $\abs{\varphi'} = n \cdot m \cdot \abs{\Sigma} + k_t + \abs{\varphi} $ where $m$ is the length of the traces in $T$.

\subsection{Deciding equivalence queries}
In this section, we focus on the resolution of equivalence queries.
Given an automaton $\A$ and a universally-safe \hyperltl formula $\varphi$, to check whether $\A$ is a bad-prefix automaton for $\varphi$ we need to check that: 
\begin{enumerate}
	
	\item  Every word accepted by $\A$ is a representation of a bad prefix of $\varphi$.
	\item The automaton $\A$ accepts a representation of a bad prefix, for every  set of traces violating $\varphi$.
\end{enumerate}

The next theorem give  an algorithms for deciding problem (1).  Problem  (2) is solve in the  theorem that follows. 

\begin{theorem}
	\label{lem:hyperltl_accept_only_bad_prefix}
	Given a	 universally-safe \hyperltl formula $\varphi =\forall\pi_1\dots \forall \pi_{k_T}.~ \psi$ and a deterministic automaton $\A = (Q, \Sigma^{k_L}, q_0, F, \delta)$. Checking whether every word $w$ accepted by $\A$ is a $k_L$-representation of a bad prefix of $\varphi$ can be done in time polynomial in $\abs{\A}$, exponential in $\abs{\psi}$ and doubly exponential in $k_T$.
\end{theorem}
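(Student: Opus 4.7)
The plan is to reduce the test to a reachability check on a product of $\A$ with a self-composition automaton that symbolically represents \emph{all} infinite extensions of $\unzip(w)$ that could still satisfy $\varphi$.  The starting observation is that, for a universally-safe formula, the set $\unzip(w)=\{t_1,\dots,t_{k_L}\}$ is a bad prefix of $\varphi$ iff there is no way to extend each $t_i$ to an infinite trace $t_i'$ such that every $k_T$-tuple (with repetition) drawn from $\{t_1',\dots,t_{k_L}'\}$ satisfies $\psi$.  Adding further traces to the witnessing set $T'$ would only introduce additional $k_T$-tuples to check and cannot make $\varphi$ easier to satisfy, so restricting attention to extensions of exactly the $k_L$ traces in $\unzip(w)$ is lossless; this is the combinatorial fact I would establish first.

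Next, I would construct a deterministic safety automaton $\mathcal B_\psi$ over the alphabet $\Sigma^{k_T}$ that recognises precisely the $k_T$-tuples of infinite traces satisfying $\psi$.  Because $\psi$ is safety \ltl, this can be done with $2^{O(|\psi|)}$ states using a standard construction (for instance, the tight bad-prefix automaton construction of \cite{kupferman_et_al:model_checking_safety_properties}).  From $\mathcal B_\psi$ I would build a self-composition automaton $\mathcal D$ over $\Sigma^{k_L}$ that runs, in parallel, one copy of $\mathcal B_\psi$ for every function $\varsigma\colon\{1,\dots,k_T\}\to\{1,\dots,k_L\}$; on a letter $(\alpha_1,\dots,\alpha_{k_L})\in\Sigma^{k_L}$ the $\varsigma$-copy consumes $(\alpha_{\varsigma(1)},\dots,\alpha_{\varsigma(k_T)})$.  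A state of $\mathcal D$ is accepting iff every copy is accepting, so $\mathcal D$ accepts $(t_1',\dots,t_{k_L}')$ iff all $k_T$-tuples from $\{t_1',\dots,t_{k_L}'\}$ satisfy $\psi$.  Since there are $k_L^{k_T}$ projection functions and each copy has $2^{O(|\psi|)}$ states, $\mathcal D$ has $2^{O(|\psi|\cdot k_L^{k_T})}$ states, which is exponential in $|\psi|$ and doubly exponential in $k_T$.

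With this setup the claim "every $w\in L(\A)$ represents a bad prefix of $\varphi$" becomes a classical bad-prefix check against $\mathcal D$: for every $w\in L(\A)$, the state of $\mathcal D$ reached after reading $w$ must be \emph{dead}, i.e.\ admit no accepting infinite continuation.  I would (i)~compute the set of dead states of $\mathcal D$ by the standard fixed-point that removes states from which no infinite run staying in accepting states exists, and (ii)~form the product $\A\times\mathcal D$ and test by reachability whether any accepting state of $\A$ is reachable together with a non-dead state of $\mathcal D$.  The product has size $|\A|\cdot 2^{O(|\psi|\cdot k_L^{k_T})}$ and reachability on it is linear in this size, yielding the desired bound: polynomial in $|\A|$, exponential in $|\psi|$, and doubly exponential in $k_T$.

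The part that needs the most care is the correctness of the self-composition: I must argue (a)~that taking $k_L^{k_T}$ copies—one per function $\varsigma$—is exactly what is needed to enforce the universal $k_T$-quantifier over a $k_L$-trace set (covering all choices of trace variables, including repetitions), and (b)~that searching only within $(\Sigma^\omega)^{k_L}$ for a falsifying extension loses nothing, by the monotonicity argument of the first paragraph.  Once these two points are in place, the rest is a straightforward reachability analysis, and the stated complexity follows from the sizes of $\mathcal B_\psi$, $\mathcal D$, and the product with $\A$.
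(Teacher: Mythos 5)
Your construction is essentially the paper's: the paper conjoins $\psi(\pi_{\varsigma(1)},\dots,\pi_{\varsigma(k_T)})$ over all assignment functions $\varsigma$ at the formula level, builds one nondeterministic safety automaton $\mc{N}_{\varphi'}$ for the resulting $\varphi'$, lifts $\A$ to arity $k_T$ via $\extend$, and checks emptiness of the product, whereas you build one automaton for $\psi$ and take the product of one copy per $\varsigma:\{1,\dots,k_T\}\to\{1,\dots,k_L\}$, keeping everything at arity $k_L$. These are interchangeable (an automaton for a conjunction versus a product of automata for the conjuncts; projecting the quantifier down to $k_L$ versus padding $\A$ up to $k_T$), and your monotonicity observation---that a universal formula is downward closed, so one may restrict attention to extensions of exactly the $k_L$ traces in $\unzip(w)$---is the correct justification for the step the paper leaves implicit in its definition of $\mc{N}_{\varphi'}$.

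The one point that does not survive scrutiny as written is the claim that $\mathcal B_\psi$ can be taken \emph{deterministic} with $2^{O(|\psi|)}$ states. For safety \ltl, the construction of Kupferman and Vardi gives a \emph{nondeterministic} safety automaton (equivalently, a fine bad-prefix automaton) of size $2^{O(|\psi|)}$; a deterministic safety automaton, i.e.\ a tight deterministic bad-prefix automaton, is doubly exponential in $|\psi|$ in the worst case, which would break the stated ``exponential in $|\psi|$'' bound. Fortunately determinism is nowhere needed in your argument: the product of nondeterministic safety copies still recognises the intersection, the dead-state fixed point still identifies states with no infinite run, and the final test becomes ``is some pair (accepting state of $\A$, live state of $\mathcal D$) reachable in $\A\times\mathcal D$?'', which existentially guesses the run of $\mathcal D$ and is again plain reachability. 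With that substitution (which is exactly what the paper does by using a nondeterministic $\mc{N}_{\varphi'}$), your proof is correct and yields the claimed bounds.
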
%
\begin{proof}
	We transform $\varphi$ into a \hyperltl formula $\varphi' = \forall\pi_1\dots\forall\pi_{k_T}.~ \psi'(\pi_1,\dots, \pi_{k_T})$ where
	\begin{align*}
	\psi'(\pi_1,\dots, \pi_{k_T}) = \bigwedge_{\varsigma: \{1,\dots,k_T\}\rightarrow {\{1,\dots,k_T\}}} \psi(\pi_{\varsigma(1)}, \dots, \pi_{\varsigma(k_T)})
	\end{align*}
	Note that the trace property described by $\psi'$ is permutation-complete with respect to $\Sigma^{k_T}$, i.e., for all $t \in (\Sigma^{k_T})^\omega$, it holds $t \models_\text{LTL} \psi' \Leftrightarrow \emptyset \models_{\unzip(t)} \varphi$ where the \ltl semantic is adjusted such that $a_{\pi_i}$ holds if $a$ holds in the $i$-th component. 
	The size of $\varphi'$ is exponential in $k_T$ and we can construct a nondeterministic safety automaton $\mc{N}_{\varphi'}$  accepting all infinite sequences that represent a set $T$ of at most $k_T$ traces such that $T\models\varphi$.
	The size of $\mc{N}_{\varphi'}$ is exponential in the size of $\psi'$ \cite{finkbeiner_et_al:algorithms_for_mc_hyper}.

	$\A$ and $\mc{N}_{\varphi'}$ are of different arities, i.e., we need to extend $\A$ to $k_T$
	Let $\A' = (Q', q'_0, \Sigma^k, F', \delta')$ be an automaton, we define the extension $\extend(\A,k')$ of $\A'$ onto $\Sigma^{k'}$ for $k' > k$ as follows: $\extend(\A,k') = (Q', q'_0, \Sigma^{k'}, F',\delta'')$ with  $\delta'(q,\extend(s,k')) = q'$ iff $\delta(q,s) = q'$ for $q, q' \in Q'$ and $s\in \Sigma^{k}$.
	In order to check whether $\A$ accepts any sequence that does not represent a bad prefix, we then construct the nondeterministic product automaton of $\extend(\A,k_T)$ and $\mc{N}_{\varphi'}$, and check the emptiness of the product automaton.  
	%
	%
	%
\end{proof}%

In general, for a \hyperltl formula $\varphi =\forall\pi_1\dots \forall \pi_{k_T}.~ \psi$ we can make the assumption that \lstarhyper never constructs an automaton for an arity larger than $k_T$. 
With this assumption, problem (2) can be solved by checking whether for every set of traces $T$ violating $\varphi$, if $\A$ accepts a representation of a $k_L$-bad prefix of $T$. 

\begin{theorem}
\label{theo:hyperltl_accept_all_bad_prefix}
Let $\varphi =\forall\pi_1\dots \forall \pi_{k_T}.~ \psi$ be a  universally-safe \hyperltl formula and let $\A_L$ be a deterministic bad-prefix automaton over $\Sigma^{k_L}$ for some alphabet $\Sigma$. The problem of deciding whether $\A_L$ recognizes a $k_L$-representation of a bad prefix for every $T \not\in \L(\varphi)$ can be solved in time polynomial in $\abs{\A_L}$,  exponential in $\abs{\psi}$, and space exponential  in $k_T$.
\end{theorem}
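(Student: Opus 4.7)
The plan is to reduce the check to emptiness of a product B\"uchi automaton, mirroring the second half of the proof of Theorem~\ref{theo:equiv}. A failure of the property to be checked is witnessed by an infinite $k_T$-tuple $\vec t = (t_1,\dots,t_{k_T}) \in (\Sigma^{\omega})^{k_T}$ such that (i) the trace assignment $\pi_i \mapsto t_i$ violates the safety body $\psi$, and (ii) no $k_L$-sized subcollection of $\vec t$ has a representation whose finite prefix is accepted by $\A_L$. I would construct two automata, one for each condition, and search for a lasso in their product.

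First, since $\psi$ is a safety \ltl formula, $\neg\psi$ is co-safety, and the standard \ltl-to-automata translation yields a nondeterministic B\"uchi automaton $B_{\neg\psi}$ over the alphabet $\Sigma^{k_T}$ of size $2^{O(|\psi|)}$ that accepts exactly the $k_T$-tuples violating $\psi$; indexing the atomic propositions by trace variables only affects the alphabet, not the state count. Second, following the $k$-permuted-projection construction $\A_k^{\# k'}$ from the proof of Theorem~\ref{theo:equiv}, I would build a deterministic automaton $\A_L^{\# k_T}$ over $\Sigma^{k_T}$, of size $|\A_L|^{k_T^{k_L}}$, whose state records, for each of the $k_T^{k_L}$ functions $\varsigma\colon\{1,\dots,k_L\}\to\{1,\dots,k_T\}$, the state that $\A_L$ reaches on the projected-and-permuted prefix $(t_{\varsigma(1)},\dots,t_{\varsigma(k_L)})$; a state is accepting iff some component is accepting in $\A_L$. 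Thus the run of $\A_L^{\# k_T}$ on $\vec t$ visits an accepting state at position $n$ iff some $k_L$-subset of $\{t_1,\dots,t_{k_T}\}$ has a representation in $\vec t[0,n-1]$ accepted by $\A_L$.

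The condition to be decided then fails iff the product of $B_{\neg\psi}$ and $\A_L^{\# k_T}$ admits a lasso run that is B\"uchi-accepting in the first component and whose states are all non-accepting in the second. Instead of materialising the product, I would guess the lasso one state at a time, as done in Theorem~\ref{theo:equiv}: each product state can be stored in $O(|\psi|) + O(k_T^{k_L}\cdot\log|\A_L|)$ bits, which gives working space polynomial in $|\psi|$ and $|\A_L|$ and exponential in $k_T$; the running time is bounded by the number of reachable product states, which is $2^{O(|\psi|)}\cdot|\A_L|^{k_T^{k_L}}$, i.e., polynomial in $|\A_L|$ and exponential in $|\psi|$, matching the claim.

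The main obstacle will be the permuted-projection component: correctness requires tracking the state of $\A_L$ on all $k_T^{k_L}$ projected-and-permuted $k_L$-subsequences simultaneously, which is precisely what forces the exponential-in-$k_T$ space. A smaller subtlety is that $\A_L$ is a finite-word bad-prefix automaton, whereas we read infinite inputs; since bad prefixes are closed under extension, accepting states can be treated as absorbing sinks, an assumption that is immediate if we replace $\A_L$ by its tight counterpart via Theorem~\ref{theo:constrtight} before the product construction.
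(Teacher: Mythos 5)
Your proposal is correct and follows essentially the same route as the paper: the paper constructs a fine nondeterministic bad-prefix automaton for $\psi$ over $\Sigma^{k_T}$ (exponential in $\abs{\psi}$) and then invokes the second direction of the equivalence check of Theorem~\ref{theo:equiv}, which is exactly your combination of the permuted-projection automaton $\A_L^{\# k_T}$ with an on-the-fly lasso search in the product. Your only deviation --- using a B\"uchi automaton for $\neg\psi$ instead of a finite-word fine bad-prefix automaton for $\psi$ --- is immaterial for a safety body, and your complexity accounting matches the claimed bounds.
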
%
\begin{proof}
For $\varphi$, we can construct a fine, nondeterministic $k$-bad-prefix automaton $\mc{N}_{\varphi} = (Q_{\varphi}, \Sigma^{k_T}, q_{0,{\varphi}}, F_{\varphi}, \Delta_{\varphi})$ for $\psi$ on the adapted alphabet over $\Sigma^{k_L}$ in time exponential in $\abs{\psi}$~\cite{fineautomata}.
In \Cref{theo:equiv}, we provided an equivalence check which solves the above problem with respect to two deterministic $k$-bad-prefix automata. 
However, since only one of the two automata is complemented only $\A_L$ has to be deterministic whereas $\mc{N}_{\varphi}$ can be nondeterministic as well.
Then the complexity bound follows from \Cref{theo:equiv}.

In case the equivalence does not hold, a counterexample of minimal length can easily be produced in the size of the cross-product automaton, similar to \Cref{lem:hyperltl_accept_only_bad_prefix}. 
This counterexample can be reduced to minimal size by repeated deleting of traces as long as the answer to membership with it do not change. This takes linear time in $k_T$ in addition to the complexity of membership queries.
\end{proof}%

\section{Learnability of Hyperproperties}
\label{sec:learnability}
\newtoggle{short}
\toggletrue{short}

We conclude our study with an investigation of the landscape of learnable hyperproperties.
We point out results regarding the theoretical boundaries of learning trace properties, as well as, hyperproperties. 
These findings do not directly affect {\lstarhyper}{,} proposed in the last section, but rather affect possible future extensions.

We start with an elementary result regarding the class of $k$-safety hyperproperties. 
%
This result even holds for $1$-safety hyperproperties, i.e., safety trace properties.

\begin{theorem}
	\label{lem:undecidable_membership}
	Answering membership queries for safety languages is undecidable in general.
\end{theorem}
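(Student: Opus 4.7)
The plan is to reduce the halting problem for Turing machines to the bad-prefix membership problem, thereby showing undecidability.

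Fix a Turing machine $M$ and let $\Sigma$ be a finite alphabet rich enough to encode configurations of $M$ together with a separator symbol. I would define a safety language $L_M \subseteq \Sigma^\omega$ consisting of exactly those infinite words that encode an infinite, step-by-step correct, non-halting computation of $M$ on empty input: the first block encodes the initial configuration, every subsequent block is obtained from the previous one by a valid transition step of $M$, and no block is a halting configuration. I would then verify that $L_M$ is a safety language: if $w \in \Sigma^\omega$ is not in $L_M$, then there is a finite prefix of $w$ that either fails to be well-formed, encodes a bad transition, or ends in a halting configuration, and every extension of such a prefix remains outside $L_M$. Hence every $w \notin L_M$ has a bad prefix, which is precisely the definition of a safety language.

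Next I would observe the reduction. The empty word $\epsilon$ is a bad prefix for $L_M$ if and only if every $w \in \Sigma^\omega$ lies outside $L_M$, i.e., if and only if $L_M = \emptyset$. By construction, $L_M$ is nonempty if and only if $M$ has an infinite run on empty input, i.e., if and only if $M$ does \emph{not} halt. Therefore an algorithm deciding bad-prefix membership would decide the halting problem for $M$, contradicting Turing's theorem. Since the construction of $L_M$ from $M$ is effective, this yields undecidability of bad-prefix membership for safety languages in general.

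The main subtle point is the question of how the safety language is \emph{represented} as an input to the membership oracle. The reduction requires that $L_M$ be specifiable in a way rich enough to encode arbitrary Turing machines, so I would make explicit that here ``safety language'' is taken in its general sense, e.g., given by a Turing machine enumerating the bad prefixes (so the set of bad prefixes is recursively enumerable, matching the $\Pi^0_1$ nature of safety languages); under this reading the construction of $L_M$ is uniformly computable from $M$. No obstacle beyond this formalization arises: well-formedness, transition validity, and halting are all decidable predicates on finite prefixes, so the bad-prefix set of $L_M$ is r.e., and the reduction carries through.
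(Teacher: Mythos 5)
Your high-level strategy---an effective reduction from the halting problem to bad-prefix membership for an explicitly constructed safety language---is the same as the paper's, and your verification that $L_M$ is a safety language is sound (every word outside $L_M$ exhibits a finite, locally checkable defect, so it has a bad prefix). The gap lies in \emph{what} the reduction establishes. For each \emph{fixed} machine $M$, the bad-prefix set of your $L_M$ is decidable: if $M$ halts on empty input then $L_M=\emptyset$ and every finite word is a bad prefix, while if $M$ does not halt then $L_M$ contains exactly the (computable) encoding of the unique run, and a finite word is a bad prefix iff it fails to be a prefix of that run, which is checkable by simulating $M$ for finitely many steps. So your family proves only that the \emph{uniform} problem (given a description of $M$ and a word, decide bad-prefix membership for $L_M$) is undecidable; it does not exhibit any single safety language whose membership problem is undecidable.

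The paper's proof gives the stronger, per-language statement, which is the one that matters in the learning setting: the teacher answers membership queries about one fixed target language, so what is needed is a single safety language that admits no computable teacher. The paper achieves this by folding the entire halting set $\mathcal{L}$ into one language, $\mathcal{L}' = \{\,0^{|w|}\,\$\,w\,\$\,\sigma \mid w \in \mathcal{L},\ \sigma\in\{0,1\}^\omega\,\}\cup\{0,1\}^\omega$, so that the finite word $0^{|w|}\,\$\,w\,\$$ is a bad prefix of $\mathcal{L}'$ iff $w\notin\mathcal{L}$; deciding membership for this one language decides the halting problem. Your argument is repairable by the same diagonalization---replace the per-machine languages by a single language whose words carry the machine description as part of the input---but as written it proves a weaker claim than the one the paper states and uses.
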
%
\nottoggle{short} {
\begin{proof}	
	Let $\Sigma = \{0,1\}$ and let $\L\subseteq \Sigma^*$ be the binary encoding of the halting problem and let there be an enumeration $M_1,M_2,\dots$ of the language $\L$. 
	Based on $\L$ we  define the following undecidable safety language $\L'$:
	$$ 
	\L' = \filter{0^{\abs{M_i}} \$ M_i \$ \sigma }{ M_i \in \L \wedge \sigma  \in \{0,1\}^\omega} \cup \{0,1\}^\omega
	$$ 
	We show that $\L'$ is a safety language, i.e., every trace not in $\L'$ has a finite bad-prefix. Therefore, let $\tau = \tau_1\tau_2 \dots \in \{0,1,\$\}^\omega$. We consider the following cases: 
	\begin{itemize}
		\item $\tau \in \{0,1\}^\omega:$ $\tau \in \L'$
		\item $\tau \in \{0,1,\$\}^\omega$ with $\#_\$(\tau) = 1:$ Let $\tau = w \$ \sigma \in \Sigma^*\cdot \{\$\}\cdot \Sigma^\omega$. The prefix of $\tau$ of length $2\abs{w} + 2$ is a bad-prefix since its last label was unequal to $\$$.
		\item $\tau \in \{0,1,\$\}^\omega$ with $\#_\$(\tau) > 2:$ The minimal prefix of $\tau$ with $\#_\$ = 3$ is a bad-prefix.
		\item $\tau \in \{0,1,\$\}^\omega$ with $\#_\$(\tau) = 2:$
		\begin{itemize}
			\item $\tau \in \L': \checkmark$
			\item $\tau \not \in \L':$ Choose some $m$ such that $\#_\$(\tau_1 \dots \tau_m) = 1$, $\tau_m =\$$ and $n$ such that  $\#_\$(\tau_{m+1} \dots \tau_{(m+1) + n}) = 1$, $\tau_{(m+1) + n} =\$$. 
			If $m \neq n$, then $\tau_1 \dots \tau_{(m+1) + n}$ is a bad-prefix.
			If $m = n$, then $\tau_{m+1} \dots \tau_{(m+1) + n - 1} \not \in \L$ and thus $\tau_1 \dots \tau_{(m+1) + n}$ is a bad-prefix.
		\end{itemize}
	\end{itemize}
	Thus, $\L'$ is a safety-language. $\L'$ is undecidable since $\L$ was chosen to be the binary encoding of the halting problem.

		Consider the language $\Sigma = \{0,1\}$. 
	Let $\L$ be the binary encoding of the halting problem and let there be an enumeration $M_1,M_2,\dots$ of the language $\L$. 
	
	We will define a suffix-closed undecidable language based on $\L$, since the \TODO{the complement is not really a safety language, but the infinte complement, think of a better description.}
	Define the following language $\L'$ as
	$$ \L' = \filter{0^i\#M_i\# \{0,1\}^*}{ M_i \in \L} ,$$

	where $\$ \not \in \Sigma$. The constructed language $\L'$ is suffix-closed, i.e., $\forall y\in \{0,1,\#\}^*.~(\exists x\in \L. ~ x \le y) \Rightarrow y \in \L'$. 
	Thus, $\L'$ is a bad-prefix language and $\overline{\L'}$ is a safety-language. 
	
	The language $\L'$ is undecidable because $\L$ was chosen to be the binary encoding of the halting problem. 
	Thus, $\overline{\L'}$ is undecidable as well, and deciding membership queries for $\overline{\L'}$ is undecidable.
	
\end{proof}%
\noindent
}

	The previous theorem can be proved by a reduction to the halting problem. Therefore, one constructs an undecidable safety language. Membership for this language is undecidable. Thus, the theorem follows.
Since the class of $1$-safety hyperproperties is a equal to the set of  safety languages ~\cite{clarkson_et_al:hyperproperties}, the subsequent corollary immediately follows from \Cref{lem:undecidable_membership}.

\begin{corollary}
	\label{cor:undecidable_membership_k_safety}
	Membership queries for $k$-safety hyperproperties are, in general, undecidable.
\end{corollary}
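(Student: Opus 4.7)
The plan is to derive the corollary as an immediate consequence of Theorem~\ref{lem:undecidable_membership} by observing that $1$-safety hyperproperties coincide with safety trace properties. Concretely, I would start by recalling the identification between a safety language $L \subseteq \Sigma^\omega$ and the $1$-safety hyperproperty $\safety_L = \{T \subseteq \Sigma^\omega \mid T \subseteq L\}$: a set $T$ violates $\safety_L$ iff it contains some trace $t \notin L$, and since $L$ is safety there is a finite bad prefix of $t$ witnessing this, which is also a $1$-bad-prefix for $\safety_L$. This identification shows that $1$-safety hyperproperties are precisely the safety trace properties (as noted in the reference to~\cite{clarkson_et_al:hyperproperties}).

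Next, I would reduce membership queries for $\safety_L$ to membership queries for $L$ at the level of bad prefixes. A finite set $T = \{t\}$ of size one is a bad prefix for $\safety_L$ iff $t$ is a bad prefix for $L$. Hence a decision procedure for membership queries over the class of $1$-safety hyperproperties would yield one for membership queries over the class of safety languages, contradicting Theorem~\ref{lem:undecidable_membership}. This already establishes the result for $k=1$.

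Finally, I would lift the result to arbitrary $k \geq 1$. Every $1$-safety hyperproperty is trivially a $k$-safety hyperproperty for all $k \geq 1$, since having a bad prefix of size at most $1$ is a stronger condition than having one of size at most $k$. Membership queries (``is $T$ a $k'$-bad-prefix for $\safety$?'' for some $k' \leq k$) on a $1$-safety instance still reduce to the underlying safety-language membership problem, so undecidability propagates to every $k$. The only mildly delicate point is to confirm that this reduction is uniform in the representation chosen for the hyperproperty (e.g., as an oracle or as some finite descriptor), but since we are proving undecidability in general, any reasonable class of descriptors that captures all safety languages suffices. The main obstacle, if any, lies in making precise what ``in general'' means for membership queries on hyperproperties, but once the $1$-safety $=$ safety-language correspondence is set up, the corollary is a direct inheritance of the theorem.
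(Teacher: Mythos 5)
Your proposal is correct and follows the same route as the paper: identify $1$-safety hyperproperties with safety trace properties (as the paper does, citing Clarkson and Schneider), so that Theorem~\ref{lem:undecidable_membership} transfers directly, and note that every $1$-safety hyperproperty is also $k$-safety for all $k \geq 1$. You merely spell out the bad-prefix-level reduction and the lifting to arbitrary $k$ more explicitly than the paper's one-line argument.
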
%

Despite the undecidability result, there are many important classes of safety languages for which membership queries are decidable. One important example is the class of safety languages defined in \ltl.
Here the problem can be decided by constructing the conjugation between the \ltl formula at hand and the given prefix, expressed in \ltl. Afterwards checking the satisfiability of the obtained formula decides the membership~\cite{DBLP:conf/ecai/Li0PVH14}.
In addition, equivalence queries for the safety fragment of \ltl can be decided: 
Let $\A$ be a deterministic bad-prefix automaton and $\psi$ an \ltl formula. 
We can construct a deterministic bad-prefix automaton $\A_\psi$ for $\psi$ accepting all bad-prefixes of $\psi$~\cite{kupferman_et_al:model_checking_safety_properties}. $\A$ is a bad-prefix automaton for $\psi$ if $\A$ and $\A_\psi$ accept the same language, which can be easily checked.
Hence, the decidability of queries of safety languages defined in \ltl is completely solved. Thus, automated approaches using the learning of safety properties in \ltl are applicable, in general. 

Looking at our learning framework for hypersafety properties expressed in \hyperltl, it becomes of interest to ask, to what extend queries can be decided? 
Regarding equivalence queries for \hyperltl, we obtain the following negative result. The proof is independent of the representation model for \hyperltl.

\begin{proposition}
	Equivalence queries are undecidable for the full class of hyperproperties expressed in \hyperltl.
\end{proposition}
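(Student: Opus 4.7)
My plan is to reduce the satisfiability problem for \hyperltl, which was shown to be undecidable by Finkbeiner et al.~\cite{finkbeiner_et_al:deciding_hyper}, to the equivalence problem. The key observation is that equivalence to a fixed unsatisfiable formula is the same as unsatisfiability, so any decision procedure for equivalence would yield one for satisfiability.

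Concretely, I would first fix a canonical unsatisfiable \hyperltl formula, e.g., $\bot_H := \forall \pi.\, (a_\pi \wedge \neg a_\pi)$, whose only satisfying trace set is the empty one (and which therefore represents a fixed, trivial hyperproperty). Then, for any \hyperltl formula $\varphi$, I would show the equivalence $\varphi$ is unsatisfiable (in the nontrivial sense) if and only if $\varphi \equiv \bot_H$. One direction is immediate; for the other, if $\varphi$ has a nonempty satisfying trace set, then the hyperproperties defined by $\varphi$ and $\bot_H$ differ on that set. Consequently, if equivalence queries for \hyperltl were decidable, we could decide \hyperltl satisfiability, contradicting the known undecidability result.

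To justify that the argument is independent of the representation model used for the conjecture in an equivalence query, I would note that both objects being compared can be taken to be \hyperltl formulas in this reduction: the target hyperproperty is given by $\varphi$, and the conjecture is $\bot_H$ itself, which any reasonable representation model for hyperproperties (in particular, any model that is at least as expressive as \hyperltl, or can express the trivially empty hyperproperty) can encode. Hence the reduction is uniform across representation models.

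The main obstacle is essentially bookkeeping rather than mathematical: one must make sure that the notion of ``equivalence query'' one considers really does subsume comparing two \hyperltl formulas (or, more precisely, comparing a conjecture in the representation model with a target specified in \hyperltl), and that the chosen $\bot_H$ genuinely represents the same hyperproperty in any such model. Once this is fixed, the reduction from \hyperltl satisfiability is immediate and the undecidability of equivalence queries follows.
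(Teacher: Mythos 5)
Your proposal is correct and follows essentially the same route as the paper: reduce \hyperltl satisfiability (undecidable by~\cite{finkbeiner_et_al:deciding_hyper}) to equivalence against a fixed unsatisfiable formula, where the paper simply asks whether $\varphi \equiv \textsf{False}$ and you instantiate this with a concrete $\bot_H$. Your extra care about the empty trace set and the representation model is reasonable bookkeeping but does not change the argument.
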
%
\begin{proof}
	Assume that we can decide membership and equivalence queries for $\varphi$. Then, we can answer the satisfiability of $\varphi$ by asking whether $\varphi \equiv \textsf{False}$. This contradicts the undecidability of HyperLTL-SAT~\cite{finkbeiner_et_al:deciding_hyper}. 
\end{proof}%

Thus, it becomes important to consider subclasses of \hyperltl, like the universal-safe \hyperltl fragment. 
In order to use such subclasses in automated learning environments, it is necessary to decide whether a \hyperltl formulas is an element of this subclass.
For \ltl, the corresponding question---whether a formula expresses a safety property---is decidable ~\cite{DBLP:journals/ipl/MareticDB14}. It is even possible to decompose the formula into its pure safety and liveness parts.
Clarkson and Schneider showed that such a separation into a pure hypersafety and a pure hyperliveness part always exists for hyperproperties~\cite{clarkson_et_al:hyperproperties}. Thus, it looks promising that such a separation can be computed.
In the following theorem we reduce the satisfiability of \hyperltl to deciding whether a given \hyperltl formula describes a $k$-hypersafety property.
Thus, according to the undecidability result for the satisfiability of \hyperltl, our problem is undecidable \cite{finkbeiner_et_al:deciding_hyper}.


%
%

\begin{theorem}
\label{theo:hyperltlissafety}
	Whether a \hyperltl formula $\varphi$ expresses a $k$-safety hyperproperty is undecidable in general. 
\end{theorem}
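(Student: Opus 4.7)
The plan is to establish undecidability via a many-one reduction from the satisfiability problem for \hyperltl, which is known to be undecidable by~\cite{finkbeiner_et_al:deciding_hyper}. Given an arbitrary \hyperltl formula $\varphi$ over atomic propositions $\AP$, I pick a fresh atomic proposition $a \notin \AP$ and let
\[
\varphi' \;:=\; \neg\varphi \;\vee\; \forall \pi.\,\finally a_\pi.
\]
Since \hyperltl is closed under negation (via the dualities $\neg\forall\pi.\psi \equiv \exists\pi.\neg\psi$ and the standard \ltl dualities), $\varphi'$ is a bona fide \hyperltl formula. The goal is to show that $\varphi'$ expresses a $k$-safety hyperproperty (for some, equivalently every, $k$) if and only if $\varphi$ is unsatisfiable. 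Undecidability of the target problem then follows immediately.

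For the easy direction, assume $\varphi$ is unsatisfiable. Then $\neg\varphi$ holds on every set of traces, so $\varphi'$ holds universally and the hyperproperty it defines is the full power set $\mc{P}(\Sigma^\omega)$. This property is vacuously $k$-safety for every $k$, since there is no violating trace set at all.

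For the converse, assume $\varphi$ is satisfiable and fix some $T \models \varphi$ over $2^{\AP}$-traces. Lift $T$ to a trace set $T'$ over $2^{\AP \cup \{a\}}$ by simply refusing to ever assert $a$, i.e.\ each trace of $T$ is re-interpreted as a trace over the larger alphabet in which $a$ never occurs. Because $a \notin \AP(\varphi)$, a straightforward induction on $\varphi$ (using that \hyperltl semantics on a trace set depends only on the $\AP(\varphi)$-projection) gives $T' \models \varphi$. On the other hand, $T' \not\models \forall\pi.\finally a_\pi$, so $T' \not\models \varphi'$. I then argue $T'$ has no finite bad prefix: given any finite $T^* \subseteq \Sigma^*$ with $T^* \le T'$, build an infinite extension $\widetilde T \supseteq T^*$ by continuing each $t \in T^*$ with a single letter containing $a$ followed by arbitrary suffixes. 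Then $\widetilde T \models \forall\pi.\finally a_\pi$, so $\widetilde T \models \varphi'$, which witnesses that $T^*$ is not a bad prefix for $\varphi'$. Hence $\varphi'$ is not even a safety hyperproperty, let alone $k$-safety for any $k$.

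The main technical obstacle is justifying that $T \models \varphi$ implies $T' \models \varphi$ after deleting $a$, because removing $a$ may collapse distinct traces in $T$ into a single trace of $T'$, which is delicate for existential trace quantifiers. The clean way to handle this is the projection lemma: for $a \notin \AP(\varphi)$ and any trace set $S$ over $2^{\AP \cup \{a\}}$, $S \models \varphi$ iff the $\AP$-projection $\pi_{\AP}(S)$ satisfies $\varphi$, proved by induction on the structure of $\varphi$ using that the projection map $t \mapsto \pi_{\AP}(t)$ is a surjection $S \twoheadrightarrow \pi_{\AP}(S)$ compatible with $\Pi$-assignments. Everything else in the reduction is then routine.
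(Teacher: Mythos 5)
Your reduction is correct and starts from the same source problem as the paper's proof (\hyperltl satisfiability, undecidable by~\cite{finkbeiner_et_al:deciding_hyper}) with the same key gadget---a fresh proposition $a$ and the liveness obligation $\finally a$---but you attach the gadget disjunctively, $\varphi' = \neg\varphi \vee \forall\pi.\finally a_\pi$, whereas the paper keeps the quantifier prefix of $\varphi$ and conjoins $\finally a_{\pi_1}$ to the body, $\varphi' = Q_1\pi_1\dots Q_n\pi_n.\,(\psi \wedge \finally a_{\pi_1})$. The two constructions flip the easy case (an unsatisfiable $\varphi$ yields the full hyperproperty for you, the empty one for the paper; both are vacuously $k$-safe), and your disjunctive form pays off in the hard case: to show that your violating set has no finite bad prefix, the extension only needs to satisfy the liveness disjunct, which appending a single $a$-letter achieves, whereas the paper's conjunctive form requires the extension to re-satisfy $\psi$ as well---a step the paper asserts in one sentence and which, if spelled out, needs the same projection lemma you identify and prove. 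The one point you should tighten is syntactic: in the paper's grammar \hyperltl is in prenex form, so $\neg\varphi \vee \forall\pi.\finally a_\pi$ is not literally a formula. Closure under negation (which you invoke) handles $\neg\varphi$, but you must also prenex the disjunction of two quantified sentences, e.g.\ to $\bar{Q}_1\pi_1\dots \bar{Q}_n\pi_n\forall\pi.\,(\neg\psi \vee \finally a_\pi)$, which is equivalent over nonempty trace sets by the standard first-order prenexing laws (the nonemptiness caveat is shared with the paper's own proof). With that adjustment your argument goes through.
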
%
\begin{proof}
	We start by observing that every unsatisfiable \hyperltl formula describes the hyperproperty $\L(\varphi)=\emptyset$ and $\L(\varphi)$ is $k$-safety for all $k \in \NN$. 
	
	Let $\varphi = Q_1 \pi_1. \dots Q_n \pi_n.~ \psi$ be \hyperltl formula over the alphabet $\Sigma$. We construct the formula $\varphi' = Q_1 \pi_1. \dots Q_n \pi_n.~ \psi \wedge \finally a_{\pi_1}$  over the alphabet $\Sigma' = \Sigma \mathbin{\dot{\cup}} \{a\}$, i.e.,  $a \not \in \Sigma$	. 
	We will prove that $\varphi$ is unsatisfiable if and only if  $\L(\varphi')$ expresses a $k$-safety hyperproperty. Therefore we distinguish the following two cases:
	\begin{itemize}
		\item $\varphi$ is unsatisfiable:  $\varphi'$ is unsatisfiable as well. Thus, $\varphi'$ is $k$-safety for any $k\in \NN$.
		\item $\varphi$ is satisfiable: Let $T \subseteq \Sigma^\omega$ be a set of traces such that $\emptyset \models_T \varphi$. 
		Construct $T' \subseteq {\Sigma'}^\omega$ $ T' = \{ t' \mid \exists t \in T. ~ t'\equiv_{\Sigma} t   \wedge \forall i \in \NN.~ t'[i] \models a\}$. 
		Since $T\models \varphi$ and $T'\models \forall \pi. a_\pi$ it follows: $T' \models \varphi'$. 
		Thus, $\varphi'$ is satisfiable and $\L(\varphi')$ is not a $k$-safety hyperproperty since the extra conjunct enforces every or some trace, depending on $Q_1$, to satisfy eventually $a$, a liveness requirement.
	\end{itemize}
\end{proof}%
\noindent
The same proof provides that checking if a \hyperltl formula describes a safety hyperproperty is undecidable.

In contrast to the negative decidability results, we give a decision procedure for deciding whether a formula is $k$-safe for the important $\forall$-fragment of \hyperltl. This, is a superset of the universal-safe formulas.

\begin{theorem}
	Let $\varphi = \forall \pi_1 \dots \forall \pi_k.~ \psi$ be a \hyperltl formula. 
	It is decidable if $L(\varphi)$ is a $k$-safety hyperproperty, in space polynomial in $\abs{\psi}$ and space exponential in $k$. 
\end{theorem}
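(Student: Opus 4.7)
The plan is to reduce the question to the classical decidable problem of checking whether a given \ltl formula defines a safety trace property, which is PSPACE-complete in the formula size. Since $\varphi$ is purely universal, a set $T$ satisfies $\varphi$ exactly when every $k$-tuple of its traces (allowing repetitions) satisfies the body $\psi$; this lets me reformulate the $k$-safety question for $L(\varphi)$ as a single \ltl safety question over the product alphabet $\Sigma^k$.

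Concretely, I would construct an \ltl formula $\psi'$ over $\Sigma^k$ as
\[
\psi' \;:=\; \bigwedge_{\sigma:\{1,\dots,k\}\to\{1,\dots,k\}} \psi_\sigma,
\]
where $\psi_\sigma$ is obtained from $\psi$ by interpreting each trace variable $\pi_i$ as the $\sigma(i)$-th component of the product trace. Ranging over all self-maps (not only permutations) of $\{1,\dots,k\}$ is essential to account for the fact that $T$ may have strictly fewer than $k$ distinct traces. Then $L_{\psi'}\subseteq(\Sigma^k)^\omega$ consists exactly of those $\zip(t_1,\dots,t_k)$ with $\{t_1,\dots,t_k\}\models\varphi$, and $|\psi'|=O(k^k\cdot|\psi|)$: exponential in $k$ but polynomial in $|\psi|$.

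The core lemma I would then prove is: $L(\varphi)$ is a $k$-safety hyperproperty if and only if $L_{\psi'}$ is a safety trace property over $\Sigma^k$. For the forward direction, pick $T\not\models\varphi$ with a witnessing tuple $(s_1,\dots,s_k)\in T^k$; enumerating $T=\{t_1,\dots,t_k\}$ with padding by repetition yields $s_i=t_{\rho(i)}$ for some $\rho$, so taking $\sigma=\rho$ in the conjunction shows $\zip(t_1,\dots,t_k)\notin L_{\psi'}$. Safety of $L_{\psi'}$ supplies a bad prefix $(v_1,\dots,v_k)$, and $\{v_1,\dots,v_k\}$ is a $k$-bad prefix for $\varphi$: any extension $\tilde T$ contains a $k$-tuple whose zip extends the bad prefix and thus lies outside $L_{\psi'}$. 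For the backward direction, let $w\in(\Sigma^k)^\omega\setminus L_{\psi'}$ have no bad prefix, unzip to $T=\{t_1,\dots,t_k\}\not\models\varphi$, and for any candidate $k$-bad prefix $T_p=\{u_1,\dots,u_m\}\leq T$ pick $n\geq\max_i|u_i|$: by hypothesis the length-$n$ prefix of $w$ has an extension in $L_{\psi'}$ whose unzip $\tilde T$ satisfies $\varphi$ while extending $T_p$, contradicting bad-prefix-ness.

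Finally, I would invoke the PSPACE decision procedure for \ltl safety on $\psi'$. Since $|\psi'|=O(k^k\cdot|\psi|)$, this runs in space polynomial in $|\psi|$ and exponential in $k$, matching the claimed bound. I expect the main subtlety to be the backward direction of the equivalence: it relies crucially on including every self-map $\sigma$ (not only permutations) in the conjunction defining $\psi'$, so that extensions $\tilde T$ of $T_p$ having fewer than $k$ distinct traces are still recognized by $L_{\psi'}$ as satisfying $\varphi$, which is what makes them valid witnesses to the failure of $T_p$ to be a bad prefix.
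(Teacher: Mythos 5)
Your proposal is correct and takes essentially the same approach as the paper: both construct the conjunction $\bigwedge_{\sigma:\{1,\dots,k\}\to\{1,\dots,k\}}\psi(\pi_{\sigma(1)},\dots,\pi_{\sigma(k)})$ over the product alphabet $\Sigma^k$ (the paper's $\psi_k$) and reduce the question to the \PSpace check of whether this \ltl formula defines a safety property, yielding the same bound of space polynomial in $\abs{\psi}$ and exponential in $k$. Your explicit two-directional argument for the bad-prefix correspondence is in fact somewhat more detailed than the paper's, which mainly establishes the semantic equivalence $T\models\varphi\Leftrightarrow(t_1,\dots,t_k)\models\psi_k$ and asserts that safety transfers.
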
%
\begin{proof}
	First, note that for \hyperltl formulas in the $\forall$-fragment, the described hyperproperty is safe if and only if it is $k$-safe.
	By the definition of $k$-safety hyperproperties, $L(\varphi)$ is $k$-safe if and only if 
	$$
		\forall T.~ \left[T \not\models \varphi \Rightarrow  \left[\exists T' \le T.~ \abs{T'} \le k \wedge \forall \tilde{T'} \ge T'.~ \tilde{T'} \not \models \varphi\right]\right]
	$$
	Our first claim is: $\varphi$ is safe for all sets of size at most $k$ if and only if $\varphi$ is safe. 
	\begin{itemize}
	\item[$(\Leftarrow)$] Let $T$ be some set of traces that violates $\varphi$. 
	There exists a set $T' \le T$ with $\abs{T'}\le k$ and thus $\varphi$ must reject every set of traces extending $T'$ including those of size $k$.
	Hence, $\varphi$ is safe for traces of size at most $k$.
	
	\item[$(\Rightarrow)$] Let $\varphi$ be $k$-safe for all sets of traces of size at most $k$. Then, a set of traces violating $\varphi$ must have a bad-prefix $T'$ of size at most $k$.
	Further, every set of traces extending $T$ has this prefix $T'$ and thus $\varphi$ is safety.
	\end{itemize}	
	Further, we claim that $\varphi$ is safe for a set of traces $T$ of size $i$ if and only if the following formula is safe with respect to LTL semantics:
	\begin{align*}	
		\psi_i(\pi_1, \dots , \pi_i) := \bigwedge_{\varsigma: \{1,\dots, k\} \rightarrow \{1,\dots, i\}} \psi(\pi_{\varsigma(1)}, \dots , \pi_{\varsigma(k)})
	\end{align*}
	We rephrase the foregoing claim as follows: for all $m \le k$ and $T = \{t_1,\dots, t_m\}$: 
	$$
		 T \models \varphi\text{ if and only if }(t_1, \dots, t_m) \models \psi_m
	$$
	\begin{itemize}	
		\item[$(\Rightarrow)$]	
			Following the semantic of \hyperltl $T \models \varphi$ implies:
			For all $\Pi$ with $\textsf{Traces}(\Pi) \subseteq T.~ \Pi \models_\emptyset \psi$. 
			Thus, $(t_1, \dots, t_m) \models \psi_m(\pi_1, \dots, \pi_m)$. \\
	
		\item[$(\Leftarrow)$]	
			$(t_1, \dots, t_m) \models \psi_m$ implies that for all $\varsigma:\{1, \dots, k\}\rightarrow \{1, \dots, m\}$: $(t_{\varsigma(1)}, \dots, t_{\varsigma(k)})\models\psi(\pi_1, \dots , \pi_k)$.
			Hence, for all trace assignments $\Pi$ with $\textsf{Traces}(\Pi) \subseteq T.~ \Pi \models_\emptyset \psi$.
			And thus by definition $T \models \varphi$. 
	\end{itemize}
	Hence, it follows $\varphi$ is safe for trace sets of size up to $m$ if and only if $\psi_{m}$ is safe under the semantics of \ltl.
	Therefore, deciding whether $\psi_{k}$ is reduced to checking \ltl safety, which can be computed in space polynomial in $\abs{\psi_k}$~\cite{DBLP:journals/ipl/MareticDB14}.
	The result coincides with whether $\varphi$ is a $k$-safety hyperproperty.
\end{proof}

In the last two sections, we have shown that for \hyperltl formulas in the universal-safe fragment, we can both decide if a given formula belongs to the fragment and decide membership queries. For the safe hyperproperties in the $\forall^*$-fragment of \hyperltl, we can decide whether a formulas belongs to the fragment, but the decidability of the queries is open. In general, equivalence queries are undecidable. 


\IGNORE{

\section{Safety Hyperproperties in \hyperltl}
During the application of the learning framework to $k$-safety hyperproperties we restricted ourselves to hyperproperties that are expressed in the $\forall$-fragment of \hyperltl. 
It is therefore of interest to investigate the set of hypersafety properties expressible in the $\forall$-fragment compared to those expressible in the full logic \hyperltl. 
At first, we show that there are safety hyperproperties not expressible in the $\forall$-fragment, this is mainly based on the restriction of formulas in the $\forall$-fragment to only being able to respect $k$-traces at a time, where $k$ is the number of $\forall$ quantifiers used. 

\begin{lemma}
	There exist safety hyperproperties that are expressible in \hyperltl but not in the $\forall$-fragment of hyperproperties. 
\end{lemma}%
\begin{proof}
	\TODO{find a safety hyperproperty in \hyperltl that is not $k$-safe}

	\QED
\end{proof}%
\noindent
It remains an open question whether the set of $k$-safety hyperproperties expressible in \hyperltl coincides with the set of $k$-safety hyperproperties expressible in the $\forall$-fragment.

\IGNORE{
\section{Safety in \hyperltl is not Expressible in the Forall Fragment}

$$\forall \pi. \exists \pi'. a_{\pi} \until (\neg a_{\pi} \wedge \globally a_{\pi})$$

$$\forall \pi.  \forall \pi'' \forall \pi' \exists\pi''' \forall \pi''''. \ \globally\neg a_{\pi} \lor \bigg[ (a_{\pi''} \land \lnext \globally \neg a_{\pi''}) \release (\neg a_{\pi''}) \land \Big[\neg a_{\pi'} \lor \big( \finally a_{\pi'''} \land \globally (a_{\pi'''} \Rightarrow \neg \lnext a_{\pi''''} ) \big)\Big]\bigg]$$
idea:
\begin{itemize}
	\item on no trace a holds, or
	\item on every trace a holds at most one time and not the infinite structure of traces that `push' each other holds.
\end{itemize}

This property is not expressible in the pure forall fragment of \hyperltl, therefore, there exists hypersafety properties in \hyperltl outside of the forall fragment. 

Remark: This is not as hypersafety property in the classic notion, as the bad prefixes are of infinite size. 
}
}

\section{Related Work}

Most verification techniques for $k$-safety hyperproperties  are based on self-composition \cite{Barthe:2011:SIF:2139690.2139694,10.1007/978-3-642-35722-0_3}. Self-composition enables the use of standard techniques for information flow policy verification, such as program logics and model checking. 
Automata-theoretic approaches for the verification of information-flow policies include model checking algorithms, such as for \hyperltl \cite{finkbeiner_et_al:algorithms_for_mc_hyper}, for Mantel's Basic Security Predicates BSPs \cite{DSouza:2011:MTI:2590694.2590698}, and for epistemic logics \cite{DBLP:phd/basesearch/Balliu14}. There are also related algorithms for synthesis~\cite{hypersynthesis}, satisfiability~\cite{finkbeiner_et_al:deciding_hyper}, and monitoring~\cite{DBLP:conf/csfw/AgrawalB16,finkbeiner_et_al:monitoring_hyperproperties,10.1007/978-3-030-17465-1_7}. Typically, these approaches rely on automata constructions for trace properties. For example, in automata-based HyperLTL monitoring~\cite{finkbeiner_et_al:monitoring_hyperproperties}, an automaton is constructed for the
underlying LTL formula over an indexed set of atomic propositions. During monitoring, this automaton is then applied to multiple combinations of the observed traces by instantiating the indices in all necessary permutations.
None of these approaches define a canonical representation for hyperproperties. 

There is a rich body of work on learning from examples, ranging from learning automata \cite{angluin_learning_regular_sets,Angluin:1988:QCL:639961.639995,Farzan:2008:EAC:1792734.1792738} to approaches for specification mining on systems \cite{vanLamsweerde:1998:IDR:297569.297578,Fern:2004:LDC:3037008.3037033,Cobleigh:2006:UGC:1181775.1181801}. Further algorithms have been presented for specification mining of information-flow polices~\cite{Clapp:2015:MME:2771783.2771810,Livshits:2009:MSI:1543135.1542485}. Approaches for learning specifications for monitoring malicious behavior were presented in \cite{Christodorescu:2007:MSM:1287624.1287628}. Our learning approach provides, to the best of our knowledge, the first general framework for learning information-flow policies.

\section{Conclusion}

We have presented the first canonical representation  for $k$-safety-hyperproperties. We introduced automata for representing $k$-safety hyperproperties and gave algorithms for constructing permutation-complete automata for such hyperproperties.  We also presented the learning framework \lstarhyper that can be used to learn minimal permutation-complete automata for $k$-safety hyperproperties and gave an instantiation for  \hyperltl. 
The advantage of the algorithm is that it allows us to interactively learn monitors for information-flow polices and automatically construct efficient monitors from \hyperltl specification. It further allows for the simplification of manually specified $k$-safety-properties and for automatic equivalence checks between hyperproperties. 

As a natural next step, we plan to implement the learning algorithm for \hyperltl and investigate further classes of \hyperltl formulas beyond the universally-safe fragment. Moreover, we plan on investigating further possible canonical representations for $k$-safety hyperproperties. For example, instead of looking for automata that accept the representations of all bad prefixes up to the arity $k$, we can change the definition of tightness to mean representations of only minimal bad prefixes. One advantage of this definition is,  that in the case of \hyperltl, we can skip the  additional expensive check that the conjecture automaton is vertically tight. This however, comes with the trade off, that 
 the learner now  has to extract minimal bad prefixes out of the counterexamples,  which adds exponential costs. 


\balance
\bibliography{bibliography}
\bibliographystyle{IEEEtran}

%

\end{document}